\documentclass[10pt]{article}

\usepackage{amssymb} 
\usepackage{amsmath, amsthm}
\usepackage{amsfonts}
\usepackage[utf8]{inputenc}
\usepackage{graphicx} 
\usepackage{color}
\usepackage{verbatim}
\usepackage{mathrsfs}
\usepackage[dvipsnames]{xcolor}

\usepackage[top=1in, bottom=0.6in, left=1.2in, right=1.2in,twoside]{geometry}
\usepackage[font=small,labelfont=bf]{caption}

\usepackage{setspace}
\setstretch{1.2}

\usepackage{fancyhdr}
\pagestyle{fancy}
\cfoot{}
\fancyhead[RO,LE]{\thepage}
\fancyhead[RE]{M. Kaiser et al.}
\fancyhead[LO]{ A Variational Structure for Interacting Particle Systems\dots} 	

\setlength{\headsep}{15pt}

\usepackage{authblk}

\usepackage{hyperref}
\hypersetup{
  colorlinks=true,
  linkcolor=black,
  citecolor=black
}

\theoremstyle{plain}
\newtheorem{theorem}{Theorem}[section]
\newtheorem{proposition}[theorem]{Proposition}
\newtheorem{lemma}[theorem]{Lemma}
\newtheorem{corollary}[theorem]{Corollary}

\newenvironment{proofof}[1]{\noindent\emph{Proof of #1.\/}}{\hfill$\square$}
\newenvironment{sketchofproofof}[1]{\noindent\emph{Sketch of the proof of #1.\/}}{\hfill$\square$}

\begin{document}

\date{}
\title{\bf \Large A Variational Structure for
 Interacting Particle Systems and their Hydrodynamic Scaling Limits}

\author[1]{\normalsize Marcus Kaiser}
\author[2,3,4]{Robert L. Jack}
\author[1]{Johannes Zimmer}
\affil[1]{Department of Mathematical Sciences, University of Bath, Bath BA2 7AY, UK}
\affil[2]{Department of Applied Mathematics and Theoretical Physics, University of Cambridge, Wilberforce Road, Cambridge, Cambridge CB3 0WA, UK}
\affil[3]{Department of Chemistry, University of Cambridge, Lensfield Road, Cambridge CB2 1EW, UK}
\affil[4]{Department of Physics, University of Bath, Bath BA2 7AY, UK}
\maketitle

\thispagestyle{empty}

\begin{abstract}
  We consider hydrodynamic scaling limits for a class of reversible interacting particle systems, which includes the symmetric
  simple exclusion process and certain zero-range processes.  We study a (non-quadratic) microscopic action functional for these
  systems. We analyse the behaviour of this functional in the hydrodynamic limit and we establish conditions under which it
  converges to the (quadratic) action functional of Macroscopic Fluctuation Theory.  We discuss the implications of these results
  for rigorous analysis of hydrodynamic limits.
\end{abstract}

\section{Introduction}

Recently, a \emph{canonical structure} has been introduced~\cite{Maes2008a,Maes2008b} to describe dynamical fluctuations in
stochastic systems.  The resulting theory has several attractive features: Firstly, it applies to a wide range of systems,
including finite-state Markov chains and Macroscopic Fluctuation Theory (MFT)~\cite{Bertini2015a}, see~\cite{Kaiser2018a}.
Secondly, it is based on an \emph{action functional} which is a relative entropy between probability measures on path spaces ---
this means that it provides a variational description of the systems under consideration, and the action can be related to large
deviation rate functionals. Thirdly, it extends the classical Onsager-Machlup theory~\cite{Onsager1953a} in a natural way, by
replacing the quadratic functionals that appear in that theory with a pair of convex but \emph{non-quadratic} Legendre duals
$\Psi$ and $\Psi^\star$. (This is sometimes called a $\Psi$-$\Psi^\star$ representation~\cite{Mielke2014a}.)  In Onsager-Machlup
theory and in MFT, the minimiser of the action describes the most probable evolution of a macroscopic system, either in terms of
thermodynamic forces and fluxes (in Onsager-Machlup theory) or densities and fluxes (in MFT): this feature is maintained in the
canonical structure.

This structure can be applied to any finite-state Markov chain and provides a unifying formulation of a wide range of
systems~\cite{Kaiser2018a}.  In particular, lattice systems of interacting particles can be described by canonical structures in
two ways: either on the microscopic (Markov chain) level via non-quadratic Legendre duals, or as a coarse-grained version through
the hydrodynamic limit, where the action reduces to a quadratic MFT functional. One therefore expects that in the hydrodynamic
scaling limit, the microscopic (non-quadratic) structure should converge (in some suitable sense) to the macroscopic one.  Such a
convergence would offer a new way to understand and derive hydrodynamic limits. The main question of this article is whether this
natural conjecture holds.

We give a partial (positive) answer, by proving several theorems that relate the microscopic and macroscopic action functionals
for interacting particle systems. Specifically, we consider a class of systems on periodic lattices with \emph{gradient dynamics}
and a conserved number of particles, which includes as special cases the symmetric simple exclusion process and a large class of
reversible zero-range processes.  In the hydrodynamic limit, the number of lattice sites and the number of particles go to
infinity together, at fixed density, and the microscopic transition rates have a parabolic scaling.  (These are among the simplest
models for which one can rigorously establish a hydrodynamic limit~\cite{Kipnis1999a}.)

Our analysis is based on the microscopic action, which is a relative entropy between two probability measures: one measure encodes
the dynamics of the particle system itself (the \emph{reference process}) and the other represents some other \emph{observed
  process}, which is to be compared with the reference process.  We consider observed processes that concentrate (in the
hydrodynamic limit) on deterministic paths. By comparing different processes, we can extract information about the hydrodynamic
limit of the reference process (if this limit exists).  That is, the reference process and the observed process have different
hydrodynamic limits in general, and the macroscopic action functional measures the difference between them. It is minimised in the
case where the observed process and the reference process coincide, in which case the action is zero --- under suitable
assumptions, this means that the hydrodynamic limit of the reference process can be characterised as the minimiser of the
macroscopic action. Moreover, the macroscopic action can be represented as a sum of three terms --- we show that these individual
contributions are asymptotically dominated by corresponding contributions to the microscopic action, see
Theorem~\ref{thm:lower-bound-thm}.  Then, for a specific choice of the observed process (which is related to the hydrodynamic
limit of the reference process), we show that the microscopic action converges to the macroscopic one, see
Theorems~\ref{thm:Q_H_limit} and~\ref{thm:final_thm}.

The inspiration for this study comes from~\cite{Fathi2016a} and~\cite{Erbar2016a}, which derive hydrodynamic (or mean-field)
limits as minimisers of macroscopic action functionals, for the simple exclusion process~\cite{Fathi2016a} and for a McKean-Vlasov
equation on a finite graph~\cite{Erbar2016a}.  In common with these works, our approach is (loosely) based on the Sandier-Serfaty
approach~\cite{Serfaty2011a} to study sequences of gradient flows via $\varGamma$-convergence.  However, our approach is different
from~\cite{Fathi2016a,Erbar2016a} because it starts from the (non-quadratic) canonical structure, instead of the \emph{quadratic}
structure for time-reversal symmetric Markov chains, that was independently derived by Maas~\cite{Maas2011a} and
Mielke~\cite{Mielke2011a}. A similar structure to the canonical one exploited here was recently used in~\cite{Basile2017a} to
derive a diffusive limit for the linear Boltzmann equation. All of these approaches have in common that they consider
time-reversal symmetric systems for which the dynamics can be identified with gradient flows of a free energy functional, so that
the limiting probability measure concentrates on curves of maximal slope, which can be identified as minimisers of the macroscopic
action. Further, our approach is also closely related to EDP-convergence, where EDP stands for Energy-Dissipation-Principle, see
e.g.~\cite{Liero2017a,Bonaschi2016a,Duong2017a,Mielke2016b}.

Compared with previous studies, our work has two novel features. First, we do not restrict to curves of maximal slope (which
follow the gradient of the free energy): instead we consider a class of paths for which the microscopic action functional stays
controlled, in the hydrodynamic limit.  In principle, this means that our methods are not limited to time-reversal symmetric
systems: the corresponding action functional can be defined for a large class of Markov chains in a meaningful way.  However, in
order to reduce the number of technical issues we have to deal with, we limit ourselves to reversible systems in this work.  (More
precisely, we consider Markov chains with (in general) time-dependent rates, where the rates at every time obey detailed balance
with respect to an invariant measure that also (in general) depends on time.  This means that we can exploit readily-available
tools from the theory of hydrodynamic limits for these processes, notably the replacement lemma.)  An extension to systems without
detailed balance is left for future work.

The second novel aspect is that we consider particle systems for which the hydrodynamic limit is a \emph{non-linear} diffusion
equation, in contrast (for example) to the symmetric exclusion process studied in~\cite{Fathi2016a}, whose hydrodynamic limit is
linear diffusion. This is a significant difference for rigorous results: within the canonical structure one sees naturally that
the hydrodynamic limit is a (generalised) gradient flow, as expected on physical grounds.  However, in contrast to (linear)
diffusion with a linear mobility, where the (now-)classic Wasserstein evolution provides the natural geometrical setting for the
gradient flow, the analogous setting for diffusions with non-linear mobility is not so well-developed.  In particular, a key
challenge is to establish the validity of a chain rule for the macroscopic entropy functional, which is known for linear
diffusion~\cite{Ambrosio2008b}, but whose extension to the non-linear setting is not at all straightforward.  We show here that
(with some technical effort) the required results for non-linear diffusion can be obtained by casting the evolution into the
classic Wasserstein setting (Theorem~\ref{sec:chain_rule_for_F_V}): this is not the most natural (physical) setting for the
process of interest, but it is sufficient to establish the required results.

This line of research --- linking Markov chains and partial differential equations via canonical structures --- is quite
recent. Consequently, a number of problems remain open. In particular, our approach is not yet a hydrodynamic limit passage: for
this, the macroscopic concentration of the limiting path measure would have to be proved.  Also, the microscopic action converges
in the hydrodynamic limit to a macroscopic action functional that turns out to coincide with a large deviation rate
functional~\cite{Bertini2015a}.  However, in this work we do not establish any links to large deviation theory; this could be a
natural future line of research (e.g.~one could consider similar calculations to the ones in~\cite{Duong2017b} for independent
particles with Langevin dynamics). Another question is whether (and how) the method presented here can provide guidance for limit
passages for non-reversible systems.

Our study combines techniques from a number of different fields: we have attempted to make it self-contained (and hence accessible
to a general reader), at the expense of including some classical material (which expert readers may prefer to skip). This is
indicated in the beginning of the relevant sections.  In Section~\ref{sec:setting}, we describe the particle systems and their
canonical structure. Section~\ref{sec:main_results} states the main results.  Section~\ref{sec:regularity} is entirely devoted to
technical questions of regularity and a proof of the chain rule, while Section~\ref{sec:proof_bounds} contains the proofs of the
main theorems.

\section{Interacting Particle Systems}
\label{sec:setting}

\subsection{Particle Systems on the Discrete Torus}
\label{sec:ips}

The setting we analyse covers a broad class of particle models, as we now describe. This section also collects some classic facts
on particle models. We consider systems with a fixed number of indistinguishable particles, distributed over the $L^d$ sites of
the flat torus $\mathbb T_L^d := \mathbb Z^d/(L\mathbb Z^d)$. Let $\eta(i)$ be the number of particles on site
$i\in\mathbb T_L^d$, so the configuration space of the system is $\Omega_L\subseteq\mathbb N_0^{\mathbb T_L^d}$. Configurations
are denoted with $\eta=(\eta(i))_{i\in \mathbb T_L^d}$.  Let $\eta^{i,i'}$ be the configuration obtained from $\eta$ by moving a
particle from site $i$ to site $i'$.  The total number of particles on each site may be bounded by $N_{\max}\in\mathbb N_0$, that
is, $\Omega_L=\{0,\dots, N_{\max}\}^{\mathbb T_L^d}$, or unbounded. We fix $T>0$ and consider the time interval $[0,T]$. The
(random) state of the system at time $t\in[0,T]$ is denoted by $\eta_t$.

The particles hop between sites of the lattice with some rate $\hat r_{\eta,\eta^{i,i'}}$, which is assumed to be non-zero only if
$i$ and $i'$ are neighbours, $|i-i'|=1$.  We consider a parabolic scaling, so the hydrodynamic limit is obtained by rescaling time
by a factor $L^2$, such that the transition rates for the Markov chain are $r_{\eta,\eta^{i,i'}}= L^2 \hat r_{\eta,\eta^{i,i'}}$.
Let $\Lambda$ be the flat torus $\mathbb T^d=[0,1)^d$. The jump rates for the particle models considered in this article depend on
  an external potential $V\in C^2(\Lambda;\mathbb R)$, and two functions $g_1,g_2\colon\mathbb N_0\to[0,\infty)$, such that
\begin{equation}
  \label{eqn:rates_V}
  \hat r_{\eta,\eta^{i,i'}}^V=g_1(\eta(i))g_2(\eta(i'))\mathrm e^{-\frac 12(V(i'/L)-V(i/L))}.
\end{equation}
We also consider time-dependent potentials $\tilde V\in C^{1,2}([0,T]\times\Lambda;\mathbb R)$ which lead to a time-heterogeneous
Markov chain with transition rates $r^{\tilde V_t}$ at time $t\in[0,T]$.  We write $\tilde V$ for a time-dependent potential and
$V$ for a time-independent potential. The choice in~\eqref{eqn:rates_V} includes many particle processes, such as the zero-range
process and the simple exclusion process. This specific form was chosen to enable the use of existing results from the theory of
hydrodynamic limits, notably the replacement lemma employed below.

An interacting particle system has \emph{gradient dynamics} (or is of \emph{gradient type}) if there exists a function $\mathrm
d\colon \mathbb N_0\to [0,\infty)$ such that (for $V=0$) $r^0_{\eta,\eta^{i,i'}}-r^0_{\eta,\eta^{i',i}}=\mathrm d(\eta(i))-\mathrm
  d(\eta(i'))$.  In this case we define $\hat \phi_i(\mu):=\sum_{\eta\in\Omega_L} \mu(\eta)\;\!\mathrm d(\eta(i))$.  (Note that
  this is the simplest form of a gradient system, which in more generality can consist of differences of finite cylinder
  functions, cf.~\cite{Kipnis1999a}).

\subsubsection{Invariant Measures, Initial Conditions, and Microscopic Free Energy}
\label{sec:invariant_meas}

The number of particles is conserved by the dynamics, so these systems have many possible invariant measures.  The hydrodynamic
limit relies on a particular structure for these measures, as follows. Let $\nu_*$ be a (not necessarily normalised) reference
measure on $\Omega_L$, with $\nu_*(\eta)>0$ for all $\eta\in\Omega_L$, which is assumed to have a product structure in the sense
that $\nu_*(\eta)=\prod_{i\in\mathbb T_L^d}\nu_{*,1}(\eta(i))$ for some probability measure $\nu_{*,1}$ on $\mathbb N_0$. We
assume that the process with rates $\hat r^0$ satisfies the \emph{detailed balance condition}
\begin{equation}
  \label{eqn:db}
  \nu_*(\eta)\;\!\hat r_{\eta,\eta^{i,i+e_k}}^0 = \nu_*(\eta^{i,i+e_k})\;\!\hat r_{\eta^{i,i+e_k},\eta}^0
\end{equation}
for all $\eta\in\Omega_L$, $i\in\mathbb T_L^d$ and $k=1,\dots,d$.  This implies that $\nu_*$ is invariant for the dynamics $\hat
r^0$ and that these dynamics are time reversal-symmetric with respect to $\nu_*$.  To avoid technical difficulties, we further
assume that the one site \emph{partition function} is finite, i.e.~for all $\theta\in \mathbb R$
\begin{equation}
  \label{eqn:finite_moments}
Z_1(\theta):=\sum_{n\in\mathbb N_0}\mathrm e^{\theta n}\nu_{*,1}(n)<\infty.
\end{equation}

In classical statistical mechanics (see for example~\cite[Section 3]{Bertini2007a} or~\cite{Chandler1987a}), the \emph{local free
  energy density} is given by the Legendre dual of the \emph{cumulant generating function (or pressure)} of $\nu_{*,1}$, i.e.
\begin{equation}
  \label{eqn:free_energy_legendre}
  f(a) = \sup_{\theta\in\mathbb R}\;\!\bigl(a \;\!\theta - \log Z_1(\theta)\bigr)=a\;\!f'(a) - \log Z_1(f'(a)),
\end{equation}
which implies that $f$ is convex. In the following, we will assume that $f\in C^2([0,N_{\max}];\mathbb R)$ and that a.e.~$f''>0$,
see Section~\ref{sec:new-assump}). Now, for $\alpha\in (0, N_{\max})$, we define the probability measures
\begin{equation}
  \label{eqn:stationary_measure}
  \nu_{\alpha,1}(n):=\frac {\mathrm e^{f'(\alpha)n}}{Z_1(f'(\alpha))}\nu_{*,1}(n)
\end{equation}
and $\nu_\alpha := \prod_{i\in\mathbb T_L^d}\nu_{\alpha,1}$. For each $\alpha\in (0,N_{\max})$ this choice implies that
$E_{\nu_\alpha}\bigl[\sum_{i\in\mathbb T_L^d}\eta(i)/L^d\bigr]=\alpha$ (where $E_{\nu_\alpha}$ denotes the expectation with
respect to $\nu_\alpha$) and that $\nu_{\alpha}$ is stationary and satisfies~\eqref{eqn:db} for the process with rates $\hat
r^0$. For an external potential $V\in C^2(\Lambda;\mathbb R)$ the process with rates $\hat r^V$ satisfies detailed balance with
respect to the probability measures $\nu_\alpha^V(\eta)\propto \nu_\alpha(\eta)\mathrm e^{-\sum_{i\in\mathbb
    T_L^d}V(i/L)\eta(i)}$.  For the measure $\nu_\alpha^V$, the expected number of particles at $u\in\Lambda$ is defined as
\begin{equation}
  \label{eqn:local_no_of_particles}
  \bar\rho_{\alpha,V}(u):=\frac{E_{\nu_{\alpha,1}}\bigl[\eta(0)\mathrm e^{-V(u)\eta(0)}\bigr]}{E_{\nu_{\alpha,1}}
    \bigl[\mathrm e^{-V(u)\eta(0)}\bigr]}<\infty.
\end{equation}
Combining~\eqref{eqn:local_no_of_particles} with~\eqref{eqn:stationary_measure} allows to show that
$\bar\rho_{\alpha,V}(u) = (f')^{-1}(-V(u) + f'(\alpha))$, or equivalently
$f'(\bar\rho_{\alpha,V}(u)) = -V(u) + f'(\alpha)$. Consequently~\eqref{eqn:local_no_of_particles} is
strictly monotonically increasing in $\alpha$. Since the number of particles is conserved, its distribution is fully determined by
the initial condition for the model.  In everything that follows, we restrict to initial distributions $(\mu^L_0)_{L\in\mathbb N}$
for which the total density of particles is bounded uniformly: there exists $C_{\rm tot}\in(0, N_{\max}]$ such that for all
$L\in\mathbb N$
\begin{equation}
  \label{eqn:particle_bound}
  \mu^L_0\biggl(\eta\in\Omega_L\;\!\Bigl|\;\!\frac 1{L^d}\sum_{i\in\mathbb T_L^d}\eta(i)\le C_{\rm tot}\biggr)=1.
\end{equation}
This means that the Markov chain is supported on finitely many configurations, allowing us to treat each particle system as a
finite state Markov chain. Finally, for any $V\in C^2(\Lambda;\mathbb R)$ and any $\alpha$, define the relative entropy (or
\emph{microscopic free energy}) as
\begin{equation}
  \label{eqn:free_energy_alpha}
  \mathcal F_{L,\alpha}^V(\mu) :=\mathcal H\bigl(\mu|\nu_\alpha^V\bigr)=\sum_{\eta\in\Omega_L} \mu(\eta) \log 
  \Bigl(\frac{\mu(\eta)}{\nu_\alpha^V(\eta)}\Bigr),
\end{equation}
where $\mu$ is a probability measure (on $\Omega_L$).  If $\mu$ is the probability measure for our interacting particle system at
some time $t$ then $\mathcal F_{L,\alpha}^V(\mu)<\infty$, by~\eqref{eqn:particle_bound}, since $\nu_*(\eta)>0$ for all
$\eta\in\Omega_L$.

\subsubsection{Canonical Structure for Markov Chains}

We now describe a $\Psi$-$\Psi^\star$ structure for finite state Markov chains which is related to a relative entropy between path
measures~\cite{Kaiser2018a}.  This structure is central to this article (see also~\cite{Maes2008a,Maes2008b}). Let $\mu$ be a
probability measure on $\Omega_L$ supported on finitely many configurations.  We think of this measure as a (generic) distribution
of the particle system.  For $\eta,\eta'\in\Omega_L$ we define the \emph{probability current} from $\eta$ to $\eta'$ as
\begin{equation}
  \label{eqn:probab_current}
  J_{\eta,\eta'}(\mu) := \mu(\eta)r^V_{\eta,\eta'}-\mu(\eta')r^V_{\eta',\eta} .
\end{equation}
The divergence at $\eta$ is $\operatorname{div}J(\mu)(\eta):= \sum_{\eta'\in\Omega_L}
J_{\eta,\eta'}(\mu)$. Following~\cite{Kaiser2018a}, define a \emph{mobility}
\begin{equation}
a_{\eta,\eta'}(\mu):=2\bigl[\mu(\eta)r^V_{\eta,\eta'}\mu(\eta')r^V_{\eta',\eta}\bigr]^{1/2}
\end{equation}
which is independent of $V$ since $\hat r_{\eta,\eta'}^V\hat r_{\eta',\eta}^V=\hat r_{\eta,\eta'}^0\hat r_{\eta',\eta}^0$.  Let
the discrete gradient of a function $h$ on $\Omega_L$ be $\nabla^{\eta,\eta'}h:=h(\eta')-h(\eta)$ and define a \emph{thermodynamic
  force} (cf.~\cite{Maes2008a,Maes2008b,Kaiser2018a}) as
\begin{equation}
  \label{eqn:force}
  F^V_{\eta,\eta'}(\mu):=-\nabla^{\eta,\eta'}\log\Bigl(\frac{\mu}{\nu_\alpha^V}\Bigr),
\end{equation}
which is in fact independent of $\alpha$, as $\nu_\alpha(\eta)/\nu_\alpha(\eta^{i,i'})=\nu_*(\eta)/\nu_*(\eta^{i,i'})$. For a
general interpretation of the mobility and the force and their physical relation to thermodynamic quantities, such as entropy
production and housekeeping heat, we refer the reader to~\cite{Kaiser2018a}.

The canonical structure is based on a dual paring between currents and thermodynamic forces.  We consider generic currents $j$ and
forces $F$, which are arbitrary anti-symmetric functions on $\Omega_L\times\Omega_L$ with $j_{\eta,\eta'}=-j_{\eta',\eta}$ and
$F_{\eta,\eta'}=-F_{\eta',\eta}$. The dual pairing is $\langle j, F\rangle_L := \frac 12 \sum_{\eta,\eta'\in\Omega_L}
j_{\eta,\eta'}F_{\eta,\eta'}\mathbf 1_{\{a_{\eta,\eta'}(\mu) >0\}}$ (which implicitly depends on $\mu$). Here $\mathbf 1_A$ is the
indicator function of the event $A$, which is given by $\mathbf 1_A=1$ if the statement $A$ is satisfied and $\mathbf 1_A=0$
otherwise. Now define
\begin{equation}
  \label{eqn:basic_psi_star}
  \Psi^\star_L(\mu, F):= \sum_{\eta,\eta'\in\Omega_L} a_{\eta,\eta'}(\mu) \Bigl[\cosh\bigl(\tfrac 12 F_{\eta,\eta'}\bigr)-1\Bigr]
\end{equation}
and 
\begin{multline}
  \label{eqn:basic_psi}
  \Psi_L(\mu,j) := \sum_{\eta,\eta'\in\Omega_L} a_{\eta,\eta'}(\mu) 
  \biggl[\frac{j_{\eta,\eta'}}{a_{\eta,\eta'}(\mu)}\operatorname{arcsinh}\Bigl(\frac{j_{\eta,\eta'}}{a_{\eta,\eta'}(\mu)}
  \Bigr)\\
  -\cosh\biggl(\operatorname{arcsinh}\Bigl(\frac{j_{\eta,\eta'}}{a_{\eta,\eta'}(\mu)}\Bigr)\biggr)+1\biggr],
\end{multline}
where the summands in~\eqref{eqn:basic_psi} have to be interpreted as being equal to zero whenever $a_{\eta,\eta'}(\mu)=0$.  The
two functions~\eqref{eqn:basic_psi_star} and~\eqref{eqn:basic_psi} are both symmetric and strictly convex in their second
argument. Moreover, they are Legendre dual with respect to the dual pairing $\langle j,F\rangle_L$ and give rise to the
\emph{Onsager-Machlup functional},
\begin{equation}
  \label{eqn:om_functional}
  \Phi_L(\mu,j,F) := \Psi_L(\mu,j) - \langle j, F\rangle_L + \Psi^\star_L(\mu, F)\ge 0,
\end{equation}
where the inequality follows from the Fenchel-Young inequality (which directly follows from the Legendre duality of $\Psi$ and
$\Psi^\star$).  This functional will be used in the following to characterise the relative entropy between path measures. In
particular, we will study the convergence of the \emph{non-quadratic} functionals $\Psi$ and $\Psi^\star$ to their quadratic
counterparts to a macroscopic quadratic functional, which has the form of the macroscopic Onsager-Machlup functional.

\subsubsection{Projection onto the Physical Domain}

So far we considered currents and densities on the full configuration space $\Omega_L$. To obtain hydrodynamic behaviour, we
`project' the system onto the physical domain $\mathbb T_L^d$ and also embed the sequence of these domains (indexed by $L$) into
the flat torus $\Lambda$. This section introduces the associated notation.

For a (generic) probability measure $\mu$ on $\Omega_L$ (which we again think of as the current distribution of the particle
system), we can define the averaged number of particles $\hat\rho_i(\mu)$ at site $i\in \mathbb T_L^d$ and an averaged particle
current $\hat\jmath^V_{i,i'}(\mu)$, as
\begin{equation}
  \label{eqn:density_current}
  \hat\rho_i(\mu) := \sum_{\eta\in\Omega_L}\mu(\eta)\eta(i)\qquad\textrm{and}\qquad 
  \hat\jmath^V_{i,i'}(\mu):=\sum_{\eta\in\Omega_L}\mu(\eta)\bigl(\hat r^V_{\eta,\eta^{i,i'}}-\hat r^V_{\eta,\eta^{i',i}}\bigr).
\end{equation}
The current $\hat\jmath^V_{i,i'}(\mu)$ describes the expected net flow of particles from site $i$ to site $i'$ if the distribution
of the particle system is given by $\mu$. For gradient dynamics and $V=0$ the current~\eqref{eqn:density_current} is
\begin{equation}
  \label{eqn:gradient_j0}
  \hat\jmath_{i,i'}^{\;\!0}(\mu) = \hat \phi_i(\mu)-\hat \phi_{i'}(\mu) = -\nabla^{i,i'}\hat\phi(\mu),
\end{equation} 
where the discrete gradient on $\mathbb T_L^d$ is (for $h\colon\mathbb T_L^d\to\mathbb R$) defined as $\nabla^{i,i'}h =
h(i')-h(i)$. Similar to~\eqref{eqn:density_current}, define also two (averaged) mobilities for the edge connecting $i$ and $i'$ as
\begin{equation}
  \label{eqn:hat_chi}
  \hat a_{i,i'}(\mu):
  =\sum_{\eta\in\Omega_L}2\bigl[\mu(\eta)\hat r^V_{\eta,\eta^{i,i'}}\mu(\eta^{i,i'})\hat r^V_{\eta^{i,i'},\eta}\bigr]^{1/2}, \quad
  \hat\chi^V_{i,i'}(\mu):=\frac 12\sum_{\eta\in\Omega_L} \mu(\eta)\bigl(\hat r^V_{\eta,\eta^{i,i'}}+\hat r^V_{\eta,\eta^{i',i}}\bigr),
\end{equation}
which are related by $\hat a_{i,i'}(\mu)\le 2\hat\chi^V_{i,i'}(\mu)$ (with equality for $\mu=\nu_\alpha^V$). Note that the two
mobilities characterise the average particle jumps between $i$ and $i'$ and are therefore symmetric in $i$ and $i'$.

For the embedding on the flat torus, let $\mathcal M_+(\Lambda)$ be the set of finite and non-negative Radon measures on
$\Lambda$, endowed with the weak topology. Define the empirical measure $\Theta_L\colon \Omega_L\to \mathcal M_+(\Lambda)$ as
\begin{equation}
  \label{eqn:empirical_measure}
  \Theta_L(\eta) := \frac 1{L^d}\sum_{i\in\mathbb T_L^d} \eta(i) \;\! \delta_{i/L}.
\end{equation}
Thus, each configuration $\eta$ of an interacting particle system of size $L$ corresponds to a measure $\Theta_L(\eta) \in
\mathcal M_+(\Lambda)$.

\subsubsection{Reference Process and Observed Process}
\label{sec:Refer-proc-observ}

We analyse hydrodynamic limits by comparing different (microscopic) processes.  For any given $L$, the \emph{reference process} is
an interacting particle system on the discrete torus, as defined in Section~\ref{sec:ips}.  The \emph{observed process} is another
interacting particle system on the same space, whose path measure (see below) is absolutely continuous with respect to the
reference process.  Hydrodynamic limits are analysed by considering sequences of observed and reference processes, indexed by $L$.
With slight abuse of terminology, we sometimes refer to the sequence of observed processes as simply ``the observed process'', and
similarly for the reference process.

We consider observed processes with unique hydrodynamic limits.  This leads to a variational characterisation of the hydrodynamic
limit of the reference process, by minimising the relative entropy between the reference process and the observed process.  This
follows the usual approach in the calculus of variations: one considers observed processes with (known) hydrodynamic limits, which
are candidates for the hydrodynamic limit of the reference process.  The optimal candidate is the one that minimises the relative
entropy, and the hydrodynamic limit of this optimal candidate matches the hydrodynamic limit of the reference process (assuming
that it exists).

\subsection{Path Measures on the Microscopic Scale}
\label{sec:path_meas_super}

\subsubsection{Path Measures for the Reference and Observed Processes}
\label{sec:path_meas}

Our analysis of the hydrodynamic limit is based on the convergence of path measures. In this section, we introduce the notation
that allows us to define the path measures $Q_L$ and limit measures $Q^*$ studied in the remainder of the article.

For any topological space $\mathcal S$ we denote with $\mathcal D([0,T];\mathcal S)$ the set of $\mathcal S$ valued c\`adl\`ag
paths (right-continuous paths with left limits) on $[0,T]$.  For details, see~\cite[Chapter 3]{Billingsley1999a}, as well
as~\cite[Chapter 4.1]{Kipnis1999a} and~\cite{Bertini2009c}.  For $t\in[0,T]$ let
$X_t\colon\mathcal D([0,T];\mathcal S)\to \mathcal S$ be the marginal at time $t$, which evaluates a path
$\gamma = (\gamma_t)_{t\in[0,T]}\in\mathcal D([0,T];\mathcal S)$ at time $t$: $X_t(\gamma) = \gamma_t$. We recall that whilst
$X_t$ is measurable for all $t\in[0,T]$, it is continuous only for almost all $t\in(0,T)$, as well as $t=0$ and $t=T$.
In the following, the expression \emph{path measure} will refer to a probability distribution on $\mathcal D([0,T];\mathcal S)$
for some $\mathcal S$.

Given some $L$, the \emph{reference process} is a particle system with a time-dependent potential $\tilde V\in
C^{1,2}([0,T]\times\Lambda;\mathbb R)$, whose path measure [on $\mathcal D([0,T];\Omega_L)$] is denoted by $P_L^{\tilde V}$.  We
can recover the distribution of this Markov chain at time $t$ from $P_L^{\tilde V}$ via the push-forward measure $(X_t)_\#
P_L^{\tilde V}$.

The \emph{observed process} can be any (possibly time-heterogeneous) Markov chain on $\Omega_L$, whose path measure [on $\mathcal
  D([0,T];\Omega_L)$] is denoted by $P_L$.  This process is assumed to have the following properties: the path measure $P_L$ is
absolutely continuous with respect to $P_L^{\tilde V}$, the initial condition of $P_L$ coincides with the one of $P_L^{\tilde V}$,
that is, $(X_0)_\#P_L=(X_0)_\# P_L^{\tilde V}=\mu_0^L$, and the transition rates $r^L_t$ are bounded in time, i.e.~for each
$L\in\mathbb N$, we assume that $\sup_{t\in[0,T]}(r^L_t)_{\eta,\eta'}<\infty$ for all $\eta,\eta'\in\Omega_L$.

We can assign to $P_L$ a unique path $(\mu^L_t,\jmath^L_t)_{t\in[0,T]}$ consisting of the density $\mu^L_t:=(X_t)_\#P_L$ and the
current $(\jmath^L_t)_{\eta,\eta'}:=\mu^L_t(\eta)(r^L_t)_{\eta,\eta'}-\mu^L_t(\eta')(r^L_t)_{\eta',\eta}$, which are again linked
by a continuity equation $\partial_t\mu^L_t = -\operatorname{div}\jmath^L_t$. 

We remark that for the choice $P_L=P_L^{\tilde V}$ the current $\jmath^L_t$ simply coincides with the probability
current~\eqref{eqn:probab_current} for the time-dependent rate $ r^{\tilde V_t}$.  In this case, one can further show that the
associated density and current~\eqref{eqn:density_current} satisfy the continuity equation
$\partial_t \hat\rho_i(\mu_t^L) = - \operatorname{div} \hat\jmath^V(\mu_t^L)(i)$, where the divergence on the physical domain
$\mathbb T_L^d$ is defined as $\operatorname{div}\hat\jmath^V(\mu)(i) := \sum_{i'\in\mathbb T_L^d}\hat\jmath^V_{i,i'}(\mu)$.

Since every $\Omega_L$ can be embedded into the flat torus $\Lambda$ (as a map from $\Omega_L$ to $\mathcal M_L(\Lambda)$), there
is a corresponding embedding of the path space $\mathcal D([0,T];\Omega_L)$ into $\mathcal D([0,T];\mathcal M_L(\Lambda))$. In
particular, each path measure $Q_L$ on $\mathcal D([0,T];\mathcal M_+(\Lambda))$ that is supported on
$\mathcal M_L(\Lambda) :=\{ L^{-d}\sum_{i\in\mathbb T_L^d} k_i \delta_{i/L}\;\! |\;\! k_i \in \mathbb N_0, k_i\le N_{\max}\}$ can
be identified with a unique measure $P_L$ on $\mathcal D([0,T];\Omega_L)$. The measure on
$\mathcal D([0,T];\mathcal M_+(\Lambda))$ that corresponds to the reference process $P_L^{\tilde V}$ is denoted with
$Q_L^{\tilde V}$. Similarly, for the observed process, there is a $Q_L$ corresponding to $P_L$.  No information is lost on
embedding the processes into $\Lambda$, so $\mathcal H(Q_L|Q_L^{\tilde V})=\mathcal H(P_L|P_L^{\tilde V})$, which can be proved by
two applications of Lemma 9.4.5 in~\cite{Ambrosio2008b} with the bijection from $\mathcal M_L(\Lambda)$ to $\Omega_L$.

We summarise this notation, which will be used extensively below: the reference process and the observed processes can be fully
characterised by their path measures [both on $\mathcal D([0,T];\mathcal M_L(\Lambda))$], which are denoted by $Q_L^{\tilde V}$
and $Q_L$ respectively.  There are corresponding path measures on $\mathcal D([0,T];\Omega_L)$ which are denoted by $P_L^{\tilde
  V}$ and $P_L$.

\subsubsection{Microscopic Action Functional}
\label{sec:relent_force_xx}

To compare the reference and the observed process, consider the thermodynamic force for the reference process at time $t$, which
is $F^{\tilde V_t}(\mu^L_t)$, evaluated from~\eqref{eqn:force} with $\mu^L_t=(X_t)_\#P_L$. Since $P_L$ is absolutely continuous
with respect to $P_L^{\tilde V}$, the relative entropy $\mathcal H(P_L|P_L^{\tilde V})$ is under the assumptions in
Section~\ref{sec:path_meas} finite and (cf.~\cite[Appendix]{Kaiser2018a}) coincides with
\begin{equation}
  \label{eqn:rel_ent}
  \mathcal H\bigl(P_L|P_L^{\tilde V}\bigr) = \mathcal H\bigl(\mu^L_0|(X_0)_\# P_L^{\tilde V}\bigr) + \frac 12 \int_0^T 
  \Phi_L\bigl(\mu^L_t,\jmath^L_t,F^{\tilde V_t}(\mu^L_t)\bigr)\;\!\mathrm dt .
\end{equation}
Moreover, $\mathcal H(\mu^L_0|(X_0)_\# P_L^{\tilde V})=0$, since $P_L$ and $P_L^{\tilde V}$ share the same initial condition. We
interpret $\frac 12 \Phi_L(\mu^L_t,\jmath^L_t,F_{\alpha}^{\tilde V_t}(\mu^L_t))$ as an extended Lagrangian~\cite{Kaiser2018a} and
define the \emph{microscopic action} of the path measure $Q_L$ as the relative entropy
\begin{equation}
  \label{eqn:rel_ent_2a}
  \mathbb A_L^{\tilde V}\bigl(Q_L\bigr):= \mathcal H\bigl(Q_L|Q_L^{\tilde V}\bigr)=\mathcal H\bigl(P_L|P_L^{\tilde V}\bigr)
   = \frac 12 \int_0^T 
  \Phi_L\bigl(\mu^L_t,\jmath^L_t,F^{\tilde V_t}(\mu^L_t)\bigr)\;\!\mathrm dt.
\end{equation}
This is the central functional defined on the discrete (lattice) level studied in this article.

\subsection{Macroscopic Quantities}
\label{sec:macro_quant}

In the hydrodynamic scaling limit, the microscopic action~\eqref{eqn:rel_ent_2a} will converge to a macroscopic action, which
is~\eqref{eqn:action_functional}.  (For the macroscopic setting, we restrict our considerations to potentials $V$ that are
constant in time.)  We now show how the macroscopic action functional is constructed.

\subsubsection{The Macroscopic Free Energy}
\label{sec:free_energy_macro}

For $\alpha\in(0, N_{\max}]$ and $V\in C^2(\Lambda;\mathbb R)$, we define the \emph{macroscopic free energy} $\mathcal
  F_\alpha^V\colon\mathcal M_+(\Lambda)\to[0,\infty]$ as
\begin{equation}
  \label{eqn:2nd_free_energy}
  \mathcal F_\alpha^V(\pi):=
  \sup_{h\in C(\Lambda;\mathbb R)}\biggl[ \langle \pi, h\rangle - \int_\Lambda 
  \log\biggl(\frac{Z_1( f'(a) + h(u) - V(u) )}{Z_1( f'(a) - V(u))}\biggr)
  \;\!\mathrm du\biggr].
\end{equation}
This free energy coincides with a rate function: there is a large-deviation principle for the particle configuration $\Theta_L$
sampled from the the steady state $\nu_\alpha^V$; the speed of this LDP is $L^d$ and its rate function is $\mathcal
F_\alpha^V(\pi)$, (see e.g.~Section 5.1, page~75 in~\cite{Kipnis1999a} for the special case of a zero-range
process). From~\eqref{eqn:finite_moments}, $\mathcal F_\alpha^V(\pi)$ is finite only if $\pi(\mathrm du)=\rho(u)\mathrm du$ for
some density $\rho\in\mathcal L^1(\Lambda;[0,\infty))$. In the following we thus write $\mathcal F_\alpha^V(\rho)$ for $\mathcal
  F_\alpha^V(\pi)$. As in Macroscopic Fluctuation Theory~\cite[Section 5.A]{Bertini2015a}, we can represent $\mathcal F_\alpha^V$
  for reversible systems as
\begin{equation}
  \label{eqn:1st_free_energy}
  \mathcal F_\alpha^V(\rho)= \int_\Lambda 
  \Bigl[f(\rho(u))-f(\bar\rho_{\alpha,V}(u))-f'(\bar\rho_{\alpha,V}(u))\bigl(\rho(u)-\bar\rho_{\alpha,V}(u)\bigr)\Bigr]\;\!\mathrm du,
\end{equation}
where $\bar\rho_{\alpha,V}\in \mathcal L^1(\Lambda;[0,\infty))$, introduced in~\eqref{eqn:local_no_of_particles}, is the steady
  state density for the dynamics of the macroscopic system. Note that~\eqref{eqn:1st_free_energy} inherits the convexity of $f$.

\subsubsection{The Hydrodynamic Current and the Hydrodynamic Equation}
\label{sec:hyrdo_eqn}

In the hydrodynamic limit, the particle density at time $t$ is given by some $\rho_t\in\mathcal L^1(\Lambda;[0,\infty))$. The
  hydrodynamic current describes the resulting particle flow:
\begin{equation}
  \label{eqn:hydro_current}
  J(\rho):=-\nabla\phi(\rho)-\chi(\rho)\nabla V,
\end{equation}
where $\phi$ and $\chi$ are functions that depend on the system of interest and are discussed later in this section.  The
hydrodynamic equation is then
\begin{equation}
  \label{eqn:hydro_eqn_xxx}
  \dot\rho_t = - \nabla\cdot J(\rho_t) = \Delta \phi(\rho_t)+\nabla\cdot(\chi(\rho_t)\nabla V).
\end{equation}
In this article, we consider weak solutions to~\eqref{eqn:hydro_eqn_xxx}, in the sense that for all $G\in
C^{1,2}([0,T]\times\Lambda;\mathbb R)$
\begin{multline}
  \label{eqn:weak_sol_xxx}
  \qquad\qquad\int_\Lambda \rho_T G_T\;\!\mathrm du -\int_\Lambda \rho_0G_0\;\!\mathrm du
  - \int_0^T\int_\Lambda \rho_t\;\! \partial_t G_t\;\!\mathrm du \;\!\mathrm dt \\
  =\int_0^T\int_\Lambda \phi(\rho_t)\Delta G_t\;\!\mathrm du\;\!\mathrm dt
  - \int_0^T\int_\Lambda \chi(\rho_t)\nabla V\cdot \nabla G_t\;\!\mathrm du\;\!\mathrm dt. \qquad\qquad
\end{multline}

The dynamics on the macroscopic scale are characterised by the functions $\phi,\chi$ in~\eqref{eqn:hydro_eqn_xxx}. To relate these
quantities to the microscopic dynamics, we consider the case $V=0$, so that $E_{\nu_{\alpha,1}}[\eta(0)]=\alpha$. Define the
macroscopic mobility $\chi\colon [0, N_{\max}]\to[0,\infty)$ as
\begin{equation}
  \label{eqn:chi_indep}
  \chi(\alpha):= \hat\chi_{i,i+e_k}^0(\nu_\alpha)= \frac 12\hat a_{i,i+e_k}(\nu_\alpha),
\end{equation}
which is independent of $i$ and $e_k$ (and thus well-defined). To see this, note from~\eqref{eqn:db} and~\eqref{eqn:hat_chi} that
$\hat\chi_{i,i+e_k}^0(\nu_\alpha) = \sum_{\eta\in\Omega_L}\nu_\alpha(\eta)\hat r_{\eta,\eta^{i,i+e_k}}^0
=E_{\nu_{\alpha,1}}[g_1(\eta(0))]E_{\nu_{\alpha,1}}[g_2(\eta(0))]$,
where we used~\eqref{eqn:rates_V} and the product structure of $\nu_\alpha$.  Similarly, define
$\phi\colon[0, N_{\max}]\to[0,\infty)$ by $\phi(\alpha):=\hat \phi_i(\nu_\alpha)=E_{\nu_{\alpha,1}}[\mathrm d(\eta(0))]$, which is
by construction independent of $i$. One then can prove the \emph{local Einstein relation}
\begin{equation}
  \label{eqn:local_einstein_rel}
  \phi'(\alpha) = f''(\alpha)\chi(\alpha),
\end{equation}
which relates $\phi$ and $\chi$ to the free energy $f$ from
Section~\ref{sec:free_energy_macro}. Equation~\eqref{eqn:local_einstein_rel} can be obtained by differentiating $\phi(\alpha)
=E_{\nu_{*,1}}[\mathrm d(\eta(0))\mathrm e^{f'(\alpha)\eta(0)}]/E_{\nu_{*,1}}[\mathrm e^{f'(\alpha)\eta(0)}]$. Note that $
\phi'(\alpha) = \frac 12 f''(\alpha)\sum_\eta \nu_{\alpha}(\eta)\bigl[\mathrm d(\eta(i))-\mathrm
  d(\eta(i'))\bigr](\eta(i)-\eta(i')) $ (for $i,i'\in\mathbb T_L^d$ arbitrary with $i\not=i'$). Further, the gradient structure
and detailed balance yield $\frac 12 \sum_\eta \nu_{\alpha}(\eta)\bigl[\hat r^0_{\eta,\eta^{i,i'}}-\hat
  r^0_{\eta,\eta^{i',i}}\bigr](\eta(i)-\eta(i')) =\frac 12\sum_\eta \nu_{\alpha}(\eta)\bigl[\hat r^0_{\eta,\eta^{i,i'}}+\hat
  r^0_{\eta,\eta^{i',i}}\bigr]=\chi(\alpha)$.

\subsubsection{The Macroscopic Action Functional and the Chain Rule}
\label{sec:macro_action_functional}

For $\rho\in \mathcal L^1(\Lambda;[0,\infty))$ and $h\colon\Lambda\to\mathbb R^d$, we introduce the norm $\|h\|_{\chi(\rho)}^2:=
  \int_\Lambda \chi(\rho(u))|h(u)|^2\;\!\mathrm du$ (for full details and associated spaces, see Section~\ref{sec:regularity}
  below). The macroscopic analogues of the (time integrals of the) microscopic functions $\Psi_L$ and $\Psi^\star_L$
  from~\eqref{eqn:basic_psi_star},~\eqref{eqn:basic_psi} are
\begin{multline}
  \label{eqn:e_op_norm}
  \mathcal E\bigl((\rho_t)_{t\in[0,T]}\bigr):=
  \sup_G \biggl[\biggl(\int_\Lambda \rho_TG_T \;\!\mathrm du -\int_\Lambda \rho_0G_0\;\!\mathrm du \\
  - \int_0^T\int_\Lambda \rho_t\;\! \partial_t G_t \;\!\mathrm du \;\!\mathrm dt
  \biggr)-\frac 12\int_0^T \|\nabla G_t\|_{\chi(\rho_t)}^2\;\!\mathrm dt \biggr]
\end{multline}
and 
\begin{multline}
  \label{eqn:e_star_op_norm}
  \mathcal E^\star\bigl((\rho_t)_{t\in[0,T]}\bigr):=\sup_G \biggl[ 
  \biggl(\int_0^T\int_\Lambda \phi(\rho_t)\Delta G_t\;\!\mathrm du\;\!\mathrm dt
  - \int_0^T\int_\Lambda \chi(\rho_t)\nabla V\cdot \nabla G_t\;\!\mathrm du\;\!\mathrm dt\biggr)\\
  -\frac 12\int_0^T\|\nabla G_t\|_{\chi(\rho_t)}^2\;\!\mathrm dt\biggr],
\end{multline}
where the supremum is in both cases over $C^{1,2}([0,T]\!\times\! \Lambda;\mathbb R)$. We will show in
Propositions~\ref{prop:e_rep} and~\ref{prop:e_star_rep} that, under certain assumptions, these functionals can be expressed as
time integrals of suitably defined norms
\begin{equation*}
  \label{eqn:e_minus_one-sec-2}
  \mathcal E\bigl((\rho_t)_{t\in[0,T]}\bigr) = \frac 12\int_0^T \|\dot\rho_t\|_{-1,\chi(\rho_t)}^2 \;\!\mathrm dt
\end{equation*}
and
\begin{multline*}
  \label{eqn:e_star_minus_one-sec-2}
  \mathcal E^\star\bigl((\rho_t)_{t\in[0,T]}\bigr)
  = \frac12 \int_0^T \|\Delta\phi(\rho_t)+\nabla\cdot(\chi(\rho_t)\nabla V)\|_{-1,\chi(\rho_t)}^2 \;\!\mathrm dt\\
  = \frac12 \int_0^T \|f''(\rho_t)\nabla \rho_t+\nabla V\|_{\chi(\rho_t)}^2 \;\!\mathrm dt.
\end{multline*}
In particular, we will show that non-quadratic $\Psi$ and $\Psi^\star$ of~\eqref{eqn:basic_psi} and~\eqref{eqn:basic_psi_star} can
be bounded by the quadratic expressions $\mathcal E$ and $\mathcal E^\star$, respectively.

Finally, for $(\pi_t)_{t\in[0,T]}$ absolutely continuous with respect to the Lebesgue measure, we define the \emph{macroscopic
  action} as
\begin{equation}
  \label{eqn:action_functional}
  \mathbb A\bigl((\pi_t)_{t\in[0,T]}\bigr)
  :=\frac 12\bigl[\mathcal F_\alpha^V(\rho_T) - \mathcal F_\alpha^V(\rho_0)
  +\mathcal E\bigl((\rho_t)_{t\in[0,T]}\bigr)
  +\mathcal E^\star\bigl((\rho_t)_{t\in[0,T]}\bigr)\bigr].
\end{equation}
If $(\pi_t)_{t\in[0,T]}$ is not absolutely continuous with respect to the Lebesgue measure, we set $\mathbb
A\bigl((\pi_t)_{t\in[0,T]}\bigr)=+\infty$.

In a nutshell, the main results of this article are twofold: Firstly, we establish relations between suitably scaled $\mathbb
A_L^{\tilde V}$ of~\eqref{eqn:rel_ent_2a} and the continuum limit~\eqref{eqn:action_functional}: see
Theorems~\ref{thm:lower-bound-thm} to~\ref{thm:final_thm}.  Secondly, we show that under suitable regularity assumptions, in
particular if the free energy $\mathcal F_\alpha^V$ satisfies a chain rule (see Equation~\eqref{eqn:chain_rule_formal}), the
macroscopic action can be re-written in a way which reveals the hydrodynamic limit as minimiser of this functional,
see~\eqref{eqn:A_alternative} below.

\subsection{Assumptions on the Particle Systems Studied}
\label{sec:Assumpt-Syst-Stud}

\subsubsection{Local Equilibrium Assumption and the Replacement Lemma} 
\label{sec:local_eqm}

When taking the hydrodynamic limit, one must prove a local equilibration condition, which means that the system resembles --- in a
small neighbourhood around any point --- an equilibrium system. To make this precise, take $\ell\in\mathbb N$ and define the
average number of particles in a box with diameter $2\ell +1$ as
\begin{equation*}
  \eta^\ell(i) := \frac 1{(2\ell+1)^d}\sum_{|m|\le \ell}\eta(i\! +\! m).
\end{equation*}
Similarly, we also define the averages
$\hat\chi_{i,i+e_k}^\ell(\mu):=(2\ell +1)^{-d}\sum_{|m|\le \ell}\hat\chi_{i+m,i+m+e_k}(\mu)$ and
$\hat \phi_i^\ell(\mu):=(2\ell +1)^{-d}\sum_{|m|\le \ell}\hat \phi_{i+m}(\mu)$.

Now assume that $L\gg 1$ and $\epsilon\ll 1$ and that the state of the system is given by $\eta\in \Omega_L$. Define $\ell=\lfloor
\epsilon L\rfloor$, which is the size of a macroscopic box with diameter $\approx 2\epsilon$ (measured on the macroscopic scale).
Hence $\hat\chi_{i,i+e_k}^{\lfloor \epsilon L\rfloor}(\delta_\eta)$ is a locally averaged mobility.  \emph{Local equilibration}
means that $\hat\chi_{i,i+e_k}(\nu_{\eta^{\lfloor \epsilon L\rfloor}(i)})$ is close to the expected mobility for an equilibrium
distribution $\nu_\alpha$ with the same (locally-averaged) particle density.  That is, the time averaged distributions
$\mu^L_{[0,T]}:=\frac 1T\int_0^T \mu^L_t\;\!\mathrm dt$ satisfy in local equilibrium
\begin{equation}
  \label{eqn:local_equilibrium}
  \limsup_{\epsilon\to 0}\limsup_{L\to\infty}\frac 1{L^d}\sum_{i\in\mathbb T_L^d}\sum_{k=1}^d 
  \sum_{\eta\in\Omega_L}\mu^L_{[0,T]}(\eta)\;\!\Bigl|\hat\chi_{i,i+e_k}^{\lfloor \epsilon L\rfloor}(\delta_\eta) 
  - \hat\chi_{i,i+e_k}(\nu_{\eta^{\lfloor \epsilon L\rfloor}(i)})\Bigr|= 0,
\end{equation}
as well as
\begin{equation}
  \label{eqn:local_equilibrium_2}
  \limsup_{\epsilon\to 0}\limsup_{L\to\infty}\frac 1{L^d}\sum_{i\in\mathbb T_L^d}
  \sum_{\eta\in\Omega_L}\mu^L_{[0,T]}(\eta)\;\!\Bigl|\hat \phi_i^{\lfloor \epsilon L\rfloor}(\delta_\eta) 
  - \hat \phi_i(\nu_{\eta^{\lfloor \epsilon L\rfloor}(i)})\Bigr|= 0.
\end{equation}

\paragraph{Remark (Replacement Lemma).}
Note that results like~\eqref{eqn:local_equilibrium} and~\eqref{eqn:local_equilibrium_2} are classically obtained by proving the
stronger replacement lemma, which in our notation amounts to proving for $\hat\chi$ (and analogously for $\hat\phi$)
\begin{equation}
  \label{eqn:replacement_lemma}
  \limsup_{\epsilon\to 0}\limsup_{L\to\infty}\sup_{\mu}\frac 1{L^d}\sum_{i\in\mathbb T_L^d}\sum_{k=1}^d 
  \sum_{\eta\in\Omega_L}\mu(\eta)\;\!\Bigl|\hat\chi_{i,i+e_k}^{\lfloor \epsilon L\rfloor}(\delta_\eta) 
  - \hat\chi_{i,i+e_k}(\nu_{\eta^{\lfloor \epsilon L\rfloor}(i)})\Bigr|= 0,
\end{equation}
where the supremum is taken over a class of measures $\mu$ satisfying certain bounds on the relative entropy (i.e.~the free
energy) and the Dirichlet form, which can be identified with $\frac 12\Psi^\star(\mu,F^V(\mu))$ (see e.g.~the remark in the proof
of Proposition~\ref{prop:lower_bound_finite} below). In the following, we will follow the classical approach and work
with~\eqref{eqn:replacement_lemma}. We state sufficient conditions for the replacement lemma in Section~\ref{sec:suff_local_eqm}
below and establish in this way the validity of~\eqref{eqn:local_equilibrium} and~\eqref{eqn:local_equilibrium_2}.

\subsubsection{Assumptions on the Path Measures $P_L^{\tilde V}$}
\label{sec:new-assump}

We have presented a general framework for interacting particles on lattices and their hydrodynamic scaling limits.  The results of
the next section are similarly general and can be applied to a range of systems, including the symmetric simple exclusion process
and certain zero-range processes, as discussed in Section~\ref{sec:examples} below. However, our results for hydrodynamic limits
clearly do not apply to all interacting-particle systems. We summarise here the main assumptions on the reference process
$P_L^{\tilde V}$ required in the following analysis: these need to be verified in order to apply our results to a particular
system.

On the microscopic scale, we assume that the transition rates are given by~\eqref{eqn:rates_V} and are of gradient type.  The
initial conditions and invariant measures are as described in Section~\ref{sec:invariant_meas}.  We note that many of the proofs
given below make use of assumption~\eqref{eqn:particle_bound}. Despite the fact that it is a non-standard assumption for
hydrodynamic limits (unless $N_{\max}<\infty$, in which case~\eqref{eqn:particle_bound} holds trivially), it is not too
restrictive, in the sense that the typical initial conditions $(\mu^L_0)_{L\in\mathbb N}$ can be shown to satisfy (cf.~equation
(1.4) in Section~5.1 on page~71 in~\cite{Kipnis1999a}) $\lim_{A\to \infty} \limsup_{L\to\infty}\mu^L_0(\eta\in\Omega_L\;\!|\;\!
L^{-d}\sum_{i\in\mathbb T_L^d}\eta(i)\ge A)=0$.

When taking the hydrodynamic limit, we assume that for any sequence of measures $(\mu^L)_{L\in\mathbb N}$
satisfying~\eqref{eqn:particle_bound}, it holds that
\begin{equation}
  \label{eqn:C_hat} 
  C_{\hat\chi}:=\limsup_{L\to\infty}\frac 1{L^d}\sum_{i\in\mathbb T_L^d}\sum_{k=1}^d\hat\chi_{i,i+e_k}(\mu^L)<\infty,
\end{equation}
which ensures that the total rate of particle jumps for the reference process stays controlled as $L\to\infty$.  Similarly we
suppose that any sequence of measures $(\mu^L)_{L\in\mathbb N}$ obeying~\eqref{eqn:particle_bound} also satisfies
\begin{equation}
  \label{eqn:phi_hat} 
  C_{\hat\phi}:=\limsup_{L\to\infty}\frac 1{L^d}\sum_{i\in\mathbb T_L^d}\hat\phi_i(\mu^L)<\infty.
\end{equation}

In addition, our proofs require the following technical assumptions on the functions $f$, $\phi$ and $\chi$ that characterise the
hydrodynamic limit itself: We assume that $f\in C^2([0,N_{\max}];\mathbb R)$ with $f(0)=0$, $f''>0$ a.e.~and that
$\lim_{r\to 0}f'(r)=-\infty$ and $\lim_{r\to N_{\max}}f'(r)=\infty$. Note that this implies by \eqref{eqn:stationary_measure}
that $\phi(0)=0=\chi(0)$.  Further, we assume that $\phi,\chi>0$ on $(0, N_{\max})$ and that both $\phi$ and $\chi$ are Lipschitz
continuous on $[0, N_{\max}]$, without loss of generality with common Lipschitz constant $C_{\rm Lip}>0$. Since
$\phi(0)=\chi(0)=0$, we have in particular $0<\phi(a),\chi(a)\le C_{\rm Lip} a$ for $a\in (0, N_{\max}]$.  We further assume that
$\phi$ is continuously differentiable on $(0, N_{\max})$ (by the above Lipschitz condition with bounded derivative) and also
strictly monotonically increasing. This implies the existence of a continuous inverse
$\phi^{-1}\colon \phi([0, N_{\max}])\to [0, N_{\max}]$, where $\phi([0, N_{\max}])=\{\phi(a): a\in [0, N_{\max}]\}$. We also
suppose that $\phi^{-1}$ has a bounded derivative (which is by the inverse function theorem equivalent to saying that there exists
$C_*>0$ such that $\phi'(a)\ge C_*$ for all $a\in(0, N_{\max}]$).

\section{Statement of the Results} 
\label{sec:main_results}

In this section, we discuss the behaviour of the microscopic action in the limit $L\to\infty$, and the implications of this
behaviour for hydrodynamic limits.  Sections~\ref{sec:act-fcts} and~\ref{sec:suff_local_eqm} derive preliminary results, which
establish properties of the action functionals and sufficient conditions for local equilibration.  Section~\ref{sec:main_section}
states the main results, consisting of three theorems (Theorems~\ref{thm:lower-bound-thm}--\ref{thm:final_thm}). Finally
Section~\ref{sec:examples} discusses the applications of these theorems in two specific particle systems, and their implications
for hydrodynamic limits.

\subsection{Properties of the Microscopic and Macroscopic Action Functions}
\label{sec:act-fcts}

\subsubsection{Chain rule on Microscopic Scale}

Consider $(\mu^L_t,\jmath^L_t)_{t\in[0,T]}$ as in Section~\ref{sec:path_meas}. The force $F^V(\mu^L_t)$ can be linked to the free
energy~\eqref{eqn:free_energy_alpha} via the classical chain rule formula (cf.~Theorem 9.2 of Appendix 1 in~\cite{Kipnis1999a},
Proposition~2.2 in~\cite{Fathi2016a} and also~\cite{Kaiser2018a}) $\mathcal F_{L,\alpha}^V(\mu^L_{t_2}) - \mathcal
F_{L,\alpha}^V(\mu^L_{t_1}) = - \int_{t_1}^{t_2}\langle \jmath^L_t,F^V(\mu^L_t)\rangle_L\;\!\mathrm dt$, which is a special case
of the following result (proved in Section~\ref{sec:proof_xx_new} below).

\begin{proposition}[Chain rule for the microscopic free energy]
  \label{prop:micro_chain_rule}
  Let $\tilde V\in C^{1,2}([0,T]\times\Lambda;\mathbb R)$ and consider a path measure $P_L$ on $\Omega_L$, as described in
  Section~\ref{sec:path_meas}, with associated density and current $(\mu^L_t,\jmath^L_t)_{t\in[0,T]}$. Then the map
  $t\mapsto \mathcal F_{L,\alpha}^{\tilde V_t}(\mu^L_t)$ is absolutely continuous for $t\in[0,T]$ and satisfies the following
  chain rule. For all $0\le t_1<t_2\le T$
  \begin{equation}
    \label{eqn:free_energy_diff}
    \mathcal F_{L,\alpha}^{\tilde V_{t_2}}(\mu^L_{t_2}) - \mathcal F_{L,\alpha}^{\tilde V_{t_1}}(\mu^L_{t_1})
    = - \int_{t_1}^{t_2}\langle \jmath^L_t,F^{\tilde V_t}(\mu^L_t)\rangle_L\;\!\mathrm dt 
    + \int_{t_1}^{t_2}  \sum_{i\in\mathbb T_L^d} \bigl(\hat\rho_i(\mu^L_t)-\bar\rho_{\alpha,\tilde V_t}(i)\bigr)\;\!
    \partial_t\tilde V_t(\tfrac iL)\;\!\mathrm dt.
  \end{equation}
\end{proposition}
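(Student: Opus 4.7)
The plan is to differentiate $t\mapsto \mathcal F_{L,\alpha}^{\tilde V_t}(\mu^L_t)$ in time and identify the pointwise derivative with the integrand on the right-hand side of \eqref{eqn:free_energy_diff}. By \eqref{eqn:particle_bound} the measure $\mu^L_t$ is supported on a finite subset of $\Omega_L$ that is independent of $t$, so the defining sum for $\mathcal F_{L,\alpha}^V$ is finite and no issue of exchanging sum and derivative arises. Under the bounded-rates assumption on the observed process, $t\mapsto\mu^L_t(\eta)$ is Lipschitz (as the solution of $\partial_t\mu^L_t=-\operatorname{div}\jmath^L_t$ with bounded, measurable coefficients), and $\tilde V\in C^{1,2}$ together with $\nu_*(\eta)>0$ makes $t\mapsto\log\nu_\alpha^{\tilde V_t}(\eta)$ of class $C^1$.

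At any $t$ for which $\mu^L_t(\eta)>0$ on the whole support, I would split, using $\sum_\eta\partial_t\mu^L_t(\eta)=0$,
\begin{equation*}
\frac{\mathrm d}{\mathrm dt}\mathcal F_{L,\alpha}^{\tilde V_t}(\mu^L_t) = \sum_\eta \partial_t\mu^L_t(\eta)\log\frac{\mu^L_t(\eta)}{\nu_\alpha^{\tilde V_t}(\eta)} - \sum_\eta \mu^L_t(\eta)\,\partial_t\log\nu_\alpha^{\tilde V_t}(\eta).
\end{equation*}
In the first sum, inserting $\partial_t\mu^L_t(\eta)=-\sum_{\eta'}\jmath^L_{\eta,\eta'}$ and symmetrising via $\jmath^L_{\eta,\eta'}=-\jmath^L_{\eta',\eta}$ rewrites it as $\tfrac 12\sum_{\eta,\eta'}\jmath^L_{\eta,\eta'}\nabla^{\eta,\eta'}\log(\mu^L_t/\nu_\alpha^{\tilde V_t}) = -\langle\jmath^L_t,F^{\tilde V_t}(\mu^L_t)\rangle_L$, by the definition \eqref{eqn:force} of $F^{\tilde V_t}$ and of the pairing. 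For the second sum, the product form $\nu_\alpha^{\tilde V_t}(\eta) = (Z^{\tilde V_t})^{-1}\nu_\alpha(\eta)\exp(-\sum_i\tilde V_t(i/L)\eta(i))$ gives $\partial_t\log\nu_\alpha^{\tilde V_t}(\eta) = -\sum_i\partial_t\tilde V_t(i/L)\eta(i) - \partial_t\log Z^{\tilde V_t}$, and direct differentiation of $Z^{\tilde V_t}$ combined with \eqref{eqn:local_no_of_particles} yields $\partial_t\log Z^{\tilde V_t} = -\sum_i\partial_t\tilde V_t(i/L)\bar\rho_{\alpha,\tilde V_t}(i/L)$. Summing against $\mu^L_t$ and invoking \eqref{eqn:density_current} for $\hat\rho_i$ then identifies the second sum with the second integrand of \eqref{eqn:free_energy_diff}.

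The main (if largely technical) obstacle is the log singularity at times where $\mu^L_t(\eta)=0$. The Gronwall-type bound $\partial_t\mu^L_t(\eta)\ge -C\mu^L_t(\eta)$, coming from the non-negativity of the off-diagonal rates with $C:=\sup_{t,\eta}\sum_{\eta'}r^L_t(\eta,\eta')<\infty$, shows that the set $\{t\in[0,T]:\mu^L_t(\eta)=0\}$ is an initial interval $[0,s_\eta]$ (possibly empty or all of $[0,T]$), with $\mu^L_t(\eta)>0$ strictly for $t>s_\eta$. On $[0,s_\eta]$ the $\eta$-summand in $\mathcal F$ vanishes identically; on $(s_\eta,T]$ it is $C^1$ in $t$, and the candidate derivative is integrable up to $s_\eta^+$, as the upper bound $\mu^L_t(\eta)\le C(t-s_\eta)$ gives an integrable log singularity from above and a standard induction on graph-distance from the initial support produces a polynomial lower bound $\mu^L_t(\eta)\gtrsim (t-s_\eta)^{k_\eta}$ controlling it from below. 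Continuity of $x\mapsto x\log x$ at $0$ glues the pieces continuously at $s_\eta$; summing over the finitely many $\eta$ then gives absolute continuity of $t\mapsto\mathcal F_{L,\alpha}^{\tilde V_t}(\mu^L_t)$ on $[0,T]$, and the fundamental theorem of calculus yields \eqref{eqn:free_energy_diff}.
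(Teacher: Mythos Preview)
Your proof follows essentially the same route as the paper's: reduce to finitely many configurations via \eqref{eqn:particle_bound}, differentiate the relative entropy in time, and identify the two pieces of the derivative with the two integrands in \eqref{eqn:free_energy_diff}. The paper organises the computation slightly differently---it first splits $\log(\mu^L_t/\nu_\alpha^{\tilde V_t})$ into the time-independent part $\log(\mu^L_t/\nu_\alpha)$, the linear potential term $\sum_i\eta(i)\tilde V_t(i/L)$, and the normalisation $\log Z^{\tilde V_t}$, applies the fundamental theorem of calculus to each summand separately (deferring the delicate step at $\mu^L_t(\eta)=0$ to Theorem~9.2 of Appendix~1 in \cite{Kipnis1999a}), and then recombines via a summation by parts---whereas you differentiate in one go and argue the log singularity explicitly. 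Your initial-interval observation for the zero set of $t\mapsto\mu^L_t(\eta)$ is correct and makes the argument more self-contained than the paper's. One small caveat: the polynomial lower bound $\mu^L_t(\eta)\gtrsim (t-s_\eta)^{k_\eta}$ by graph-distance induction tacitly needs some positivity of the rates along a path into $\eta$ near $t=s_\eta$, which the standing assumption (only $\sup_t (r^L_t)_{\eta,\eta'}<\infty$) does not strictly guarantee; for the concrete models in Section~\ref{sec:examples} this is harmless, but in full generality you would want to say a word more here.
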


Now fix some $\alpha\in(0,N_{\max})$ and combine Proposition~\ref{prop:micro_chain_rule} with~\eqref{eqn:om_functional}
and~\eqref{eqn:rel_ent}, which yields
\begin{multline}
  \label{eqn:rel_ent_2b}
  \mathbb A_L^{\tilde V}\bigl(Q_L\bigr)
  =\frac 12 \bigl[\mathcal F_{L,\alpha}^{\tilde V_T}(\mu^L_T) - \mathcal F_{L,\alpha}^{\tilde V_0}(\mu^L_0)\bigr] 
  + \frac 12\int_0^T \Psi_L(\mu^L_t,\jmath^L_t)\;\!\mathrm dt
  + \frac 12\int_0^T \Psi^\star_L\bigl(\mu^L_t,F^{\tilde V_t}(\mu^L_t)\bigr)\;\!\mathrm dt
  \\-\frac 12\int_0^T \sum_{i\in\mathbb T_L^d} \bigl(\hat\rho_i(\mu^L_t)-\bar\rho_{\alpha,\tilde V_t}(i)\bigr)\;\!
  \partial_t\tilde V_t(\tfrac iL)\;\!\mathrm dt\ge 0.
\end{multline}

\subsubsection{Macroscopic Action}
\label{sec:macro_action}

We now establish some properties of $\mathbb{A}$, as defined in~\eqref{eqn:action_functional}. If $\mathbb
A\bigl((\pi_t)_{t\in[0,T]}\bigr)<\infty$ one can show that
\begin{multline}
  \label{eqn:A_finite}
  \mathbb A\bigl((\pi_t)_{t\in[0,T]}\bigr)
  =\frac 12\bigl[\mathcal F_\alpha^V(\rho_T) - \mathcal F_\alpha^V(\rho_0)\bigr] \\
  + \frac 14\int_0^T \bigl(\|\dot\rho_t\|_{-1,\chi(\rho_t)}^2 + \|\Delta\phi(\rho_t) 
  +\nabla\cdot(\chi(\rho_t)\nabla V)\|_{-1,\chi(\rho_t)}^2\bigr)\;\!\mathrm dt,
\end{multline}
see Proposition~\ref{prop:e_rep} and Proposition~\ref{prop:e_star_rep}. For a definition of the norm $\|\cdot
\|_{-1,\chi(\rho_t)}$ (and the associated inner product $\langle \cdot,\cdot\rangle _{-1,\chi(\rho_t)}$) we also refer to
Section~\ref{sec:regularity} below.

Note that $\mathbb A((\pi_t)_{t\in[0,T]})$ as defined here might in general be negative.  A sufficient condition for
non-negativity of $\mathbb A((\pi_t)_{t\in[0,T]})$ is ensured by the validity of the following chain rule, which can be seen as a
macroscopic counterpart to~\eqref{eqn:free_energy_diff} for potentials constant in time. A \emph{formal} calculation yields for
$0\le t_1< t_2\le T$ the chain rule
\begin{equation}
  \label{eqn:chain_rule_formal}
  \mathcal F_\alpha^V(\rho_{t_2}) - \mathcal F_\alpha^V(\rho_{t_1})
  =\int_{t_1}^{t_2} \bigl\langle \dot\rho_t, \frac{\delta \mathcal F_\alpha^V}{\delta \rho_t}\bigr\rangle \;\!\mathrm dt
  =-\int_{t_1}^{t_2} \langle \dot\rho_t, \Delta\phi(\rho_t)+\nabla\cdot(\chi(\rho_t)\nabla V)\rangle _{-1,\chi(\rho_t)}\;\!\mathrm dt.
\end{equation}
Combined with~\eqref{eqn:A_finite} this allows us to (formally!) rewrite the macroscopic action functional~\eqref{eqn:A_finite} as
\begin{equation}
  \label{eqn:A_alternative}
  \mathbb A\bigl((\pi_t)_{t\in[0,T]}\bigr)
  =\frac 14\int_0^T \bigl\|\dot\rho_t -\Delta\phi(\rho_t) -\nabla\cdot(\chi(\rho_t)\nabla V)\bigr\|_{-1,\chi(\rho_t)}^2\;\!\mathrm dt.
\end{equation}

In Section~\ref{sec:chain_rule_for_F_V} we summarise some geometrical properties of the relevant function spaces and we establish
sufficient conditions for the chain rule:

\begin{theorem}
  \label{thm:chain-rule}
  Let the assumptions from Section~\ref{sec:new-assump} hold and additionally assume that $\chi'(a)\geq C_*$ for all
  $a\in(0,N_{\rm max}]$ (for some $ C_*>0$). If $d>1$, then further assume that the free energy density $f$ satisfies the McCann
  condition for geodesic convexity (stated in Equation~\eqref{eqn:mccann} below).  Then any path $(\pi_t)_{t\in[0,T]}$ with
  $\mathbb{A}((\pi_t)_{t\in[0,T]})<\infty$ and $\mathcal F_\alpha^V(\rho_0)<\infty$ satisfies the identities in
  Equation~\eqref{eqn:chain_rule_formal}.
\end{theorem}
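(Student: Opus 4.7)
The plan is to reduce the statement to the classical Wasserstein chain rule of Ambrosio--Gigli--Savar\'e (AGS), exploiting a bi-Lipschitz equivalence between the non-linear mobility $\chi(\rho)$ and the linear mobility $\rho$ of the standard $W_2$-geometry. From $\chi'\ge C_*$ on $(0,N_{\max}]$ together with $\chi(0)=0$, integration gives the lower bound $\chi(a)\ge C_* a$, which combined with the Lipschitz upper bound from Section~\ref{sec:new-assump} yields $C_*\rho\le \chi(\rho)\le C_{\rm Lip}\rho$; consequently $\|\cdot\|_{-1,\chi(\rho)}$ is comparable to the Wasserstein tangent norm $\|\cdot\|_{-1,\rho}$. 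By Propositions~\ref{prop:e_rep} and~\ref{prop:e_star_rep}, $\mathbb{A}((\pi_t))<\infty$ delivers $\int_0^T\|\dot\rho_t\|_{-1,\chi(\rho_t)}^2\,\mathrm dt<\infty$ and $\int_0^T\|\Delta\phi(\rho_t)+\nabla\cdot(\chi(\rho_t)\nabla V)\|_{-1,\chi(\rho_t)}^2\,\mathrm dt<\infty$; the mobility comparison then implies that $(\rho_t)_{t\in[0,T]}$ is a $2$-absolutely continuous curve in $(\mathcal P(\Lambda),W_2)$, and that the candidate driving force $-\Delta\phi(\rho_t)-\nabla\cdot(\chi(\rho_t)\nabla V)$ can be represented in Wasserstein form as $-\nabla\cdot(\rho_t u_t)$ for a field $u_t\in L^2(\rho_t\,\mathrm du)$.

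Next, I would verify that $\mathcal F_\alpha^V$ is geodesically $\lambda$-convex on $(\mathcal P(\Lambda),W_2)$. The internal energy part $\int f(\rho)\,\mathrm du$ is geodesically convex precisely under the McCann condition~\eqref{eqn:mccann} (automatic in $d=1$ via monotone rearrangement, which accounts for the dimensional restriction in the hypothesis). Using the identity $f'(\bar\rho_{\alpha,V})=f'(\alpha)-V$ from Section~\ref{sec:invariant_meas}, the remaining affine correction in~\eqref{eqn:1st_free_energy} reduces, up to an additive constant, to a potential energy $\int \rho V\,\mathrm du$, which is $\lambda$-geodesically convex on $W_2$ with $\lambda$ depending on $\|D^2V\|_\infty$. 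Combining the two contributions gives the required $\lambda$-convexity of $\mathcal F_\alpha^V$.

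Having $(\rho_t)$ absolutely continuous in $W_2$, $\mathcal F_\alpha^V$ geodesically $\lambda$-convex, and $\mathcal F_\alpha^V(\rho_0)<\infty$, the AGS chain rule yields that $t\mapsto \mathcal F_\alpha^V(\rho_t)$ is absolutely continuous, with distributional derivative equal to the pairing between $\dot\rho_t$ and the minimal-norm element of the Wasserstein subdifferential of $\mathcal F_\alpha^V$ at $\rho_t$. The first variation is $\delta\mathcal F_\alpha^V/\delta\rho=f'(\rho)-f'(\bar\rho_{\alpha,V})=f'(\rho)+V-f'(\alpha)$, and the local Einstein relation~\eqref{eqn:local_einstein_rel} identifies the corresponding Wasserstein gradient as $-\nabla\cdot(\rho\,\nabla(f'(\rho)+V))=-\Delta\phi(\rho)-\nabla\cdot(\chi(\rho)\nabla V)$. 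Translating the pairing back into the $\langle\cdot,\cdot\rangle_{-1,\chi(\rho_t)}$ inner product, again using the bi-Lipschitz comparison of mobilities, recovers~\eqref{eqn:chain_rule_formal}.

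The main obstacle I anticipate is matching the abstract Wasserstein subdifferential to the distributional expression $\nabla(f'(\rho)+V)$ on the degenerate set $\{\rho=0\}$, where $f'(\rho)$ diverges; this is the step that forces the comparison constant $C_*>0$ and the strong formulation of McCann's condition. I would handle it by a two-step approximation: first mollify $\rho$ and $V$ so that the chain rule holds classically on smooth positive densities, and then pass to the limit using lower semicontinuity of the quadratic terms in~\eqref{eqn:A_finite} together with continuity of $\mathcal F_\alpha^V$ along $W_2$-geodesics (guaranteed by the $\lambda$-convexity). The integrability of $\|\Delta\phi(\rho_t)+\nabla\cdot(\chi(\rho_t)\nabla V)\|_{-1,\chi(\rho_t)}^2$ supplied by $\mathcal E^\star((\rho_t))<\infty$ is precisely what is needed to ensure that the limiting subdifferential coincides with the expected expression and that the right-hand side of~\eqref{eqn:chain_rule_formal} is well-defined.
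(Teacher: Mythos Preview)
Your overall strategy coincides with the paper's: exploit $C_*\rho\le\chi(\rho)\le C_{\rm Lip}\rho$ to embed the problem into the classical Wasserstein framework, establish $\lambda$-convexity of $\mathcal F_\alpha^V$ via McCann plus the $C^2$ potential, and invoke the AGS/Lisini chain rule. Propositions~\ref{prop:e_rep}, \ref{lem:continuity_lemma}, and~\ref{prop:e_star_rep} are used exactly as you describe.

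The substantive divergence is in how you identify the Wasserstein subdifferential with the expected expression and how you deal with the degeneracy at $\{\rho=0\}$. The paper does \emph{not} mollify. Instead it proves a characterisation lemma (Lemma~\ref{lem:slope_diff}) stating that $|\partial\mathcal F^V|(\rho)<\infty$ is equivalent to $L_f(\rho):=\rho f'(\rho)-f(\rho)\in W^{1,1}_{\rm loc}$ with $\nabla[L_f(\rho)]+\rho\nabla V=\rho w$ for some $w\in\mathcal L^2_{\nabla,\rho}$, and that under $\chi'\ge C_*$ this is further equivalent to $\phi(\rho)\in W^{1,1}_{\rm loc}$ with $\nabla[\phi(\rho)]+\chi(\rho)\nabla V=\chi(\rho)w$. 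The point is that $L_f$ and $\phi$ are regular at the boundary (both vanish at $0$), so the singularity of $f'$ never appears; moreover $\phi(\rho_t)\in W^{1,1}_{\rm loc}$ is exactly what Proposition~\ref{prop:e_star_rep} supplies from $\mathcal E^\star<\infty$. This gives $|\partial\mathcal F^V|(\rho_t)=\|w_t\|_{\rho_t}$ directly, and then $t\mapsto|\rho'_t|\,|\partial\mathcal F^V|(\rho_t)\in L^1$ follows from Cauchy--Schwarz and the norm equivalence, triggering the abstract chain rule (Lemma~3.4 in Lisini~\cite{Lisini2009a}).

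Your mollification proposal is a genuine detour and carries risk: passing to the limit in the identity~\eqref{eqn:chain_rule_formal} requires convergence of the pairing $\langle\dot\rho_t^\varepsilon,\cdot\rangle_{-1,\chi(\rho_t^\varepsilon)}$, which does not follow from lower semicontinuity alone (you would need two-sided control, not just a $\liminf$ inequality). The paper's route avoids this entirely by never leaving the abstract metric-space level until the subdifferential is already identified.
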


Note that the McCann condition is always satisfied in one spatial dimension (where it reduces to convexity of $f$).  We further
stress that in Macroscopic Fluctuation Theory the validity of the chain rule is implicitly assumed by Equation (2.15)
in~\cite{Bertini2015a}, which relates the large deviation rate for a forward path to its time-reversed counterpart.

\subsection{Sufficient Conditions for Local Equilibration}
\label{sec:suff_local_eqm}

The following theorem, proved in Section~\ref{sec:proof_xx_new} below, yields a sufficient condition for the local equilibration
discussed in Section~\ref{sec:local_eqm} in terms of the free energy~\eqref{eqn:free_energy_alpha} of the initial condition and
the action functional~\eqref{eqn:rel_ent_2a}.

\begin{theorem}
  \label{thm:repl_lemma_thm}
  Let $(P_L)_{L\in\mathbb N}$ be as in Section~\ref{sec:path_meas} with densities $(\mu^L_t)_{t\in[0,T]}$, for $L\in\mathbb N$,
  and associated path measures $(Q_L)_{L\in\mathbb N}$ on $\mathcal D([0,T];\mathcal M_+(\Lambda))$.  Assume there exist $V\in
  C^2(\Lambda;\mathbb R)$ and $\alpha\in[0,N_{\max})$ such that
  \begin{equation}
    \label{eqn:initial_bound_free_energy_xx}
    \limsup_{L\to\infty}\frac 1{L^d}\mathcal F_{L,\alpha}^{V}(\mu^L_0)<\infty
  \end{equation}
  and $\tilde V\in C^{1,2}([0,T]\times\Lambda;\mathbb R)$ such that
  \begin{equation}
    \label{eqn:suff_local_eqm_action}
    \limsup_{L\to\infty}\frac 1{L^d}\mathbb A_L^{\tilde V}\bigl(Q_L\bigr)<\infty.
  \end{equation} 
  Then $(\mu^L_{[0,T]})_{L\in\mathbb N}$ satisfies the local equilibrium assumption,~\eqref{eqn:local_equilibrium}
  and~\eqref{eqn:local_equilibrium_2}.  Moreover, Equations~\eqref{eqn:initial_bound_free_energy_xx}
  and~\eqref{eqn:suff_local_eqm_action} are independent of $V$, $\tilde V$ and $\alpha$, such that these conditions can
  equivalently be stated as $\limsup_{L\to\infty} L^{-d}\mathcal H(Q_L | Q_{\nu_\alpha})<\infty$, where $Q_{\nu_\alpha}$ denotes
  the measure on $\mathcal D([0,T];\Omega_L)$ with marginals equal to $\nu_\alpha$, in the sense that $(X_t)_\#Q_{\nu_\alpha} =
  (\Theta_L)_\#\nu_\alpha$ for all $t\in[0,T]$.
\end{theorem}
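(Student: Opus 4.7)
The plan is to reduce Theorem~\ref{thm:repl_lemma_thm} to an application of the classical replacement lemma of Kipnis and Landim~\cite[Chapter~5]{Kipnis1999a}, which yields~\eqref{eqn:local_equilibrium} and~\eqref{eqn:local_equilibrium_2} (indeed even the stronger~\eqref{eqn:replacement_lemma}) as soon as the initial relative entropy with respect to a stationary product measure and an integrated Dirichlet form are both bounded by $O(L^d)$. The unifying observation is that, modulo corrections of order $L^d$, all of the entropies appearing in the statement measure the same thing, so the theorem essentially reduces to a careful bookkeeping of change-of-reference terms.

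First I would establish the claimed equivalence: that~\eqref{eqn:initial_bound_free_energy_xx} and~\eqref{eqn:suff_local_eqm_action} together are equivalent to $\limsup_{L\to\infty} L^{-d}\mathcal H(Q_L|Q_{\nu_\alpha})<\infty$ regardless of the particular $V$, $\tilde V$ and $\alpha$ used. For the initial-time piece, the definition~\eqref{eqn:free_energy_alpha} gives $\mathcal F_{L,\alpha}^{V_1}(\mu)-\mathcal F_{L,\alpha}^{V_2}(\mu)=\sum_i (V_1-V_2)(i/L)\,E_\mu[\eta(i)] + \log(Z_{V_2}/Z_{V_1})$, both of which are controlled by $O(L^d)$ thanks to the uniform particle bound~\eqref{eqn:particle_bound} and the compactness of $\Lambda$. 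For the path measures, a Girsanov-type identity for time-inhomogeneous Markov jump processes (cf.~\cite[Appendix]{Kaiser2018a}) yields
\begin{equation*}
  \mathcal H(Q_L|Q_{\nu_\alpha}) = \mathcal H(Q_L|Q_L^{\tilde V}) + \mathcal F_{L,\alpha}^0(\mu^L_0) + R_L,
\end{equation*}
where $R_L$ collects the $\tilde V$-dependent boundary and drift corrections, essentially $\int_0^T E_{\mu^L_t}[\partial_t\tilde V_t\cdot \eta]\,\mathrm dt$ plus integrals of the jump-rate difference between $r^0$ and $r^{\tilde V_t}$. Each term in $R_L$ is $O(L^d)$ by smoothness of $\tilde V$, the particle bound~\eqref{eqn:particle_bound}, and the a~priori rate bound~\eqref{eqn:C_hat}, so dividing by $L^d$ yields the equivalence. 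This step is where the main obstacle lies: making the Girsanov identity fully rigorous for time-dependent rates and tracking the signs and orders of every correction term requires some care.

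With the equivalent reformulation in hand, the remaining steps are essentially standard. The data-processing inequality at $t=0$ gives $\limsup_L L^{-d}\mathcal H(\mu^L_0|\nu_\alpha) < \infty$, which is the classical initial-entropy hypothesis. The chain-rule decomposition~\eqref{eqn:rel_ent_2b}, together with $\Psi_L\ge 0$ and $\mathcal F_{L,\alpha}^{\tilde V_T}(\mu^L_T)\ge 0$, then produces
\begin{equation*}
  \tfrac12\int_0^T \Psi^\star_L\bigl(\mu^L_t,F^{\tilde V_t}(\mu^L_t)\bigr)\,\mathrm dt
  \le \mathbb A_L^{\tilde V}(Q_L) + \tfrac12\mathcal F_{L,\alpha}^{\tilde V_0}(\mu^L_0) + C\,L^d,
\end{equation*}
where the constant absorbs the $\partial_t\tilde V$ remainder via~\eqref{eqn:particle_bound}. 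As noted in the remark following~\eqref{eqn:replacement_lemma}, the left-hand side is precisely the integrated Dirichlet form of $\mu^L_t$ with respect to the rates $r^{\tilde V_t}$, so the right-hand side being $O(L^d)$ is exactly the standard input required. Feeding both bounds into the one-block and two-blocks estimates of~\cite[Chapter~5]{Kipnis1999a}, applied to the time-averaged measure $\mu^L_{[0,T]}$, delivers~\eqref{eqn:replacement_lemma} for $\hat\chi$ and its analogue for $\hat\phi$; these imply the weaker statements~\eqref{eqn:local_equilibrium} and~\eqref{eqn:local_equilibrium_2}.
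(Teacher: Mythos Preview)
Your overall strategy---reduce to the classical replacement lemma by producing an $O(L^d)$ bound on both the relative entropy and the integrated Dirichlet form of the time-averaged measure $\mu^L_{[0,T]}$---is exactly the paper's approach, and your bound on $\int_0^T\Psi^\star_L$ via~\eqref{eqn:rel_ent_2b} (drop $\Psi_L\ge 0$ and $\mathcal F^{\tilde V_T}_{L,\alpha}(\mu^L_T)\ge 0$) matches the paper's argument.

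There is, however, a genuine gap in your handling of the entropy input. The replacement lemma of~\cite[Chapter~5]{Kipnis1999a} needs $\mathcal H(\mu^L_{[0,T]}|\nu_\alpha)=O(L^d)$, not merely $\mathcal H(\mu^L_0|\nu_\alpha)=O(L^d)$. In the classical Kipnis--Landim setting the observed process \emph{is} the reference process, so $t\mapsto\mathcal H(\mu^L_t|\nu_\alpha)$ is non-increasing and the initial bound propagates. Here the observed process $P_L$ carries arbitrary rates $r^L_t$, so this monotonicity fails, and your ``classical initial-entropy hypothesis'' does not by itself control $\mathcal H(\mu^L_{[0,T]}|\nu_\alpha)$. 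The paper closes this gap differently: it applies the chain-rule decomposition on the interval $[t,T]$ and uses $\int_t^T\Phi_L\,\mathrm ds\ge 0$ to obtain, for every $t$,
\[
  \mathcal F_{L,\alpha}^{\tilde V_t}(\mu^L_t)\;\le\;\mathbb A_L^{\tilde V}(Q_L)+\mathcal F_{L,\alpha}^{\tilde V_0}(\mu^L_0)+O(L^d),
\]
then integrates in $t$ and passes to $\mu^L_{[0,T]}$ by convexity. Your own framework actually contains a second fix: once you have $\limsup_L L^{-d}\mathcal H(Q_L|Q_{\nu_\alpha})<\infty$, apply data-processing at \emph{every} $t$ (the marginal of $Q_{\nu_\alpha}$ at time $t$ is $\nu_\alpha$) to get $\mathcal H(\mu^L_t|\nu_\alpha)\le\mathcal H(Q_L|Q_{\nu_\alpha})$ uniformly in $t$, and then average. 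As written, you apply data-processing only at $t=0$, which is not enough.

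A secondary remark: for the equivalence claim, your Girsanov route requires bounding $E_{Q_L}[\log(\mathrm dQ_L^{\tilde V}/\mathrm dQ_{\nu_\alpha})]$ by $O(L^d)$, which involves the jump-sum part of the density under the \emph{observed} law $Q_L$, not under $Q_L^{\tilde V}$; this is doable but not as immediate as you suggest. The paper avoids this by proving the independence of $\tilde V,V,\alpha$ directly on the two decomposed quantities (free energy and $\Psi^\star$) via the elementary estimate $\cosh(x+y)\le\cosh(x)\,\mathrm e^{|y|}$, which is shorter.
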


\subsection{Particle Systems on Hydrodynamic Scale}
\label{sec:main_section}

We now present our main results. We consider sequences of path measures $(Q_L^V)_{L\in\mathbb N}$ and $(Q_L)_{L\in\mathbb N}$ on
$\mathcal D([0,T];\mathcal M_+(\Lambda))$, as defined in Section~\ref{sec:path_meas}, as well as the corresponding sequences
$(P_L^V)_{L\in\mathbb N}$ and $(P_L)_{L\in\mathbb N}$. We define $Q^*$ as a (possibly non-unique) limit point of the sequence of
observed processes $(Q_L)_{L\in\mathbb N}$ and we establish various properties of this limit. The physical idea is that the path
on which $Q^*$ is supported is a \emph{candidate} for the hydrodynamic limit for the reference process $(Q_L^V)_{L\in\mathbb N}$.
By analysing the large-$L$ behaviour of the microscopic action $\mathbb{A}_L^V(Q_L)$, the aim is to show that the only admissible
candidate path is the true hydrodynamic limit.  For specific examples, see Section~\ref{sec:examples}, below.

\subsubsection{Assumptions for Scaling Limits}
\label{sec:assumptions_scaling_limit}

To apply the results of this section to a specific interacting particle system (reference process), several assumptions have to be
satisfied.  We assume that the conditions given in Section~\ref{sec:new-assump} have been verified.  We assume also that the
initial distributions $(\mu^L_0)_{L\in\mathbb N}$ of $(P_L^V)_{L\in\mathbb N}$ converge to a fixed density $\rho_0\in\mathcal
L^1(\Lambda;[0,\infty))$ in the sense that $(\Theta_L)_\#\mu^L_0\to \delta_{\pi_0}$ with $\pi_0(\mathrm du)=\rho_0(u)\mathrm
  du$. For the rest of this Section~\ref{sec:main_section}, we fix $\alpha$ uniquely by requiring that $\int_\Lambda
  \rho_0(u)\;\!\mathrm du = \int_\Lambda \bar\rho_{\alpha,V}(u)\;\!\mathrm du$.

Further, we assume that the observed processes $(Q_L)_{L\in\mathbb N}$ are relatively
compact~\cite{Billingsley1999a,Kipnis1999a}. Then there is a measure $Q^*$ on $\mathcal D([0,T];\mathcal M_+(\Lambda))$ and a
subsequence of $(Q_L)_{L\in\mathbb N}$ converging to $Q^*$ (such that the marginal at time $t=0$ satisfies
$(X_0)_\#Q^\ast=\delta_{\pi_0}$). Finally, we assume that the measure $Q^*$ is concentrated on paths that are absolutely
continuous with respect to the Lebesgue measure,
\begin{equation}
  \label{eqn:concentration_of_measure}
  Q^*\Bigl((\pi_t)_{t\in[0,T]}\in\mathcal D([0,T];\mathcal M_+(\Lambda)) 
  : \pi_t(\mathrm du) = \rho_t(u)\;\! \mathrm du\textrm{ for a.a.}~t\in[0,T]\Bigr)=1.
\end{equation}
We note that the paths in~\eqref{eqn:concentration_of_measure} satisfy $\rho_t \in \mathcal L^1(\Lambda;[0,\infty))$.  Moreover,
if $N_{\max}<\infty$, then clearly also $\rho_t\le N_{\max}$ a.e.~on $\Lambda$ for almost all $t\in[0,T]$.  However, the limit
$Q^*$ is not assumed to be unique: there could exist other subsequences of $(Q_L)_{L\in\mathbb N}$ with different limits.

Given a specific model, the compactness of the sequence $(Q_L)_{L\in\mathbb N}$ and the support on absolutely continuous
paths~\eqref{eqn:concentration_of_measure} often follow from~\eqref{eqn:initial_bound_free_energy_xx} in combination with an
assumptions on the transition rates of the particle system. This is the case for the examples considered in
Section~\ref{sec:examples} below.

\subsubsection{Comparison with classical proofs of the Hydrodynamic Limit}
\label{sec:hydro_discussion}

To provide context for our analysis, we briefly summarise the classical approach to hydrodynamic limits.  Here, we consider
separately the observed process and the reference process, but the classical approach takes $(P_L)_{L\in\mathbb
  N}=(P_L^V)_{L\in\mathbb N}$.  The task of proving a hydrodynamic limit for $(Q_L)_{L\in\mathbb N}$ then consists of
characterising all limiting distributions. The first step is to establish relative
compactness~\cite{Billingsley1999a,Kipnis1999a}, which ensures the existence of a (possibly non-unique) limit $Q^*$.  One then
shows that $Q^*$ is unique and that it is concentrated on a single path $(\rho_t)_{t\in[0,T]}$
(i.e.~$Q^*=\delta_{(\pi_t)_{t\in[0,T]}}$ and $\pi_t(\mathrm du)=\rho_t(u)\mathrm du$ for almost all $t\in[0,T]$). This general
approach includes both the entropy method and the relative entropy method~\cite{Kipnis1999a}: note that it \emph{first}
establishes that $Q^*$ is supported on weak solutions to~\eqref{eqn:weak_sol_xxx} and \emph{then} uses a uniqueness result for
this solution to infer that $Q^*$ is supported on this unique solution, see e.g.~\cite[Chapter~4]{Kipnis1999a}.

Our approach here differs in two main points: We consider an observed process that is different from the reference process
($P_L\neq P_L^V$ in general) and we assume that the sequence $(Q_L)_L$ has a unique limiting distribution $Q^*$ that is
concentrated on a single path, as in~\eqref{eqn:concentration_of_measure}.  (As a special case, one may take $P_L=P_L^V$, under
the assumption that the hydrodynamic limit exists, but the following results are not restricted to this case.)  These assumptions
mean that the results in this work do not prove the existence of a hydrodynamic limit, neither for the observed process nor the
reference process.  Rather, they assume the existence of such a limit, and they establish properties of the associated path
$(\pi_t)_{t\in[0,T]}$ and its macroscopic action $\mathbb A\bigl((\pi_t)_{t\in[0,T]}\bigr)$.

\subsubsection{Convergence of Free Energy and Action for Deterministic Limits}

The following first main theorem yields regularity results for $(P_L)_{L\in\mathbb N}$ under the assumptions of
Section~\ref{sec:assumptions_scaling_limit} and those of Theorem~\ref{thm:repl_lemma_thm}. In particular, it shows that the
macroscopic action (and its individual contributions) are asymptotically dominated by their (more detailed) microscopic
counterparts.

\begin{theorem}[Regularity of the limit and asymptotic lower bounds]
  \label{thm:lower-bound-thm}
  Let $(P_L)_{L\in\mathbb N}$ be a sequence as in Section~\ref{sec:assumptions_scaling_limit}, with density and current
  $(\mu^L_t,\jmath^L_t)_{t \in[0,T]}$, for $L\in\mathbb N$. We suppose that the associated sequence $(Q_L)_{L\in\mathbb N}$ has a
  unique limit point $Q^*=\delta_{(\pi_t)_{t\in[0,T]}}$ for some $(\pi_t)_{t\in[0,T]}\in \mathcal D([0,T];\mathcal M_+(\Lambda))$
  and that the initial condition is well prepared in the sense that the free energies converge
  (cf.~\cite{Serfaty2011a,Fathi2016a,Mielke2016b})
  \begin{equation}
    \label{eqn:liminf_1}
    \lim_{L\to\infty} \frac 1{L^d}\mathcal F_{L,\alpha}^V\bigl(\mu^L_0\bigr) = \mathcal F_\alpha^V(\rho_0).
  \end{equation} 
  Further assume that $(Q_L)_{L\in\mathbb N}$ satisfies~\eqref{eqn:suff_local_eqm_action} for $\tilde V_t=V$, such that
  \begin{equation}
    \label{eqn:liminf_1_1}
    \limsup_{L\to\infty}\frac 1{L^d}\mathbb A_L^V\bigl(Q_L\bigr)<\infty.
  \end{equation}
  Then $(\pi_t)_{t\in[0,T]}$ is narrowly continuous, i.e.~$(\pi_t)_{t\in[0,T]}\in C([0,T];\mathcal M_+(\Lambda))$ and the action
  satisfies the lower bound
  \begin{equation}
    \label{eqn:conv_A}
    \liminf_{L\to\infty}\frac 1{L^d}\mathbb A_L^V\bigl(Q_L\bigr)\ge
    \mathbb A ((\pi_t)_{t\in[0,T]}).
  \end{equation}
  Further, the free energy satisfies for all $t\in[0,T]$
  \begin{equation}
    \label{eqn:T_limit_free_energy_liminf}
    \liminf_{L\to\infty} \frac 1{L^d}\mathcal F_{L,\alpha}^V\bigl(\mu^L_t\bigr)\ge\mathcal F_\alpha^V(\rho_t),
  \end{equation}
  as well as
  \begin{equation}
    \label{eqn:lower_bound_psi_thm}
    \liminf_{L\to\infty} \frac 1{L^d}\int_0^T \Psi_L\bigl(\mu^L_t,\jmath^L_t\bigr) \;\!\mathrm dt 
    \ge \frac 12\int_0^T \|\dot\rho_t\|_{-1,\chi(\rho_t)}^2\;\!\mathrm dt
  \end{equation}
  and
  \begin{equation}
    \label{eqn:lower_bound_psi_star_thm}
    \liminf_{L\to\infty} \frac 1{L^d}\int_0^T \Psi_L^\star\bigl(\mu^L_t,F^V(\mu^L_t)\bigr) \;\!\mathrm dt 
    \ge \frac 12\int_0^T \|\Delta\phi(\rho_t)+\nabla\cdot(\chi(\rho_t)\nabla V)\|_{-1,\chi(\rho_t)}^2\;\!\mathrm dt.
  \end{equation}
\end{theorem}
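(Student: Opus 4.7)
The plan is to leverage the decomposition
\begin{equation*}
\tfrac{1}{L^{d}}\mathbb A_L^V(Q_L) = \tfrac{1}{2L^{d}}\bigl[\mathcal F_{L,\alpha}^V(\mu^L_T)-\mathcal F_{L,\alpha}^V(\mu^L_0)\bigr] + \tfrac{1}{2L^{d}}\int_0^T\!\Psi_L(\mu^L_t,\jmath^L_t)\,\mathrm dt + \tfrac{1}{2L^{d}}\int_0^T\!\Psi_L^\star(\mu^L_t,F^V(\mu^L_t))\,\mathrm dt,
\end{equation*}
which follows from \eqref{eqn:rel_ent_2b} applied with $\tilde V_t\equiv V$ (so that the time-derivative correction term vanishes). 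Granted the four lower bounds \eqref{eqn:T_limit_free_energy_liminf}, \eqref{eqn:lower_bound_psi_thm}, \eqref{eqn:lower_bound_psi_star_thm} and the well-preparedness \eqref{eqn:liminf_1}, the global bound \eqref{eqn:conv_A} on $\mathbb A_L^V(Q_L)$ is an immediate consequence by matching the representation \eqref{eqn:A_finite} of $\mathbb A((\pi_t)_{t\in[0,T]})$. The substance of the proof is therefore concentrated in establishing narrow continuity of $(\pi_t)_{t\in[0,T]}$ and the three individual liminf inequalities.

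First I would establish narrow continuity. The hypothesis \eqref{eqn:liminf_1_1} together with Theorem~\ref{thm:repl_lemma_thm} secures local equilibration \eqref{eqn:local_equilibrium}--\eqref{eqn:local_equilibrium_2}. Using the Legendre duality between $\Psi_L$ and $\Psi_L^\star$ one obtains an $\mathcal O(L^{-d})$ control of $\int_0^T\Psi_L\,\mathrm dt$, and via the inequality $2|j|\le a[\operatorname{arcsinh}(j/a)\cdot(j/a)] + \text{(lower order)}$ a uniform bound on $\int_0^T\sum_{i,i'}|\hat\jmath^L_{i,i'}(\mu^L_t)|\,\mathrm dt$. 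Combined with the continuity equation $\partial_t\hat\rho_i(\mu^L_t)=-\operatorname{div}\hat\jmath^L(\mu^L_t)(i)$, this yields equicontinuity of $t\mapsto\langle\mu^L_t,G\rangle$ for every $G\in C^1(\Lambda;\mathbb R)$, which passes to the limit and furnishes $(\pi_t)_{t\in[0,T]}\in C([0,T];\mathcal M_+(\Lambda))$. The liminf \eqref{eqn:T_limit_free_energy_liminf} then follows from the narrow lower semicontinuity of $\mathcal F_\alpha^V$ (being a supremum of linear-affine functionals of $\pi$), applied at each $t$ using narrow continuity of $t\mapsto\pi_t$.

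Next, for the quadratic lower bound \eqref{eqn:lower_bound_psi_star_thm} I would exploit the variational formula \eqref{eqn:e_star_op_norm}. Fix a test function $G\in C^{1,2}([0,T]\times\Lambda;\mathbb R)$ and $\epsilon>0$, and use the force $F^{\epsilon G}_{\eta,\eta^{i,i+e_k}}:=\epsilon L^{-1}\nabla^{i,i+e_k}G_t(\cdot/L)$ (an antisymmetric function). The Fenchel--Young inequality $\langle F,J(\mu)\rangle_L\le\Psi_L^\star(\mu,F)+\Psi_L(\mu,J(\mu))$, together with the Taylor expansion $\cosh(x)-1=\tfrac12 x^2+\mathcal O(x^4)$, gives
\begin{equation*}
\Psi_L^\star(\mu^L_t,F^V(\mu^L_t))\ge \epsilon\langle F^G, J^V(\mu^L_t)\rangle_L - \tfrac{\epsilon^{2}}{2}\sum_{i,k}\hat\chi^V_{i,i+e_k}(\mu^L_t)L^{-2}|\nabla G_t|^2+o(\epsilon^{2}).
\end{equation*}
The replacement lemma \eqref{eqn:replacement_lemma} (applicable thanks to Theorem~\ref{thm:repl_lemma_thm}) converts the edge mobilities $\hat\chi_{i,i+e_k}^V(\mu^L_t)$ into $\chi(\rho_t)$, while $\langle F^G,J^V(\mu^L_t)\rangle_L$ converges to the $G$-dependent linear part of \eqref{eqn:e_star_op_norm}. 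Letting $L\to\infty$, then $\epsilon\uparrow$ to optimality, and finally taking the supremum over $G$ yields \eqref{eqn:lower_bound_psi_star_thm}. For \eqref{eqn:lower_bound_psi_thm} I would apply the same duality with $\jmath^L_t$ in place of $F^V(\mu^L_t)$ and use the continuity equation to recognise $\langle F^G,\jmath^L_t\rangle_L=\int_\Lambda G_t\,\partial_t\mu^L_t + \text{discretisation error}$, which reproduces the linear part of \eqref{eqn:e_op_norm} in the limit; the quadratic penalty is handled as above by the cosh expansion and the replacement lemma.

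The main obstacle lies in step three, namely the transition from the non-quadratic cosh-structure of $\Psi_L^\star$ (and analogously of $\Psi_L$) to the quadratic $\|\cdot\|_{\chi(\rho_t)}$ norm, while simultaneously replacing the edge mobilities $\hat a$, $\hat\chi^V$ by the continuum mobility $\chi(\rho_t)$ along a path $(\rho_t)$ that is only narrowly continuous. Controlling the error terms $\mathcal O(\epsilon^{4}L^{-4})$ from the cosh expansion against the size of the summation (of order $L^d$) and simultaneously applying the replacement lemma (which only works on time averages and averaged mobilities) requires careful bookkeeping: one must verify that the penalty from localising in time through the smoothness of $G$ disappears in the joint limit $L\to\infty$, $\epsilon\downarrow 0$, $\ell=\lfloor\epsilon L\rfloor\to\infty$. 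This is where the regularity of $G$ and the Lipschitz continuity of $\chi$ and $\phi$ (Section~\ref{sec:new-assump}) enter decisively.
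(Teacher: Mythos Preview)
Your overall architecture is right and matches the paper: decompose $\mathbb A_L^V$ via~\eqref{eqn:rel_ent_2b} with $\tilde V_t\equiv V$, establish the three individual liminf inequalities, and assemble them against~\eqref{eqn:action_functional}. Your treatment of the $\Psi_L$ bound is also essentially the paper's (Proposition~\ref{prop:proof_psi_bound}): test against $\nabla\tilde G_L$ via Fenchel--Young, use the continuity equation for the linear part, and control $\Psi_L^\star(\mu,\nabla\tilde G_L)$ by a quadratic form in $\hat\chi$.

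There is, however, a genuine gap in your argument for the $\Psi_L^\star$ bound. The Fenchel--Young inequality you invoke, $\langle F,J(\mu)\rangle_L\le\Psi_L^\star(\mu,F)+\Psi_L(\mu,J(\mu))$, produces a lower bound on $\Psi_L^\star(\mu,F^{\epsilon G})$ --- the functional evaluated at the \emph{test} force --- not on $\Psi_L^\star(\mu,F^V(\mu))$. Dually, $\Psi_L^\star(\mu,F^V(\mu))\ge\langle j^G,F^V(\mu)\rangle-\Psi_L(\mu,j^G)$ for a test current $j^G$ would pair the test object with the unknown force $F^V(\mu)=-\nabla\log(\mu/\nu_\alpha^V)$, which you cannot control pointwise. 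Neither direction of Legendre duality gives the inequality you wrote. The paper circumvents this by exploiting detailed balance: $\Psi_L^\star(\mu,F^V(\mu))$ is explicitly the Dirichlet form $\sum 2q_{\eta,\eta'}(\sqrt{\rho(\eta)}-\sqrt{\rho(\eta')})^2$ (with $\rho=\mu/\nu_\alpha^V$), and the elementary inequality $(a-b)^2\ge 2(a-b)H-H^2$ with $H=\tfrac14(\sqrt{\rho(\eta)}+\sqrt{\rho(\eta')})\nabla^{i,i+e_k}G$ yields, after summation, exactly the projected bound $\sum_{i,k}[(L\hat\jmath^V_{i,i+e_k})(L\nabla^{i,i+e_k}G)-\tfrac12\hat\chi^V_{i,i+e_k}(L\nabla^{i,i+e_k}G)^2]$ (Proposition~\ref{prop:lower_bound_finite}). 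This Dirichlet-form step is the missing ingredient; your $\epsilon$-scaling and cosh expansion are then unnecessary, since the bound is already quadratic in $G$ and one simply optimises over $G$ after passing to the limit.

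A minor point on ordering: your route to narrow continuity via microscopic current bounds and equicontinuity could in principle work, but the paper's route is cleaner and avoids this. It first proves the liminf bounds for $\Psi_L$ and $\Psi_L^\star$ (which only need continuity of $X_T$ and weak convergence of $Q_L$, both available a priori), deduces $\mathcal E((\rho_t)_{t\in[0,T]})<\infty$ from~\eqref{eqn:liminf_1_1}, and then obtains 2-absolute continuity in the Wasserstein sense (Proposition~\ref{lem:continuity_lemma}) and hence narrow continuity (Lemma~\ref{lem:ac_equiv}). Only after that does~\eqref{eqn:T_limit_free_energy_liminf} follow at every $t$ via Proposition~\ref{prop:free_energy_lower_bound_general}.
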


In this theorem, we see for the first time a connection between the non-quadratic microscopic functionals $\Psi$ and $\Psi^\star$
and their macroscopic quadratic counterparts, see~\eqref{eqn:lower_bound_psi_thm} and~\eqref{eqn:lower_bound_psi_star_thm}.

\begin{proof}
Note that the assumptions of Theorem~\ref{thm:repl_lemma_thm} are satisfied, so that the local equilibration
assumptions~\eqref{eqn:local_equilibrium} and~\eqref{eqn:local_equilibrium_2} hold.  The result~\eqref{eqn:conv_A} follows from
the representation of $\mathbb A_L^V$ in~\eqref{eqn:rel_ent_2b}, the definition of $\mathbb A$ in~\eqref{eqn:action_functional}
combined with~\eqref{eqn:liminf_1} and the following three inequalities (for which the proofs will be given in
Section~\ref{sec:assumptions_2}). Firstly, for the free energy at the final time $T$, we obtain from
Proposition~\ref{prop:free_energy_lower_bound_general} and the continuity of $X_T$ (the evaluation of the path at the final time
$t=T$) that
\begin{equation}
  \label{eqn:liminf_2}
  \liminf_{L\to\infty}\frac 1{L^d}\mathcal F_{L,\alpha}^V(\mu^L_T)\ge \mathcal F_\alpha^V(\rho_T).
\end{equation}
Secondly, 
\begin{equation}
  \label{eqn:lower_bound_psi}
  \liminf_{L\to\infty} \frac 1{L^d}\int_0^T \Psi_L\bigl(\mu^L_t,\jmath^L_t\bigr) \;\!\mathrm dt
  \ge \mathcal E\bigl((\rho_t)_{t\in[0,T]}\bigr),
\end{equation}
which follows from Proposition~\ref{prop:proof_psi_bound}, and thirdly
\begin{equation}
  \label{eqn:lower_bound_psi_star}
  \liminf_{L\to\infty} \frac 1{L^d}\int_0^T \Psi_L^\star\bigl(\mu^L_t,F^V(\mu^L_t)\bigr) \;\!\mathrm dt
  \ge \mathcal E^\star\bigl((\rho_t)_{t\in[0,T]}\bigr),
\end{equation}
which is proved in Proposition~\ref{prop:another_ineq}. Proposition~\ref{prop:e_rep} and Proposition~\ref{prop:e_star_rep} then
yield~\eqref{eqn:lower_bound_psi_thm} and~\eqref{eqn:lower_bound_psi_star_thm},
respectively. Proposition~\ref{lem:continuity_lemma} further shows that the path is 2-absolutely continuous in the Wasserstein
sense (see~\eqref{eqn:abs_cont_wasserstein} in Section~\ref{sec:regularity}), from which we can deduce the narrow continuity using
Lemma~\ref{lem:ac_equiv}. The inequality~\eqref{eqn:T_limit_free_energy_liminf} for the free energy at any time $t\in[0,T]$ then
follows from another application of Proposition~\ref{prop:free_energy_lower_bound_general}.
\end{proof}

It is instructive to consider Theorem~\ref{thm:lower-bound-thm} in the case where the observed process is equal to the reference
process $P_L=P_L^V$.  In this case the microscopic action $\mathbb A_L^V\bigl(Q_L\bigr)=0$ and the theorem has implications for
the hydrodynamic limit of the reference process, as follows. Either $Q^*$ does not concentrate on a single path, in which case the
theorem is inapplicable; or $Q^*$ does concentrate on a single path, and the theorem shows that the macroscopic action of that
path satisfies $\mathbb A ((\pi_t)_{t\in[0,T]})\leq0$, by~\eqref{eqn:conv_A}.  In the examples that we consider below, this
macroscopic action is zero, see below.

We consider a special case for the observed process $P_L$. We keep the reference process $P_L^V$ as outlined in
Section~\ref{sec:assumptions_scaling_limit} and consider for some (possibly time-dependent) potential $\tilde H\in
C^{1,2}([0,T]\times\Lambda;\mathbb R)$ the process $P_L=P^{\tilde V}_L$ for the potential $\tilde V_t=V + \tilde H_t$ as defined
in Section~\ref{sec:path_meas}.  Note that both processes have the same initial condition $\mu^L_0$ and their transition rates
$r^{V+\tilde H_t}$ and $r^V$ coincide up to a change of the external potential (i.e.~the functions $g_1$ and $g_2$
in~\eqref{eqn:rates_V} coincide for both processes).  We assume that the corresponding path measures $(Q^{V\!+\!\tilde
  H}_L)_{L\in\mathbb N}$ satisfy, as in Section~\ref{sec:hyrdo_eqn} above, a hydrodynamic limit with hydrodynamic equation
\begin{equation}
  \label{eqn:hydro_limit_time_dependent}
  \dot\rho_t =\Delta\phi(\rho_t) + \nabla\cdot(\chi(\rho_t)\nabla (V+\tilde H_t)).
\end{equation}
In this case one can improve the result~\eqref{eqn:conv_A} from Theorem~\ref{thm:lower-bound-thm} by showing that the action
functionals $\mathbb A_L^V(Q^{V\!+\!\tilde H}_L)$ converge, as described by the following second main theorem.

\begin{theorem}
  \label{thm:Q_H_limit}
  Assume that $P_L=P^{V\!+\!\tilde H}_L$ for some $\tilde H\in C^{1,2}([0,T]\times \Lambda;\mathbb R)$ and that
  $(P_L)_{L\in\mathbb N}$ satisfies the assumptions in Theorem~\ref{thm:lower-bound-thm}. Moreover, assume that the density of the
  path $(\pi_t)_{t\in[0,T]}$ is a weak solution to~\eqref{eqn:hydro_limit_time_dependent}, in the sense
  of~\eqref{eqn:weak_sol_xxx}. Then
  \begin{multline}
    \label{eqn:limit_action}
    \lim_{L\to\infty}\frac 1{L^d}\mathbb A_L^V\bigl(Q^{V\!+\!\tilde H}_L\bigr) 
    = \frac 14\int_0^T \bigl\|\nabla \tilde H_t\bigr\|_{\chi(\rho_t)}^2\;\!\mathrm dt\\
    =  \frac 14\int_0^T \bigl\|\dot\rho_t -\Delta\phi(\rho_t) -\nabla\cdot(\chi(\rho_t)\nabla V)\bigr\|_{-1,\chi(\rho_t)}^2\;\!\mathrm dt.
  \end{multline}
\end{theorem}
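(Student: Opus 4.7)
The plan is to prove the first equality in~\eqref{eqn:limit_action} by matching asymptotic upper and lower bounds. The second equality is immediate from the hypothesis that $(\rho_t)_{t\in[0,T]}$ is a weak solution of~\eqref{eqn:hydro_limit_time_dependent}, since then $\dot\rho_t - \Delta\phi(\rho_t) - \nabla\cdot(\chi(\rho_t)\nabla V) = \nabla\cdot(\chi(\rho_t)\nabla\tilde H_t)$ (in the weak sense), and the defining duality of $\|\cdot\|_{-1,\chi(\rho_t)}$ with $\|\cdot\|_{\chi(\rho_t)}$ identifies the two norm expressions.

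For the lower bound, I would apply Theorem~\ref{thm:lower-bound-thm} directly, obtaining $\liminf_{L\to\infty}L^{-d}\mathbb A_L^V(Q_L^{V+\tilde H}) \geq \mathbb A((\pi_t)_{t\in[0,T]})$. To identify the right-hand side with the target, I would invoke the chain rule (Theorem~\ref{thm:chain-rule}), whose applicability is guaranteed once the upper bound below shows $\mathbb A((\pi_t)_{t\in[0,T]}) < \infty$. The formal calculation leading to~\eqref{eqn:A_alternative} then becomes rigorous, and substituting the weak-equation identity into~\eqref{eqn:A_alternative} gives $\mathbb A((\pi_t)_{t\in[0,T]}) = \tfrac{1}{4}\int_0^T\|\nabla\tilde H_t\|^2_{\chi(\rho_t)}\,dt$.

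For the upper bound, I would compute $\mathbb A_L^V(Q_L^{V+\tilde H}) = \mathcal H(P_L^{V+\tilde H}|P_L^V)$ directly. Since the two processes share the same initial distribution, the classical Girsanov-type identity gives
\begin{equation*}
\mathbb A_L^V(Q_L^{V+\tilde H}) = \int_0^T \sum_{\eta,\eta'}\mu_t^L(\eta)\bigl[r^{V+\tilde H_t}_{\eta,\eta'}\log\tfrac{r^{V+\tilde H_t}_{\eta,\eta'}}{r^V_{\eta,\eta'}} - r^{V+\tilde H_t}_{\eta,\eta'} + r^V_{\eta,\eta'}\bigr]\,dt.
\end{equation*}
By~\eqref{eqn:rates_V}, the log-ratio on the directed edge $(i, i+e_k)$ equals $-\tfrac{1}{2}[\tilde H_t((i+e_k)/L) - \tilde H_t(i/L)]$, which by the $C^{1,2}$-regularity of $\tilde H$ is of order $L^{-1}$ with leading term $-\tfrac{1}{2L}\partial_k\tilde H_t(i/L)$. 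Writing $r^{V+\tilde H} = r^V e^x$ with $x = O(L^{-1})$ and using $e^x(x-1)+1 = \tfrac{1}{2}x^2 + O(|x|^3)$, then collecting the contributions from each unordered pair via $\hat\chi^V_{i,i+e_k}(\mu) = \tfrac{1}{2}\sum_\eta \mu(\eta)[\hat r^V_{\eta,\eta^{i,i+e_k}} + \hat r^V_{\eta,\eta^{i+e_k,i}}]$, one obtains
\begin{equation*}
\frac{1}{L^d}\mathbb A_L^V(Q_L^{V+\tilde H}) = \frac{1}{4L^d}\int_0^T \sum_{i,k}\hat\chi^V_{i,i+e_k}(\mu_t^L)\bigl(\partial_k\tilde H_t(i/L)\bigr)^2\,dt + o(1),
\end{equation*}
where the remainder aggregates the cubic Taylor residuals and the $O(L^{-3})$ finite-difference corrections, both uniformly bounded using the $C^2$-norm of $\tilde H$ together with~\eqref{eqn:C_hat}.

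The principal obstacle, and the hardest step, is the passage to the hydrodynamic limit in the Riemann sum $L^{-d}\sum_{i,k}\hat\chi^V_{i,i+e_k}(\mu_t^L)(\partial_k\tilde H_t(i/L))^2$. This is not a routine Riemann argument, since the mobility $\hat\chi^V_{i,i+e_k}(\mu_t^L)$ depends on the (non-explicit) law $\mu_t^L$. I would first observe that the hypotheses~\eqref{eqn:initial_bound_free_energy_xx} and~\eqref{eqn:suff_local_eqm_action} of Theorem~\ref{thm:repl_lemma_thm} follow from the assumed~\eqref{eqn:liminf_1} and~\eqref{eqn:liminf_1_1}, then invoke the replacement lemma~\eqref{eqn:replacement_lemma} to substitute $\hat\chi^V_{i,i+e_k}(\mu_t^L)$ by its local-equilibrium counterpart (up to a vanishing error as $\epsilon\to 0$ and $L\to\infty$), and finally combine the narrow convergence $(\Theta_L)_\# Q_L \to \delta_{(\pi_t)_{t\in[0,T]}}$ with the continuity of $\chi$ and $\nabla\tilde H$ to pass to the integral $\int_\Lambda\chi(\rho_t(u))|\nabla\tilde H_t(u)|^2\,du$. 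This delivers the matching upper bound $\limsup L^{-d}\mathbb A_L^V(Q_L^{V+\tilde H}) \leq \tfrac{1}{4}\int_0^T\|\nabla\tilde H_t\|^2_{\chi(\rho_t)}\,dt$, which combined with the lower bound closes the sandwich.
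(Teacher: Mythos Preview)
Your ``upper bound'' computation is essentially the paper's proof, and it actually yields the \emph{full limit}, not just an upper bound. The Girsanov identity you write is an equality; the Taylor expansion is exact up to $o(1)$ (the paper uses the exact hyperbolic identities $e^{\pm x/2}$ rather than Taylor, but the outcome is identical); and the replacement-lemma machinery (Lemma~\ref{lem:replace_empirical}, Corollary~\ref{cor:asymp_upper_bound}, and the argument of Lemma~\ref{lem:2nd_remainder}) gives two-sided convergence of the discrete $\hat\chi$-sum to $\int_\Lambda \chi(\rho_t)|\nabla\tilde H_t|^2\,\mathrm du$. So once you carry that step through, you already have $\lim_{L\to\infty} L^{-d}\mathbb A_L^V(Q_L^{V+\tilde H}) = \tfrac14\int_0^T\|\nabla\tilde H_t\|_{\chi(\rho_t)}^2\,\mathrm dt$, and the sandwich is unnecessary. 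The paper proceeds exactly this way: it expresses $\log(dP_L^{V+\tilde H}/dP_L^V)$ explicitly, takes the expectation under $P_L^{V+\tilde H}$, rewrites the result using $\hat\jmath^V$ and $\hat\chi^V$, and then passes to the limit via the same replacement/weak-convergence argument used in Section~\ref{sec:assumptions_2}.

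Your lower-bound branch, by contrast, has a genuine gap. You invoke Theorem~\ref{thm:chain-rule} to convert $\mathbb A((\pi_t)_{t\in[0,T]})$ into the form~\eqref{eqn:A_alternative}, but Theorem~\ref{thm:chain-rule} carries extra hypotheses ($\chi'(a)\ge C_*$ and, for $d>1$, the McCann condition on $f$) that are \emph{not} assumed in Theorem~\ref{thm:Q_H_limit}. Indeed, the paper deliberately separates these: Theorem~\ref{thm:Q_H_limit} is proved without the chain rule, and Theorem~\ref{thm:final_thm} then shows that the chain rule is \emph{equivalent} to the lower bounds of Theorem~\ref{thm:lower-bound-thm} being sharp. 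So importing the chain rule here would either assume more than the theorem allows, or (in the presence of Theorem~\ref{thm:final_thm}) be circular. Drop the sandwich and recognise that your direct computation already delivers the result.
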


We postpone the proof of Theorem~\ref{thm:Q_H_limit} to Section~\ref{sec:theorem_2} below. See also Section~10
in~\cite{Kipnis1999a} for the specific calculations for the simple exclusion process, which can be seen as a special case of our
computations. We further stress that for measures of the form $(P_L^{\tilde V})_{L\in\mathbb N}$ the assumption
on~\eqref{eqn:liminf_1_1} in Theorem~\ref{thm:lower-bound-thm} is satisfied trivially, since
$\mathbb A_L^{\tilde V}\bigl(Q_L^{\tilde V}\bigr)=0$.

Theorem~\ref{thm:Q_H_limit} clarifies the relationship between the microscopic and macroscopic action functionals.  It shows how
the non-quadratic ($\Psi$-$\Psi^\star$) form of the microscopic action $\mathbb{A}_L^V$ converges to a (simpler) quadratic form,
when viewed on the macroscopic scale.  Of course, this convergence requires some information about the regularity of the path that
dominates $Q^*$: this comes from the assumption~\eqref{eqn:hydro_limit_time_dependent}.

Recall that the lower bound~\eqref{eqn:conv_A} in Theorem~\ref{thm:lower-bound-thm} and the limit~\eqref{eqn:limit_action} in
Theorem~\ref{thm:Q_H_limit} coincide (by~\eqref{eqn:A_alternative}) if and only if the chain rule~\eqref{eqn:chain_rule_formal}
holds. The validity of the chain rule~\eqref{eqn:chain_rule_formal} for the path $(\pi_t)_{t\in[0,T]}$ in
Theorem~\ref{thm:Q_H_limit} can be shown to be equivalent to the case where the limits
in~\eqref{eqn:T_limit_free_energy_liminf},~\eqref{eqn:lower_bound_psi_thm} and~\eqref{eqn:lower_bound_psi_star_thm} exist and all
three inequalities are equalities.

\begin{theorem}
  \label{thm:final_thm}
  Let the assumptions in Theorem~\ref{thm:Q_H_limit} hold. Further assume that $\mathcal F_\alpha^V$ satisfies the chain
  rule~\eqref{eqn:chain_rule_formal} for the path $(\rho_t)_{t\in[0,T]}$. Then the free energy converges for all $t\in[0,T]$,
  \begin{equation}
    \label{eqn:T_limit_free_energy}
    \lim_{L\to\infty} \frac 1{L^d}\mathcal F_{L,\alpha}^V\bigl(\mu^L_t\bigr) = \mathcal F_\alpha^V(\rho_t).
  \end{equation} 
  Moreover,
  \begin{equation}
    \label{eqn:T_limit_psi}
    \lim_{L\to\infty} \frac 1{L^d}\int_0^T\Psi_L\bigl(\mu^L_t,\jmath^L_t\bigr)\;\!\mathrm dt\\ 
    = \frac 12\int_0^T\|\dot\rho_t\|_{-1,\chi(\rho_t)}^2\;\!\mathrm dt
  \end{equation}
  and
  \begin{equation}
    \label{eqn:T_limit_psi_star}
    \lim_{L\to\infty} \frac 1{L^d}\int_0^T\Psi^\star_L\bigl(\mu^L_t,F^V(\mu^L_t)\bigr)\;\!\mathrm dt\\ 
    = \frac 12\int_0^T \|\Delta\phi(\rho_t)+\nabla\cdot(\chi(\rho_t)\nabla V)\|_{-1,\chi(\rho_t)}^2\;\!\mathrm dt.
  \end{equation}
  Also the opposite implication holds: If~\eqref{eqn:T_limit_free_energy},~\eqref{eqn:T_limit_psi}
  and~\eqref{eqn:T_limit_psi_star} are satisfied, then $\mathcal F_\alpha^V$ satisfies the chain
  rule~\eqref{eqn:chain_rule_formal} for $(\rho_t)_{t\in[0,T]}$.
\end{theorem}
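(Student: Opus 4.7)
The plan is to combine the microscopic decomposition~\eqref{eqn:rel_ent_2b} with Theorem~\ref{thm:Q_H_limit} and Theorem~\ref{thm:lower-bound-thm}. For the time-independent reference potential $\tilde V_t\equiv V$ the correction term in~\eqref{eqn:rel_ent_2b} vanishes, so
\[
\frac{1}{L^d}\mathbb A_L^V\bigl(Q_L^{V+\tilde H}\bigr) = \frac{1}{2L^d}\bigl[\mathcal F_{L,\alpha}^V(\mu_T^L)-\mathcal F_{L,\alpha}^V(\mu_0^L)\bigr] + \frac{1}{2L^d}\int_0^T\Psi_L\;\!\mathrm dt + \frac{1}{2L^d}\int_0^T\Psi_L^\star\;\!\mathrm dt.
\]
The three contributions on the right are lower bounded individually by~\eqref{eqn:T_limit_free_energy_liminf},~\eqref{eqn:lower_bound_psi_thm}, and~\eqref{eqn:lower_bound_psi_star_thm}; after moving the (converging) initial free energy to the left via~\eqref{eqn:liminf_1}, the strategy is to show that the summed lower bound exactly matches the limit of the left-hand side, forcing each individual bound to be an equality.

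For the direct implication, Theorem~\ref{thm:Q_H_limit} gives $\lim_{L\to\infty}L^{-d}\mathbb A_L^V(Q_L^{V+\tilde H}) = \tfrac14\!\int_0^T\!\|\dot\rho_t - \Delta\phi(\rho_t) - \nabla\cdot(\chi(\rho_t)\nabla V)\|_{-1,\chi(\rho_t)}^2\;\!\mathrm dt$. Expanding this square and using the assumed chain rule~\eqref{eqn:chain_rule_formal} to replace the cross term $-\tfrac12\!\int_0^T\!\langle\dot\rho_t,\Delta\phi(\rho_t)+\nabla\cdot(\chi(\rho_t)\nabla V)\rangle_{-1,\chi(\rho_t)}\;\!\mathrm dt$ by $\tfrac12[\mathcal F_\alpha^V(\rho_T)-\mathcal F_\alpha^V(\rho_0)]$, this limit rewrites as
\[
\tfrac12[\mathcal F_\alpha^V(\rho_T)-\mathcal F_\alpha^V(\rho_0)] + \tfrac14\!\int_0^T\!\|\dot\rho_t\|_{-1,\chi(\rho_t)}^2\;\!\mathrm dt + \tfrac14\!\int_0^T\!\|\Delta\phi(\rho_t)+\nabla\cdot(\chi(\rho_t)\nabla V)\|_{-1,\chi(\rho_t)}^2\;\!\mathrm dt,
\]
which is precisely the sum of the three lower bounds on the right-hand side of the decomposition. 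The elementary fact that if $a_L+b_L+c_L\to a+b+c$ with $\liminf a_L\geq a$, $\liminf b_L\geq b$, $\liminf c_L\geq c$ (all bounds finite) then $\lim a_L=a$, $\lim b_L=b$, $\lim c_L=c$ --- via $\limsup a_L \leq \lim(a_L+b_L+c_L) - \liminf b_L - \liminf c_L \leq a$ --- then forces each inequality to be an equality, yielding~\eqref{eqn:T_limit_psi},~\eqref{eqn:T_limit_psi_star}, and~\eqref{eqn:T_limit_free_energy} at $t=T$. Pointwise convergence~\eqref{eqn:T_limit_free_energy} for arbitrary $t\in[0,T]$ then follows by repeating exactly the same reasoning on the subinterval $[0,t]$; the chain rule and the hydrodynamic equation~\eqref{eqn:hydro_limit_time_dependent} restrict to $[0,t]$ without further work.

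For the converse, assume~\eqref{eqn:T_limit_free_energy}--\eqref{eqn:T_limit_psi_star}. Applying~\eqref{eqn:lower_bound_psi_thm} and~\eqref{eqn:lower_bound_psi_star_thm} on the three subintervals $[0,t_1]$, $[t_1,t_2]$, $[t_2,T]$ and using that their sum already converges to the expected total by~\eqref{eqn:T_limit_psi}--\eqref{eqn:T_limit_psi_star}, the same saturation argument yields convergence of $L^{-d}\!\int_{t_1}^{t_2}\!\Psi_L\;\!\mathrm dt$ and $L^{-d}\!\int_{t_1}^{t_2}\!\Psi_L^\star\;\!\mathrm dt$ on any subinterval. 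Passing to the limit in the decomposition~\eqref{eqn:rel_ent_2b} on $[t_1,t_2]$ and comparing with the analogue of Theorem~\ref{thm:Q_H_limit} on $[t_1,t_2]$ yields
\begin{align*}
&\tfrac12[\mathcal F_\alpha^V(\rho_{t_2})-\mathcal F_\alpha^V(\rho_{t_1})] + \tfrac14\!\int_{t_1}^{t_2}\!\|\dot\rho_t\|_{-1,\chi(\rho_t)}^2\;\!\mathrm dt + \tfrac14\!\int_{t_1}^{t_2}\!\|\Delta\phi(\rho_t)+\nabla\cdot(\chi(\rho_t)\nabla V)\|_{-1,\chi(\rho_t)}^2\;\!\mathrm dt \\
&\qquad = \tfrac14\!\int_{t_1}^{t_2}\!\|\dot\rho_t-\Delta\phi(\rho_t)-\nabla\cdot(\chi(\rho_t)\nabla V)\|_{-1,\chi(\rho_t)}^2\;\!\mathrm dt.
\end{align*}
Expanding the right-hand square and cancelling the $\|\dot\rho_t\|^2$ and $\|\Delta\phi(\rho_t)+\nabla\cdot(\chi(\rho_t)\nabla V)\|^2$ integrals reproduces exactly~\eqref{eqn:chain_rule_formal}.

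The main obstacle I anticipate is verifying that Theorem~\ref{thm:Q_H_limit} and the liminf bounds of Theorem~\ref{thm:lower-bound-thm} admit direct analogues on arbitrary subintervals $[t_1,t_2]\subseteq[0,T]$. Since both proofs treat time as an integration variable and the hydrodynamic equation~\eqref{eqn:hydro_limit_time_dependent} trivially restricts, this should go through without essential modification; the required well-preparedness at $t_1$ (the analogue of~\eqref{eqn:liminf_1}) is nothing other than~\eqref{eqn:T_limit_free_energy} evaluated at $t=t_1$, which is known in both directions of the theorem.
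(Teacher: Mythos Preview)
Your proposal is correct and follows essentially the same approach as the paper: both use the decomposition~\eqref{eqn:rel_ent_2b}, the limit from Theorem~\ref{thm:Q_H_limit}, and the liminf bounds of Theorem~\ref{thm:lower-bound-thm}, then invoke the saturation principle (the paper writes it as $\limsup(a_n+b_n+c_n)\ge\limsup a_n+\liminf b_n+\liminf c_n$) to upgrade each liminf to a limit, first at $t=T$ and then on subintervals. For the converse, the paper simply passes to the limit in the decomposition on $[0,T]$ using the three assumed limits directly, obtains the chain rule for $t_1=0$, $t_2=T$, and then says ``repeating the above steps for $[0,t]$''; your version makes this last step more explicit by first establishing convergence of the $\Psi_L$ and $\Psi_L^\star$ integrals on subintervals via an additional saturation argument, which is a legitimate (and arguably cleaner) way to fill in what the paper leaves implicit.
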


\begin{proof} 
This proof is similar to calculations performed in~\cite{Kosygina2001a} and~\cite{Fathi2016a}, where the authors
establish~\eqref{eqn:T_limit_free_energy} for the hydrodynamic limit of the simple exclusion process.  Note
that~\eqref{eqn:limit_action},~\eqref{eqn:A_finite},~\eqref{eqn:A_alternative} and the chain rule~\eqref{eqn:chain_rule_formal}
imply
\begin{multline*}
  \qquad\lim_{L\to\infty} \frac 1{L^d}\biggl(\mathcal F_{L,\alpha}^V\bigl(\mu^L_T\bigr) 
  + \int_0^T \Psi_L\bigl(\mu^L_t,\jmath^L_t\bigr)\;\!\mathrm dt 
  + \int_0^T \Psi^\star_L\bigl(\mu^L_t,F^V(\mu^L_t)\bigr)\;\!\mathrm dt\biggr)\\ 
  = \mathcal F_\alpha^V(\rho_T)+\frac 12\int_0^T \|\dot\rho_t\|_{-1,\chi(\rho_t)}^2\;\!\mathrm dt 
  + \frac 12\int_0^T \|\Delta\phi(\rho_t)+\nabla\cdot(\chi(\rho_t)\nabla V)\|_{-1,\chi(\rho_t)}^2\;\!\mathrm dt.\qquad
\end{multline*}
We apply the inequality
$\limsup_{n\to\infty}(a_n+b_n+c_n)\ge \limsup_{n\to\infty} a_n + \liminf_{n\to\infty} b_n + \liminf_{n\to\infty}c_n$ to the
expression on the left hand side to obtain the inequality
\begin{equation*}
  \limsup_{L\to\infty} \frac 1{L^d}\mathcal F_{L,\alpha}^V\bigl(\mu^L_T\bigr)\le\mathcal F_\alpha^V(\rho_T).
\end{equation*}
The result for an arbitrary time $t\in[0,T]$ then follows for repeating the above proof for the time interval $[0,t]$. The
remaining two limits~\eqref{eqn:T_limit_psi} and~\eqref{eqn:T_limit_psi_star} follow in a similar way by a slight modification of
the above steps.

For the opposite implication, we assume that~\eqref{eqn:T_limit_free_energy},~\eqref{eqn:T_limit_psi}
and~\eqref{eqn:T_limit_psi_star} hold. In this case we have
\begin{multline*}
  \frac 12\int_0^T \bigl\|\dot\rho_t -\Delta\phi(\rho_t) -\nabla\cdot(\chi(\rho_t)\nabla V)\bigr\|_{-1,\chi(\rho_t)}^2\;\!\mathrm dt\\
  = \mathcal F_\alpha^V(\rho_T)-\mathcal F_\alpha^V(\rho_0)+\frac 12\int_0^T \|\dot\rho_t\|_{-1,\chi(\rho_t)}^2\;\!\mathrm dt \\
  + \frac 12\int_0^T \|\Delta\phi(\rho_t)+\nabla\cdot(\chi(\rho_t)\nabla V)\|_{-1,\chi(\rho_t)}^2\;\!\mathrm dt,
\end{multline*}
which is equivalent to~\eqref{eqn:chain_rule_formal} for $t_1=0$ and $t_2=T$. Repeating the above steps for $[0,t]$ (for any
$t\in[0,T]$) then finishes the proof.
\end{proof}

\paragraph{Remark on Chain Rule} In summary, we have seen that there are at least three ways to verify the chain
rule~\eqref{eqn:chain_rule_formal}. One way is to prove the assumptions of Theorem~\ref{thm:chain-rule}. Alternatively, one can
derive a Large Deviation Principle, as in Macroscopic Fluctuation Theory (cf.~the discussion below Theorem~\ref{thm:chain-rule});
or one can directly calculate the limits in Theorem~\ref{thm:final_thm}.

Now recall the case where the observed process and the reference process coincide, $P_L=P_L^V$.  One sees
that~\eqref{eqn:T_limit_free_energy}--\eqref{eqn:T_limit_psi_star} in Theorem~\ref{thm:final_thm} are similar
to~\eqref{eqn:T_limit_free_energy_liminf}--\eqref{eqn:lower_bound_psi_star_thm} in Theorem~\ref{thm:lower-bound-thm}, but
Theorem~\ref{thm:final_thm} is stronger, in that the limits have been shown to exist.  To prove this, the additional
assumption~\eqref{eqn:hydro_limit_time_dependent} was required, as well as~\eqref{eqn:chain_rule_formal}.  For the example systems
considered below, these assumptions can be proven by other means.  This establishes that the macroscopic action $\mathbb A
((\pi_t)_{t\in[0,T]})$ is non-negative, as long as the density $\rho$ associated to $\pi$ is a solution
of~\eqref{eqn:hydro_limit_time_dependent}, for some $\tilde H$.  In this case one sees that the hydrodynamic limit of the
reference system can be characterised as the unique zero of $\mathbb A$, within this class of paths.

Moreover, the quadratic structure of $\mathbb{A}$ together with the macroscopic chain rule means that the minimiser of
$\mathbb{A}$ can be identified as a gradient flow for the free energy.  Such gradient flows are widespread in macroscopic
descriptions of physical systems: we speculate that the structure presented here is similarly general.  That is, it is natural to
expect gradient flows as macroscopic descriptions of physical systems whose microscopic descriptions are reversible Markov chains,
because the non-quadratic $\Psi$-$\Psi^\star$ form of the microscopic action often converges to a quadratic functional on the
macroscopic scale.

\subsection{Examples}
\label{sec:examples}

Standard examples of particle models described by the class of models in Section~\ref{sec:ips} are (i) the zero-range process
(ZRP) for which $\Omega_L=\mathbb N_0^{\mathbb T_L^d}$, and $g_1$ is a function that satisfies $g_1(0)=0$ and $g_2=1$; and (ii)
the (symmetric) simple exclusion process (SEP), where $\Omega_L=\{0,1\}^{\mathbb T_L^d}$, $g_1(n) = \mathbf 1_{\{n=1\}}$ and
$g_2(n) = \mathbf 1_{\{n=0\}}$; and (iii) the generalised exclusion processes, where $\Omega_L=\{0,\cdots,m\}^{\mathbb T_L^d}$,
$g_1(n) = \mathbf 1_{\{n\ge 1\}}$ and $g_2(n) = \mathbf 1_{\{n\le m\}}$ for some fixed $m\in\mathbb N$~\cite{Kipnis1999a}. The
latter is an example of a non-gradient system. We focus on the two gradient models ZRP and SEP, which have $\mathrm d(k)=g_1(k)$
and $\mathrm d(k)=k$, respectively.

\subsubsection{Zero-Range Process}

The ZRP satisfies the assumptions of Section~\ref{sec:new-assump} if we assume that the rates are strictly monotonically
increasing and sub-linear.  That is, we assume that there exists $g^*>0$ such that $0 < g_1(k+1)-g_1(k)\le g^*$. Since $g_1(0)=0$
we have $g_1(k)\le g^*k$.  The mobility for the ZRP is given by $\chi(a)=\phi(a)$, where $E_{\nu_\alpha}[g_1(\eta(0))] =
\phi(\alpha)$.  The reference measure is $\nu_{*,1}(n)=1/(\prod_{k=1}^n g(k))$ and the $\alpha$-dependent invariant distribution
is for $z(\phi(\alpha)):=\sum_{n=0}^\infty \phi(\alpha)^n \nu_{*,1}(n)$ given by
\begin{equation*}
  \nu_{\alpha,1}(\eta(0))=\frac{\phi(\alpha)^{\eta(0)}}{z(\phi(\alpha))}\nu_{*,1}(\eta(0)).
\end{equation*}
Finally, the free energy is
\begin{equation*}
  \mathcal F_\alpha^V(\rho)
  =\int_\Lambda \biggl[\rho (u)\log \biggl(\frac{\phi(\rho(u))}{\mathrm e^{-V(u)}\phi(\alpha)}\biggr) 
  - \log \biggl(\frac{z(\phi(\rho(u)))}{z(\mathrm e^{-V(u)}\phi(\alpha))}\biggr)\biggr]\;\!\mathrm du
\end{equation*}
for $f(a) = \rho\log\phi(a)-\log z(\phi(a))$ and $\bar\rho_{\alpha,V}(u)=\phi^{-1}(\mathrm e^{-V(u)}\phi(\alpha))$.

These considerations establish that Theorems~\ref{thm:lower-bound-thm} to~\ref{thm:final_thm} can be applied to the ZRP.  We now
consider the implications of these theorems for hydrodynamic limits.  We first compare the path measures for the ZRP (that is, the
sequence of $P^V_L$ indexed by $L$) with some sequence of path measures $P_L$ which concentrate on an absolutely continuous path
$(\pi_t)_{t\in[0,T]}$ and satisfies the assumptions of Theorem~\ref{thm:repl_lemma_thm}. In this case one may apply
Theorem~\ref{thm:lower-bound-thm}, which establishes an asymptotic lower bound on the rescaled microscopic action $L^{-d}\mathbb
A_L^V(Q_L)$.  If $(\pi_t)_{t\in[0,T]}$ is the hydrodynamic limit of the ZRP then $P_L^V$ has to concentrate on
$(\pi_t)_{t\in[0,T]}$, but one also has (in general) that $L^{-d}\mathbb A_L^V(Q_L^V)=0$. Hence, if $L^{-d}\mathbb A_L^V(Q_L)$ is
bounded away from zero then the path $(\pi_t)_{t\in[0,T]}$ associated to $P_L$ can be ruled out as a possible hydrodynamic limit.

In fact the hydrodynamic limit of the ZRP is known to be given by~\eqref{eqn:hydro_limit_time_dependent} with $\tilde H = 0$ (see
Section~5 in~\cite{Kipnis1999a}), in which case Theorem~\ref{thm:lower-bound-thm} bounds the macroscopic action by zero:
$\mathbb A ((\pi_t)_{t\in[0,T]})\le 0$.  However this bound is not yet sufficient to show that $P_L^V$ concentrates on
$(\pi_t)_{t\in[0,T]}$, so it does not prove the hydrodynamic limit.

We now restrict our consideration to measures of the form $P_L=P_L^{V\!+\!\tilde H}$ that concentrate on paths which
satisfy~\eqref{eqn:hydro_limit_time_dependent}, for some $\tilde H$. In this case, Theorem~\ref{thm:Q_H_limit} may be applied.
This establishes that the limit of $L^{-d}\mathbb A_L^V(Q_L^{V\!+\!\tilde H})$ exists. We moreover can verify the assumptions of
Theorem~\ref{thm:chain-rule} (at least for $d=1$) or alternatively rely on the existence of the pathwise LDP
(see~\cite{Benois1995a}), which shows that also Theorem~\ref{thm:final_thm} holds -- this establishes a lower bound $\mathbb
A((\pi_t)_{t\in[0,T]}) \geq 0$ for any path $(\pi_t)_{t\in[0,T]}$ that solves~\eqref{eqn:hydro_limit_time_dependent}, with some
$\tilde H$.  This means that $(\pi_t)_{t\in[0,T]}$ is only admissible as a candidate for the hydrodynamic limit of the ZRP, if it
is a (weak) solution to~\eqref{eqn:hydro_limit_time_dependent} with $\tilde H=0$ (otherwise one has the contradiction $0 =
\lim_{L\to\infty} L^{-d} \mathbb A_L^V(Q_L^V) = \mathbb A((\pi_t)_{t\in[0,T]}) > 0$).

\subsubsection{Simple Exclusion Process}

For the SEP the invariant reference measure is $\nu_{*,1}(0)=\nu_{*,1}(1)=1$ and the $\alpha$-dependent invariant product measure
are Bernoulli distributed $\nu_{\alpha,1}(\eta(0)) = \alpha^{\eta(0)}(1-\alpha)^{1-\eta(0)}$. The functions $\phi$ and $\chi$ are
given by $\phi(\alpha)=\alpha$ and $\chi(\alpha) = \alpha(1-\alpha)$. 
The free energy is given by
\begin{multline*}
  \mathcal F_\alpha^V(\rho)
  =\int_\Lambda \biggl[\rho(u) \log \biggl(\frac{\rho(u)}{\alpha\mathrm e^{-V(u)}}\biggr) 
    +(1-\rho(u))\log \biggl(\frac{1-\rho(u)}{1-\alpha}\biggr)\\
    +\log \Bigl(\alpha\mathrm e^{-V(u)}+(1-\alpha)\Bigr)\biggr]\;\!\mathrm du,
\end{multline*}
which is of the form~\eqref{eqn:1st_free_energy} for the free energy density $f(a) = a\log a + (1-a)\log(1-a)$ and the stationary
density is $\bar\rho_{\alpha,V}(u) = \alpha \mathrm e^{-V(u)}/(\alpha\mathrm e^{-V(u)}+(1-\alpha))$.

For the sequence $P_L^{V+\tilde H}$ the hydrodynamic limit is again given in~\eqref{eqn:hydro_limit_time_dependent}, which has for
suitable initial condition a unique weak solution (see Proposition~5.1 on page~273 in~\cite{Kipnis1999a}).  We can proceed as for
the ZRP and can establish (under suitable assumptions) that the results of Theorem~\ref{thm:lower-bound-thm} and
Theorem~\ref{thm:Q_H_limit} hold.

Note that this process does not satisfy the assumptions of Theorem~\ref{thm:chain-rule} (as the assumption $\chi'(a)\geq C_*$ is
not satisfied). Nonetheless, we can establish the chain rule~\eqref{eqn:chain_rule_formal} if the pathwise LDP holds (cf.~the
discussion at the end of Section~\ref{sec:act-fcts}). This was e.g.~proved in~\cite[Chapter 10]{Kipnis1999a} (see
also~\cite{Bertini2009c}), such that also in this case the results of Theorem~\ref{thm:final_thm} hold.

\section{Regularity of Paths and the Chain Rule}
\label{sec:regularity}

The main aim of this section is to prove Theorem~\ref{thm:chain-rule}. The central difficulty is that classical approaches to
establish chain rules in metric spaces rely on $\lambda$-convexity of the functional under consideration; this property is
delicate and apparently not sufficiently well understood in a context other than the classic (unweighted) Wasserstein setting. The
process considered here are, however, naturally linked to weighted Wasserstein spaces, where important elements of the classic
Wasserstein theory are still missing. We circumvent this problem by showing that while the classic Wasserstein space is not the
natural space for the processes we study, they can be cast in this setting. The analysis is then somewhat technical, but follows
largely arguments in~\cite{Ambrosio2008b}. The novel $\Psi$-$\Psi^\star$-structure is thus less relevant in this section than for
the proofs in Section~\ref{sec:proof_bounds}.

In the following, we consider paths with conserved volume, for which also the action is finite: $\mathbb
A((\rho_t)_{t\in[0,T]})<\infty$. Combined with $\mathcal F_\alpha^V(\rho_0)<\infty$ and~\eqref{eqn:action_functional}, this
implies that $\mathcal E((\rho_t)_{t\in[0,T]})<\infty$ and $\mathcal E^\star((\rho_t)_{t\in[0,T]})<\infty$.  We will see that the
former of the two implies regularity in time (that $(\rho_t)_{t\in[0,T]}$ is absolutely-continuous in the Wasserstein sense) and
the latter yields certain regularity in space (such that e.g.~the weak gradient $\nabla \phi(\rho)$ exists a.e.~in $\Lambda$).

The following steps are based on ideas from Section 4 in~\cite{Dawson1987a}. For a more recent and concise representation of the
following material, we refer to Appendices~D.5 and~D.6)~in~\cite{Feng2006a}. A discussion of similar content in terms of
interacting particle systems can e.g. be found in~\cite{Bertini2009c}.

For any topological space $\mathcal S$, we denote with $\mathscr D(\mathcal S;\mathbb R)=C^\infty_c(\mathcal S;\mathbb R)$ the
vector space of real-valued infinitely often differentiable and compactly supported functions on $\mathcal S$ and equip $\mathscr
D(\mathcal S;\mathbb R)$ with the usual topology for test functions, see e.g.~\cite[Appendix D.1]{Feng2006a}. Its topological
dual, the space of (Schwartz) distributions, will be denoted with $\mathscr D'(\mathcal S;\mathbb R)$. The application of
$g\in\mathscr D(\mathcal S;\mathbb R)$ to a distribution $\vartheta \in \mathscr D'(\mathcal S;\mathbb R)$ is denoted by
$\langle\vartheta,g\rangle$.

The Otto calculus yields a formal interpretation of $\mathcal M_+(\Lambda)$ as an infinite dimensional Riemannian manifold (see
for example Chapter~15 in~\cite{Villani2009a} or Section~8.1.2 in~\cite{Villani2003a}). For a measure $\pi\in\mathcal
M_+(\Lambda)$, one can define three isometric spaces $H^1_\pi(\Lambda;\mathbb R)$, $H^{-1}_{\pi}(\Lambda;\mathbb R)$ and $\mathcal
L_{\nabla,\pi}^2(\Lambda;\mathbb R^d)$, which all can play the role of the `tangent space' at $\pi$. We next give precise
definitions of all three spaces. For $h\colon\Lambda\to \mathbb R^d$, we define the norm $\|h\|_\pi^2:=\int_\Lambda
|h(u)|^2\pi(\mathrm du)$. For $g\in W^1_{\rm loc}(\Lambda;\mathbb R)$ this norm gives rise to the semi-norm
$\|g\|_{1,\pi}:=\|\nabla g\|_{\pi}$, where $\nabla g$ denotes the weak derivative of $g$. Since $\{g\in\mathscr D(\Lambda;\mathbb
R): \int_\Lambda g \;\!\mathrm du=0\}$ equipped with $\|\cdot\|_{1,\pi}$ is a normed space, we can define its completion to be
$H^1_\pi(\Lambda;\mathbb R)$. For $\vartheta\in\mathscr D'(\Lambda;\mathbb R)$ the dual norm, which is defined as
\begin{equation}
  \|\vartheta\|_{-1,\pi}^2:=\sup_{g\in H_\pi^1(\Lambda;\mathbb R)} \bigl(2\langle \vartheta,g\rangle - \|g\|_{1,\pi}^2\bigr),
\end{equation}
gives rise to $H^{-1}_\pi(\Lambda;\mathbb R):=\{\vartheta\in\mathscr D'(\Lambda;\mathbb R): \|\vartheta\|_{-1,\pi}<\infty\}$, the
dual of $H^1_\pi(\Lambda;\mathbb R)$. Note that $H^1_\pi(\Lambda;\mathbb R)$ is a Hilbert space (with inner product $\langle
\cdot,\cdot\rangle _{1,\pi}$ defined in the obvious way using the polarisation identity for inner products); it therefore is
reflexive, which implies the existence of a linear and isometric map from $H^1_\pi(\Lambda;\mathbb R)$ to
$H^{-1}_\pi(\Lambda;\mathbb R)$, formally given by $g\mapsto -\nabla\cdot (\pi \nabla g)$. The inner product on
$H^{-1}_\pi(\Lambda;\mathbb R)$ will be denoted with $\langle \cdot,\cdot\rangle _{-1,\pi}$. Finally, let $\mathcal
L_{\nabla,\pi}^2(\Lambda;\mathbb R^d)$ be the completion of $\{\nabla\zeta: \zeta\in \mathscr D(\Lambda;\mathbb R)\}$ with respect
to $\|\cdot\|_\pi$. It is then easy to see that $H_\pi^1(\Lambda;\mathbb R)$ is also isometric to $\mathcal
L_{\nabla,\pi}^2(\Lambda;\mathbb R^d)$ (cf.~page~379 in~\cite{Feng2006a}). We will denote the map from $H_\pi^1(\Lambda;\mathbb
R)$ to $\mathcal L_{\nabla,\pi}^2(\Lambda;\mathbb R^d)$ with $\nabla$.

For our purposes, the spaces $H^{-1}_{\pi}(\Lambda;\mathbb R)$ and $\mathcal L_{\nabla,\pi}^2(\Lambda;\mathbb R^d)$ yield the more
relevant representations. The two prominent cases that will appear in the following are $\pi(\mathrm du) = \rho(u)\mathrm du$ and
$\pi(\mathrm du) = \chi(\rho(u))\mathrm du$. In these cases we will identify the densities $\rho$ and $\chi(\rho)$ as measures and
write $H_\rho^1(\Lambda;\mathbb R)$ and $H_{\chi(\rho)}^1(\Lambda;\mathbb R)$ instead of $H_\pi^1(\Lambda;\mathbb R)$ (and similar
for the other spaces we just introduced).

\subsection{Regularity of Paths on the Hydrodynamic Scale}
\label{sec:reg_paths_hydro_yy}

Now, fix a path $(\pi_t)_{t\in[0,T]}\in \mathcal D([0,T];\mathcal M_+(\Lambda))$ that is absolutely continuous with respect to the
Lebesgue measure with density $(\rho_t)_{t\in[0,T]}$. We equip $C^{1,2}([0,T]\times\Lambda;\mathbb R)$ with the
$(\rho_t)_{t\in[0,T]}$ dependent semi-norm $G\mapsto (\int_0^T\|\nabla G_t\|_{\chi(\rho_t)}^2\mathrm dt)$\textsuperscript{$1/2$},
on which we define the two real valued linear operators
\begin{equation*}
  L_\mathcal E(G) := \int_\Lambda \rho_T G_T\;\!\mathrm du -\int_\Lambda \rho_0G_0\;\!\mathrm du
  - \int_0^T\int_\Lambda \rho_t\;\! \partial_t G_t\;\!\mathrm du \;\!\mathrm dt
\end{equation*}
and
\begin{equation*}
  L_{\mathcal E^\star}(G) := \int_0^T\int_\Lambda \phi(\rho_t)\nabla\cdot \nabla G_t\;\!\mathrm du\;\!\mathrm dt
  - \int_0^T\int_\Lambda \chi(\rho_t)\nabla V\cdot \nabla G_t\;\!\mathrm du\;\!\mathrm dt.
\end{equation*}
Note that these two operators coincide with the left and right hand side of~\eqref{eqn:weak_sol_xxx}, respectively. Moreover, the
corresponding operator norms are given by $\mathcal E((\rho_t)_{t\in[0,T]})$ in~\eqref{eqn:e_op_norm} and $\mathcal
E^\star((\rho_t)_{t\in[0,T]})$ in~\eqref{eqn:e_star_op_norm}, respectively (cf.~e.g.~\cite{Dawson1987a,Feng2006a}).

Under the assumptions of Theorem~\ref{thm:lower-bound-thm}, we have prior information on the regularity of the path
$(\rho_t)_{t\in[0,T]}$, i.e.~we can assume that $\mathcal E((\rho_t)_{t\in[0,T]}),\mathcal E^\star((\rho_t)_{t\in[0,T]})<\infty$
(such that $L_\mathcal E$ and $L_{\mathcal E^\star}$ are bounded linear operators).

Note that $L_{\mathcal E}$ and $L_{\mathcal E^\star}$ are both invariant under addition of a constant in the sense that
$L_{\mathcal E^\star}(G)=L_{\mathcal E^\star}(G+c)$ for any $c\in\mathbb R$.  We thus can (with slight abuse of notation) redefine
$L_\mathcal E$ and $L_{\mathcal E^\star}$ as operators on $\{\nabla G: G\in C^{1,2}([0,T]\times\Lambda;\mathbb R)\}$, equipped
with $\nabla G\mapsto(\int_0^T\|\nabla G_t\|_{\chi(\rho_t)}^2\mathrm dt)$\textsuperscript{$1/2$}, as
\begin{equation*}
  L_\mathcal E(\nabla G) := L_\mathcal E(G) \quad \textrm{and} \quad L_{\mathcal E^\star}(\nabla G):= L_{\mathcal E^\star}(G).
\end{equation*}
Let $\mathcal L^2_{\nabla,\chi}([0,T]\times\Lambda;\mathbb R^d)$ be the $(\rho_t)_{t\in[0,T]}$ dependent completion of $\{\nabla
G: G\in C^{1,2}([0,T]\times\Lambda;\mathbb R)\}$ with respect to $\nabla G\mapsto(\int_0^T\|\nabla G_t\|_{\chi(\rho_t)}^2\mathrm
dt)$\textsuperscript{$1/2$}. Note that if $h=(h_t)_{t\in[0,T]}\in \mathcal L^2_{\nabla,\chi}([0,T]\times\Lambda;\mathbb R^d)$,
then $h_t\in\mathcal L_{\nabla,\chi(\rho_t)}^2(\Lambda;\mathbb R^d)$ for a.a.~$t\in[0,T]$. In Section~\ref{sec:chain_rule_for_F_V}
we will also consider $\mathcal L^2_{\nabla,\rm id}([0,T]\times\Lambda;\mathbb R^d)$, where the norm is replaced with $\nabla
G\mapsto(\int_0^T\|\nabla G_t\|_{\rho_t}^2\mathrm dt)$\textsuperscript{$1/2$}.

Since $\mathcal E((\rho_t)_{t\in[0,T]}),\mathcal E^\star((\rho_t)_{t\in[0,T]})<\infty$ the Bounded Linear Transformation Theorem
(see e.g.~Theorem~I.6~in~\cite{Reed1980a}), allows us to extend $L_\mathcal E(\nabla G)$ and $L_{\mathcal E^\star}(\nabla G)$ to
bounded linear operators on $\mathcal L^2_{\nabla,\chi}([0,T]\times\Lambda;\mathbb R^d)$ with the same operator norms as
above. For $h\in\mathcal L^2_{\nabla,\chi}([0,T]\times\Lambda;\mathbb R^d)$ we have
\begin{equation*}
  L_\mathcal E(h) = \int_\Lambda \rho_T \nabla^{-1} h_T\;\!\mathrm du -\int_\Lambda \rho_0 \nabla^{-1}h_0\;\!\mathrm du
  - \int_0^T\int_\Lambda \rho_t\;\! \partial_t (\nabla^{-1}h_t)\;\!\mathrm du \;\!\mathrm dt , 
\end{equation*}
where $\nabla^{-1}$ denotes (for each $t\in[0,T]$) the isometric map from $\mathcal L_{\nabla,\chi(\rho_t)}^2(\Lambda;\mathbb
R^d)$ to $H_{\chi(\rho_t)}^1(\Lambda;\mathbb R)$. Further
\begin{equation*}
  L_{\mathcal E^\star}(h) = \int_0^T\int_\Lambda \phi(\rho_t)\nabla\cdot h_t\;\!\mathrm du\;\!\mathrm dt
  - \int_0^T\int_\Lambda \chi(\rho_t)\nabla V\cdot h_t\;\!\mathrm du\;\!\mathrm dt.
\end{equation*}
By Riesz' representation theorem (e.g.~Theorem~II.4~in~\cite{Reed1980a}), there exist unique elements
$v, w\in \mathcal L^2_{\nabla,\chi}([0,T]\times\Lambda;\mathbb R^d)$, with $v=(v_t)_{t\in[0,T]}$ and $w=(w_t)_{t\in[0,T]}$, for
which these two bounded operators can be represented by
\begin{equation}
  \label{eqn:op_riesz}
  L_\mathcal E(h) = \int_0^T \int_\Lambda \chi(\rho_t) v_t\cdot h_t \;\!\mathrm du\;\!\mathrm dt,\qquad L_{\mathcal E^\star}(h)
  = \int_0^T \int_\Lambda \chi(\rho_t)w_t\cdot h_t \;\!\mathrm du\;\!\mathrm dt.
\end{equation}
Substituting~\eqref{eqn:op_riesz} in~\eqref{eqn:e_op_norm} and~\eqref{eqn:e_star_op_norm} yields (c.f.~Lemma 4.8
in~\cite{Dawson1987a})
\begin{equation}
  \label{eqn:op_norms}
  \mathcal E\bigl((\rho_t)_{t\in[0,T]}\bigr) = \frac 12\int_0^T\|v_t\|_{\chi(\rho_t)}^2\;\!\mathrm dt,\qquad \mathcal
  E^\star\bigl((\rho_t)_{t\in[0,T]}\bigr) = \frac 12\int_0^T\|w_t\|_{\chi(\rho_t)}^2\;\!\mathrm dt.
\end{equation}

\begin{proposition}
  \label{prop:e_rep}
  Assume that $\mathcal E((\rho_t)_{t\in[0,T]})<\infty$ and that $\chi$ satisfies the assumptions of
  Section~\ref{sec:new-assump}. Then the weak time derivative of $\rho_t$, denoted $\dot\rho_t$, exists in
  $H^{-1}_{\chi(\rho_t)}(\Lambda;\mathbb R)$ for a.a.~$t\in[0,T]$. Moreover,
  \begin{equation}
    \label{eqn:e_minus_one}
    \mathcal E\bigl((\rho_t)_{t\in[0,T]}\bigr) = \frac 12\int_0^T \|\dot\rho_t\|_{-1,\chi(\rho_t)}^2 \;\!\mathrm dt.
  \end{equation}
\end{proposition}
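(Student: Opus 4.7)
The plan is to leverage the Riesz representation $v \in \mathcal L^2_{\nabla,\chi}([0,T]\times\Lambda;\mathbb R^d)$ that has already been extracted from $L_\mathcal E$ and to identify $\dot\rho_t$ as the distribution formally given by $-\nabla\cdot(\chi(\rho_t)v_t)$. Starting from the identity
\begin{equation*}
L_\mathcal E(h) = \int_0^T\int_\Lambda \chi(\rho_t)\;\!v_t\cdot h_t\;\!\mathrm du\;\!\mathrm dt,
\end{equation*}
I would test against separable functions $G(t,u) = \psi(t)g(u)$ with $\psi \in C^1_c((0,T))$ and $g \in \mathscr D(\Lambda;\mathbb R)$. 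Since $\psi$ is compactly supported in $(0,T)$, the boundary terms in the definition of $L_\mathcal E$ vanish and $L_\mathcal E(G) = -\int_0^T \psi'(t)\int_\Lambda \rho_t g\;\!\mathrm du\;\!\mathrm dt$, while the right-hand side equals $\int_0^T \psi(t)\int_\Lambda \chi(\rho_t)v_t\cdot\nabla g\;\!\mathrm du\;\!\mathrm dt$. This yields the distributional-in-time identity
\begin{equation*}
\tfrac{\mathrm d}{\mathrm dt}\!\int_\Lambda\!\rho_t g\;\!\mathrm du = \int_\Lambda\chi(\rho_t)v_t\cdot\nabla g\;\!\mathrm du \quad\text{in }\mathcal D'((0,T))
\end{equation*}
for every fixed $g \in \mathscr D(\Lambda;\mathbb R)$.

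Next, I would upgrade this to a pointwise statement in $t$. By construction of $\mathcal L^2_{\nabla,\chi}([0,T]\times\Lambda;\mathbb R^d)$ as a completion with respect to the Bochner-type norm $(\int_0^T \|\cdot\|_{\chi(\rho_t)}^2\mathrm dt)^{1/2}$, Fubini gives $v_t \in \mathcal L^2_{\nabla,\chi(\rho_t)}(\Lambda;\mathbb R^d)$ for a.a.~$t\in[0,T]$, together with $t\mapsto\|v_t\|_{\chi(\rho_t)}^2 \in L^1([0,T])$. For such $t$, I would define $\dot\rho_t \in \mathscr D'(\Lambda;\mathbb R)$ by
\begin{equation*}
\langle\dot\rho_t,g\rangle := \int_\Lambda \chi(\rho_t)\;\!v_t\cdot\nabla g\;\!\mathrm du,\qquad g\in \mathscr D(\Lambda;\mathbb R).
\end{equation*}
Cauchy--Schwarz with the weight $\chi(\rho_t)\mathrm du$ gives $|\langle\dot\rho_t,g\rangle|\le \|v_t\|_{\chi(\rho_t)}\|g\|_{1,\chi(\rho_t)}$, so $\dot\rho_t$ extends to a continuous functional on $H^1_{\chi(\rho_t)}(\Lambda;\mathbb R)$ with $\|\dot\rho_t\|_{-1,\chi(\rho_t)} \le \|v_t\|_{\chi(\rho_t)}$. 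For the reverse inequality, I would approximate $v_t$ in $\mathcal L^2_{\nabla,\chi(\rho_t)}(\Lambda;\mathbb R^d)$ by a sequence $\nabla g_n$ with $g_n \in \mathscr D(\Lambda;\mathbb R)$ (possible by definition of this completion), for which $\langle\dot\rho_t,g_n\rangle/\|g_n\|_{1,\chi(\rho_t)} \to \|v_t\|_{\chi(\rho_t)}$, giving equality $\|\dot\rho_t\|_{-1,\chi(\rho_t)}=\|v_t\|_{\chi(\rho_t)}$. Integrating in $t$ and combining with the second identity in~\eqref{eqn:op_norms} produces~\eqref{eqn:e_minus_one}.

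The main obstacle is the passage from the distributional identity in $(t,u)$ to the claim that $\dot\rho_t \in H^{-1}_{\chi(\rho_t)}(\Lambda;\mathbb R)$ for almost every fixed $t$: one must check that the pointwise functional $g \mapsto \int_\Lambda \chi(\rho_t)v_t\cdot\nabla g\;\!\mathrm du$ is indeed a legitimate realisation of the time derivative of $t\mapsto\int_\Lambda \rho_t g\;\!\mathrm du$, and that the resulting object is measurable in $t$ as an element of a $t$-dependent Hilbert space. Both issues are handled by separability of $\mathscr D(\Lambda;\mathbb R)$ (a countable dense family of test functions $(g_k)_k$ lets one restrict the exceptional set to a single null set in $t$), together with the Lebesgue differentiation theorem applied to $t\mapsto \int_\Lambda \rho_t g_k\;\!\mathrm du$; the assumption $\chi>0$ on $(0,N_{\max})$ from Section~\ref{sec:new-assump} is needed to ensure that $\|\cdot\|_{1,\chi(\rho_t)}$ is a genuine norm modulo constants on the relevant test-function class, so that the dual space $H^{-1}_{\chi(\rho_t)}$ is well-defined.
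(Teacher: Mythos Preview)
Your proposal is correct and follows essentially the same route as the paper: both start from the Riesz representative $v\in\mathcal L^2_{\nabla,\chi}([0,T]\times\Lambda;\mathbb R^d)$ already obtained in~\eqref{eqn:op_riesz}, test against separable functions to identify $\dot\rho_t=-\nabla\cdot(\chi(\rho_t)v_t)$ in the sense of distributions, and then use the isometry between $\mathcal L^2_{\nabla,\chi(\rho_t)}(\Lambda;\mathbb R^d)$ and $H^{-1}_{\chi(\rho_t)}(\Lambda;\mathbb R)$ to conclude $\|\dot\rho_t\|_{-1,\chi(\rho_t)}=\|v_t\|_{\chi(\rho_t)}$ and hence~\eqref{eqn:e_minus_one} via~\eqref{eqn:op_norms}. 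The only difference is presentational: the paper delegates the absolute-continuity-in-time step to a reference (Lemma~4.8 in~\cite{Dawson1987a}), whereas you spell out the separability and Lebesgue-differentiation argument explicitly.
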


\begin{proof}
Results of this kind are standard and we hence only sketch the proof. Consider the unique $v\in \mathcal
L^2_{\nabla,\chi}([0,T]\times\Lambda;\mathbb R^d)$ from~\eqref{eqn:op_riesz} and recall that $v_t\in\mathcal
L_{\nabla,\chi(\rho_t)}^2(\Lambda;\mathbb R^d)$ for a.a.~$t\in[0,T]$.

Following e.g.~Lemma 4.8 in~\cite{Dawson1987a} (see also~\cite{Duong2013b}), one shows that $\mathcal
E((\rho_t)_{t\in[0,T]})<\infty$ implies that $t\mapsto\langle\rho_t,\cdot\rangle$ is absolutely continuous in the sense of
distributions, such that the distributional derivative $\dot\rho_t\in\mathscr D'(\Lambda;\mathbb R)$ exists for
a.a.~$t\in(0,T)$. In our case, the latter satisfies for $G\in\mathscr D(\Lambda;\mathbb R)$ and a.a.~$t\in(0,T)$
\begin{equation}
  \label{eqn:chain_rule_2}
  \frac {\mathrm d}{\mathrm dt}\int_\Lambda \rho_t \;\! G\;\!\mathrm du
  =\langle \dot \rho_t,G\rangle = \int_\Lambda \chi(\rho_t) v_t\cdot \nabla G\;\!\mathrm du.
\end{equation}
Thus $\dot \rho_t = -\nabla\cdot (\chi(\rho_t) v_t)$ in the distributional sense for a.a.~$t\in(0,T)$, such that $v_t\in \mathcal
L_{\nabla,\chi(\rho_t)}^2(\Lambda;\mathbb R^d)$ can uniquely be identified with $\dot\rho_t$. Further the isometry from $\mathcal
L_{\nabla,\chi(\rho_t)}^2(\Lambda;\mathbb R^d)$ to $H_{\chi(\rho_t)}^{-1}(\Lambda;\mathbb R)$ (for a.a.~$t\in[0,T]$) implies that
$\dot\rho_t\in H_{\chi(\rho_t)}^{-1}(\Lambda;\mathbb R)$ and~\eqref{eqn:e_minus_one} also follows.
\end{proof}

Let $p\in[1,\infty]$. We say a path $(\pi_t)_{t\in[0,T]}$ is \emph{$p$-absolutely continuous (in the Wasserstein sense)}, if there
exists a function $m\in\mathcal L^p([0,T];\mathbb R)$, such that for any $0\le t_1<t_2\le T$
\begin{equation}
  \label{eqn:abs_cont_wasserstein}
  W_2(\pi_{t_1},\pi_{t_2}) \le \int_{t_1}^{t_2} m(s)\;\!\mathrm ds,
\end{equation}
where $W_2$ denotes the 2-Wasserstein distance~\cite{Villani2003a,Ambrosio2008b}. In this case, the metric derivative
(cf.~equation (1.1.3) in~\cite{Ambrosio2008b}) exists for a.a.~$t\in(0,T)$,
\begin{equation*}
  |\pi'_t|:= \limsup_{h\to 0} \biggl(\frac{W_2(\pi_{t},\pi_{t+h})}{h}\biggr) <\infty
\end{equation*}
and $t\mapsto |\pi_t'|$ is the minimal function that satisfies~\eqref{eqn:abs_cont_wasserstein}, see Theorem 1.1.2
in~\cite{Ambrosio2008b}. In other words, $(\pi_t)_{t\in[0,T]}$ is $p$-absolutely continuous if and only if the map $t\mapsto
|\pi_t'|$ is an element of $\mathcal L^p([0,T];\mathbb R)$.  From now on we consider the case $p=2$.

\begin{lemma}
  \label{lem:ac_equiv}
  A path $(\pi_t)_{t\in[0,T]}\in \mathcal D([0,T];\mathcal M_+(\Lambda))$ is 2-absolutely continuous if and only if there exists a
  vector field $\tilde v=(\tilde v_t)_{t\in[0,T]}$ with $\tilde v_t\in \mathcal L_{\nabla,\pi_t}^2(\Lambda;\mathbb R^d)$ and
  $\int_0^T \|\tilde v_t\|_{\pi_t}\;\!\mathrm dt<\infty$ that satisfies $\dot\pi_t + \nabla\cdot(\pi_t\tilde v_t) =0$ in the
  distributional sense for almost all $t\in[0,T]$.  In this case we have in particular $(\pi_t)_{t\in[0,T]}\in C([0,T];\mathcal
  M_+(\Lambda))$.
\end{lemma}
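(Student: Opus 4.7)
The plan is to invoke the standard characterization of absolutely continuous curves in the Wasserstein space (Theorem 8.3.1 in \cite{Ambrosio2008b}) and adapt it to the present setting of finite non-negative Radon measures on the torus $\Lambda$. Since the continuity equation preserves total mass and the statement is bi-homogeneous under rescaling of $\pi_t$ by a positive constant, one may first reduce to the case of probability measures (of common mass), where the Wasserstein distance $W_2$ is the usual one. The compactness of $\Lambda$ means that there is no issue with moment conditions or decay at infinity, which simplifies several of the AGS arguments.

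For the direction ``2-absolutely continuous $\Rightarrow$ continuity equation'': I would start from the fact that for an absolutely continuous path the metric derivative $|\pi_t'|$ exists a.e.\ in $(0,T)$ and lies in $\mathcal L^2([0,T];\mathbb R)$. For each pair of times one fixes an optimal plan $\gamma_{t,t+h}$ and considers the associated displacement; dividing by $h$ and passing to a weak limit (as in the construction of the tangent vector in Section~8.4 of \cite{Ambrosio2008b}) produces a Borel family of vector fields $\tilde v_t\in \mathcal L^2_{\nabla,\pi_t}(\Lambda;\mathbb R^d)$. Testing against smooth $G\in\mathscr D(\Lambda;\mathbb R)$ and using a telescoping/Taylor expansion of $t\mapsto \int G\,\mathrm d\pi_t$ shows that $\dot\pi_t + \nabla\cdot(\pi_t\tilde v_t)=0$ in $\mathscr D'(\Lambda;\mathbb R)$ for a.a.~$t$, with $\|\tilde v_t\|_{\pi_t}\le |\pi_t'|$, so that $t\mapsto\|\tilde v_t\|_{\pi_t}\in \mathcal L^2([0,T];\mathbb R)\subset \mathcal L^1([0,T];\mathbb R)$.

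For the converse direction: given $\tilde v$ with the stated integrability, I would approximate the characteristic flow of $\tilde v_t$ by solving $\dot X_t = \tilde v_t(X_t)$, after first mollifying $\tilde v$ so that classical ODE theory applies. The push-forward $(X_{t_2}\circ X_{t_1}^{-1})_\#\pi_{t_1}$ provides an admissible transport plan between $\pi_{t_1}$ and $\pi_{t_2}$, and its $L^2$ cost is bounded by $\bigl(\int_{t_1}^{t_2}\|\tilde v_s\|_{\pi_s}\,\mathrm ds\bigr)\cdot(t_2-t_1)^{1/2}$ via Jensen's inequality. Passing to the limit after removing the mollification (using the closedness of the continuity equation under weak convergence) yields $W_2(\pi_{t_1},\pi_{t_2}) \le \int_{t_1}^{t_2}\|\tilde v_s\|_{\pi_s}\,\mathrm ds$, which gives 2-absolute continuity with $m=\|\tilde v_\cdot\|_{\pi_\cdot}\in \mathcal L^2([0,T];\mathbb R)$. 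Continuity $(\pi_t)_{t\in[0,T]}\in C([0,T];\mathcal M_+(\Lambda))$ in the narrow topology then follows at once, because $W_2$ metrizes narrow convergence on the compact space $\Lambda$ (modulo the mass normalization), and the path is even $1/2$-H\"older in $W_2$.

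The main obstacle is the construction of the tangent vector field $\tilde v_t$ in the forward direction, where one must identify a Borel selection of displacement maps and verify the correct measurability and integrability; this is the technical core of the AGS result and is not trivial to reproduce from scratch. A secondary point is to confirm that the periodic geometry of $\Lambda$ does not introduce issues with optimal transport plans — this is straightforward because $\Lambda$ is a compact Riemannian manifold with bounded geometry, and the theory of \cite{Ambrosio2008b} extends verbatim. Given these ingredients, the lemma follows by combining the two directions and noting that the vector field produced in the forward direction is precisely a representative of the equivalence class in $\mathcal L^2_{\nabla,\pi_t}(\Lambda;\mathbb R^d)$ claimed in the statement.
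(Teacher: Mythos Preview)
Your forward direction agrees with the paper's: both simply invoke Theorem~8.3.1 of \cite{Ambrosio2008b} (the paper additionally cites Lemma~8.4.2 there to pin down $\tilde v_t\in\mathcal L^2_{\nabla,\pi_t}$), so there is nothing to compare.

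In the reverse direction the paper takes a different and shorter route than your flow reconstruction. It (i) uses H\"older together with $\sup_t\pi_t(\Lambda)<\infty$ to obtain $\int_0^T\!\int_\Lambda|\tilde v_t|\,\mathrm d\pi_t\,\mathrm dt<\infty$, (ii) applies Lemma~8.1.2 of \cite{Ambrosio2008b} to produce a narrowly \emph{continuous} modification $(\tilde\pi_t)_t$ of the curve, (iii) observes that a right-continuous path admitting a continuous modification is already continuous, hence $(\pi_t)_t=(\tilde\pi_t)_t$, and (iv) then feeds the now-continuous curve into the reverse implication of Theorem~8.3.1. No mollification or flow is needed.

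Your mollify-and-flow argument is essentially the engine behind Theorem~8.3.1 itself, so it can be made to work, but as written it skips the one point that is genuinely specific to the present lemma: the curve is only assumed to lie in $\mathcal D([0,T];\mathcal M_+(\Lambda))$, i.e.\ it is a priori merely c\`adl\`ag. Your flow $X_t$ produces a \emph{continuous} solution of the (mollified) continuity equation, and identifying its limit with the original c\`adl\`ag $(\pi_t)_t$ is not automatic from ``closedness under weak convergence''; a uniqueness statement for the continuity equation, or an argument like the paper's step~(iii), is needed. A minor point: the bound $W_2(\pi_{t_1},\pi_{t_2})\le\int_{t_1}^{t_2}\|\tilde v_s\|_{\pi_s}\,\mathrm ds$ follows directly from Minkowski's integral inequality for the displacement, so the extra Jensen factor $(t_2-t_1)^{1/2}$ in your sketch is superfluous.
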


\begin{proof}
The result follows from a modification of Lemma 8.1.2 and Theorem 8.3.1 in~\cite{Ambrosio2008b} to the domain $\Lambda$.  Assume
first that $(\pi_t)_{t\in[0,T]}$ is 2-absolutely continuous. Then Theorem 8.3.1 implies that the continuity equation $\dot\pi_t +
\nabla\cdot(\pi_t\tilde v_t) =0$ holds for some $\tilde v_t$, which can, by Lemma 8.4.2 in~\cite{Ambrosio2008b}, without loss of
generality be chosen to satisfy $\tilde v_t\in \mathcal L_{\nabla,\pi_t}^2(\Lambda;\mathbb R^d)$.

For the opposite implication we assume that the continuity equation holds and that moreover $\int_0^T \|\tilde
v_t\|_{\pi_t}\;\!\mathrm dt<\infty$. An application of the H\"older inequality combined with
$\sup_{t\in[0,T]}\pi_t(\Lambda)<\infty$ ensures that $\int_0^T\int_\Lambda |\tilde v_t(u)|\;\!\pi_t(\mathrm du)\;\!\mathrm
dt<\infty$. Lemma 8.1.2 thus implies that the curve has a weakly continuous modification $(\tilde\pi_t)_{t\in[0,T]}\in
C([0,T];\mathcal M_+(\Lambda))$. Now, since every right-continuous path that admits a continuous modification already has to be
continuous, we have $(\pi_t)_{t\in[0,T]}=(\tilde\pi_t)_{t\in[0,T]}$. This allows us to apply the reverse implication of Theorem
8.3.1 to $(\pi_t)_{t\in[0,T]}$, which yields that $(\pi_t)_{t\in[0,T]}$ is 2-absolutely continuous.
\end{proof}

The Wasserstein distance $W_2$ has a fluid dynamical representation in terms of the Brenier-Benamou formula (compare
Equation~(8.0.3) in~\cite{Ambrosio2008b} and Section~8.1 in~\cite{Villani2003a}). The distance of two measures
$\pi,\hat\pi\in\mathcal M_+(\Lambda)$ with $\pi(\Lambda)=\hat\pi(\Lambda)>0$ is given by
\begin{equation*}
  W_2^2(\pi,\hat\pi) = \inf \biggl\{ \int_0^1 \|\tilde v_t\|_{\mu_t}^2 \;\!\mathrm dt~ \Big|~ \mu_0=\pi,~ \mu_1
  =\hat\pi,~ \dot \mu_t + \nabla\cdot (\mu_t \tilde v_t)=0\biggr\},
\end{equation*}
where the infimum is taken over all 2-absolutely continuous paths of measures $(\mu_t)_{t\in[0,T]}$ {and velocities $\tilde v_t
  \in \mathcal L_{\nabla,\mu_t}^2(\Lambda;\mathbb R^d)$ satisfying the continuity equation above}.

Let $(\pi_t)_{t\in[0,T]}$ be absolutely continuous with respect to the Lebesgue measure with density $(\rho_t)_{t\in[0,T]}$. We
say that $(\rho_t)_{t\in[0,T]}$ is 2-absolutely continuous if $(\pi_t)_{t\in[0,T]}$ is 2-absolutely continuous. Moreover, we will
identify densities with their associated measures. In particular, we write $W_2^2(\rho,\hat\rho)=W_2^2(\pi,\hat\pi)$ for
$\pi(\mathrm du)=\rho(u)\mathrm du$ and $\pi(\mathrm du)=\rho(u)\mathrm du$.

\begin{proposition}
  \label{lem:continuity_lemma}
  Assume that $\mathcal E((\rho_t)_{t\in[0,T]})<\infty$ and that $\chi$ satisfies the assumptions of
  Section~\ref{sec:new-assump}. Then $(\rho_t)_{t\in[0,T]}$ is 2-absolutely continuous in the Wasserstein sense.
\end{proposition}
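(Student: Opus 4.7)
The plan is to reduce the statement, via Lemma~\ref{lem:ac_equiv}, to exhibiting a velocity field $\tilde v$ satisfying the unweighted continuity equation $\dot\rho_t + \nabla\cdot(\rho_t \tilde v_t) = 0$ together with $\int_0^T \|\tilde v_t\|_{\rho_t}\;\!\mathrm dt < \infty$. Such a $\tilde v$ will be constructed from the $\chi(\rho_t)$-weighted velocity $v$ that Proposition~\ref{prop:e_rep} already furnishes, by rescaling the weight. The key ingredient enabling the rescaling is the sub-linear bound $\chi(a) \leq C_{\mathrm{Lip}} a$ (with $\chi(0)=0$) from Section~\ref{sec:new-assump}, which tames the singularity at points where $\rho_t$ vanishes.

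Concretely, I will first apply Proposition~\ref{prop:e_rep} to obtain $v = (v_t)_{t\in[0,T]} \in \mathcal L^2_{\nabla,\chi}([0,T]\times\Lambda;\mathbb R^d)$ satisfying $\dot\rho_t + \nabla\cdot(\chi(\rho_t) v_t) = 0$ distributionally for a.a.~$t$, with $\int_0^T \|v_t\|^2_{\chi(\rho_t)}\;\!\mathrm dt = 2\;\!\mathcal E((\rho_t)_{t\in[0,T]}) < \infty$. I then set
\[
  \tilde v_t := \frac{\chi(\rho_t)}{\rho_t}\;\! v_t \quad \text{on } \{\rho_t > 0\}, \qquad \tilde v_t := 0 \quad \text{on } \{\rho_t = 0\}.
\]
Since $\chi(0)=0$, the product $\rho_t \tilde v_t$ coincides with $\chi(\rho_t) v_t$ a.e., so the continuity equation required by Lemma~\ref{lem:ac_equiv} holds automatically. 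The pointwise inequality $\chi(\rho_t)^2/\rho_t \leq C_{\mathrm{Lip}}\;\!\chi(\rho_t)$ then gives
\[
  \|\tilde v_t\|^2_{\rho_t} = \int_\Lambda \frac{\chi(\rho_t)^2}{\rho_t}|v_t|^2\;\!\mathrm du \leq C_{\mathrm{Lip}} \|v_t\|^2_{\chi(\rho_t)},
\]
and a Cauchy--Schwarz estimate yields $\int_0^T \|\tilde v_t\|_{\rho_t}\;\!\mathrm dt \leq \sqrt{2\;\!T\;\!C_{\mathrm{Lip}}\;\!\mathcal E((\rho_t)_{t\in[0,T]})} < \infty$.

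Finally, I will invoke Lemma~\ref{lem:ac_equiv} to conclude 2-absolute continuity. The main technical obstacle is that this lemma demands $\tilde v_t \in \mathcal L^2_{\nabla,\rho_t}(\Lambda;\mathbb R^d)$, while our candidate lies a priori only in $L^2(\Lambda,\rho_t\;\!\mathrm du;\mathbb R^d)$. I plan to dispose of this via the standard orthogonal-projection argument (Lemma~8.4.2 in~\cite{Ambrosio2008b}): replacing $\tilde v_t$ by its orthogonal projection onto the closed subspace $\mathcal L^2_{\nabla,\rho_t}$ preserves the continuity equation, since every element of the complement has vanishing weak divergence against $\rho_t$, and can only decrease the $\|\cdot\|_{\rho_t}$-norm. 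With this replacement all hypotheses of Lemma~\ref{lem:ac_equiv} are met, and the narrow continuity clause recorded there comes as a free by-product, completing the argument.
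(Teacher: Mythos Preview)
Your proof is correct and follows essentially the same route as the paper: both rescale the $\chi(\rho_t)$-weighted velocity $v_t$ from Proposition~\ref{prop:e_rep} to $\tilde v_t=\chi(\rho_t)v_t/\rho_t$ and use the Lipschitz bound $\chi(a)\le C_{\rm Lip}a$ to obtain $\|\tilde v_t\|_{\rho_t}^2\le C_{\rm Lip}\|v_t\|_{\chi(\rho_t)}^2$. The only difference is packaging: the paper plugs this $\tilde v$ directly into the Benamou--Brenier formula to bound $W_2(\rho_{t_1},\rho_{t_2})$ and hence the metric derivative (yielding the explicit pointwise estimate~\eqref{eqn:metric_deriv}, which is later reused in the proof of Theorem~\ref{thm:chain-rule}), whereas you route through Lemma~\ref{lem:ac_equiv} and add the projection onto $\mathcal L^2_{\nabla,\rho_t}$.
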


\begin{proof}
We choose the time rescaling $\bar t = t(t_2-t_1)+t_1$ and set $\mu_t = \rho_{\bar t}$ and $\tilde v_t = (t_2-t_1)(\chi(\rho_{\bar
  t}) v_{\bar t})/\rho_{\bar t}$, such that $\dot \mu_t + \nabla\cdot (\mu_t \tilde v_t)=0$ by construction. We obtain for all
$0\le t_1<t_2\le T$
\begin{equation*}
  W_2^2(\rho_{t_1},\rho_{t_2})\le (t_2-t_1)\int_{t_1}^{t_2} \| (\chi(\rho_t)v_t)/\rho_t\|_{\rho_t}^2 \;\!\mathrm dt 
  \le  (t_2-t_1) \int_{t_1}^{t_2} C_{\rm Lip} \| v_t\|_{\chi(\rho_t)}^2 \;\!\mathrm dt<\infty,
\end{equation*}
such that the metric derivative satisfies for almost all $t\in[0,T)$
\begin{equation}
  \label{eqn:metric_deriv}
  |\rho'_t|= \limsup_{h\to 0} \biggl(\frac{W_2(\rho_{t},\rho_{t+h})}{h}\biggr) \le \sqrt{C_{\rm Lip}}\|v_t\|_{\chi(\rho_t)}.
\end{equation}
The square integrability of the right hand side now implies that $(\rho_t)_{t\in[0,T]}$ is 2-absolutely continuous.
\end{proof} 

\begin{proposition}
  \label{prop:e_star_rep}
  Assume that $\mathcal E^\star\bigl((\rho_t)_{t\in[0,T]}\bigr)<\infty$ and that $f,\phi$ and $\chi$ satisfy the assumptions of
  Section~\ref{sec:new-assump}. Then
  \begin{multline}
    \label{eqn:e_star_minus_one}
    \mathcal E^\star\bigl((\rho_t)_{t\in[0,T]}\bigr)
    = \frac12 \int_0^T \|\Delta\phi(\rho_t)+\nabla\cdot(\chi(\rho_t)\nabla V)\|_{-1,\chi(\rho_t)}^2 \;\!\mathrm dt\\
    = \frac12 \int_0^T \|f''(\rho_t)\nabla \rho_t+\nabla V\|_{\chi(\rho_t)}^2 \;\!\mathrm dt.
  \end{multline}
\end{proposition}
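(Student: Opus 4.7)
The strategy parallels the proof of Proposition~\ref{prop:e_rep}. The first equality comes from the Riesz representation together with the isometry $\mathcal L^2_{\nabla,\chi(\rho_t)}\cong H^{-1}_{\chi(\rho_t)}$ provided by the map $u\mapsto -\nabla\cdot(\chi(\rho_t)u)$, and the second equality requires a chain-rule argument for weak derivatives.

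First, reuse the Riesz representation~\eqref{eqn:op_riesz}: the unique $w=(w_t)_{t\in[0,T]}\in\mathcal L^2_{\nabla,\chi}([0,T]\!\times\!\Lambda;\mathbb R^d)$ satisfies $w_t\in\mathcal L^2_{\nabla,\chi(\rho_t)}(\Lambda;\mathbb R^d)$ for a.a.\ $t$ and, by~\eqref{eqn:op_norms}, $\mathcal E^\star((\rho_t)_{t\in[0,T]})=\tfrac12\int_0^T\|w_t\|_{\chi(\rho_t)}^2\,\mathrm dt$. Testing the identity $L_{\mathcal E^\star}(h)=\int_0^T\int_\Lambda\chi(\rho_t)w_t\cdot h_t\,\mathrm du\,\mathrm dt$ against $h_t=\nabla G$ for $G\in\mathscr D(\Lambda;\mathbb R)$ combined with a time localisation and Fubini yields, for a.a.\ $t\in[0,T]$,
\begin{equation*}
  -\nabla\cdot\bigl(\chi(\rho_t)w_t\bigr)=\Delta\phi(\rho_t)+\nabla\cdot\bigl(\chi(\rho_t)\nabla V\bigr)\qquad\text{in }\mathscr D'(\Lambda;\mathbb R).
\end{equation*}
Invoking the isometry (as recorded after the definition of $H^{-1}_\pi$), the right-hand side lies in $H^{-1}_{\chi(\rho_t)}(\Lambda;\mathbb R)$ with norm equal to $\|w_t\|_{\chi(\rho_t)}$. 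Integrating over $t$ gives the first equality in~\eqref{eqn:e_star_minus_one}.

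For the second equality I would show that $-w_t$ admits the representative $f''(\rho_t)\nabla\rho_t+\nabla V$ in $\mathcal L^2_{\nabla,\chi(\rho_t)}$. Since $\chi$ is bounded and $w_t\in\mathcal L^2_{\nabla,\chi(\rho_t)}$, the vector field $\chi(\rho_t)(w_t+\nabla V)$ is in $L^2(\Lambda;\mathbb R^d)$ and has distributional divergence equal to $-\Delta\phi(\rho_t)$; hence $\phi(\rho_t)\in H^1_{\mathrm{loc}}(\Lambda)$ by standard elliptic regularity, with weak gradient characterised (up to a divergence-free part) by $\nabla\phi(\rho_t)=-\chi(\rho_t)(w_t+\nabla V)$. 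Because $w_t$ lies in $\mathcal L^2_{\nabla,\chi(\rho_t)}$ — the closure of true gradients — and $\nabla V$ is a gradient, the divergence-free ambiguity vanishes in the weighted inner product and equality holds as elements of $\mathcal L^2_{\nabla,\chi(\rho_t)}$. The Einstein relation~\eqref{eqn:local_einstein_rel} gives $\phi'=f''\chi$, so the classical chain rule for weak derivatives of Lipschitz monotone functions (applied to $\phi^{-1}$, whose derivative is bounded by the assumption $\phi'\ge C_*>0$ in Section~\ref{sec:new-assump}) yields $\nabla\phi(\rho_t)=\chi(\rho_t)f''(\rho_t)\nabla\rho_t$, where $\nabla\rho_t=\phi'(\rho_t)^{-1}\nabla\phi(\rho_t)$ is the ensuing weak gradient of $\rho_t$. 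Dividing by $\chi(\rho_t)$ (formally; rigorously, multiplying the $\chi(\rho_t)$-weighted identity on both sides) identifies $w_t=-f''(\rho_t)\nabla\rho_t-\nabla V$, and squaring and integrating finishes the proof.

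The main obstacle is Step~2: justifying that $\phi(\rho_t)\in H^1_{\mathrm{loc}}$ with the claimed gradient, and that the identification modulo divergence-free fields becomes equality within the space $\mathcal L^2_{\nabla,\chi(\rho_t)}$ of weighted gradients. The former is the standard fact that a distribution whose Laplacian is a divergence of an $L^2$ vector field lies in $H^1_{\mathrm{loc}}$; the latter uses that the weighted Hodge/Helmholtz decomposition on the torus separates $\mathcal L^2_{\nabla,\chi(\rho_t)}$ from $\chi(\rho_t)^{-1}$ times divergence-free fields, so two elements of $\mathcal L^2_{\nabla,\chi(\rho_t)}$ whose $\chi(\rho_t)$-weighted images have the same distributional divergence must coincide. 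The lower bound $\phi'\ge C_*$ and Lipschitz continuity of $\phi,\chi$ from Section~\ref{sec:new-assump} are precisely what is needed to propagate the regularity from $\phi(\rho_t)$ to $\rho_t$ and justify the chain rule.
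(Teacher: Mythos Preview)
Your proposal is correct and follows essentially the same route as the paper: obtain the Riesz representative $w_t$, establish $\phi(\rho_t)\in W^{1,1}_{\rm loc}$ (the paper cites Appendix~D.6 in Feng--Kurtz rather than invoking elliptic regularity, but the content is identical), apply the chain rule via the $C^1$ function $\phi^{-1}$ with bounded derivative to pass from $\nabla\phi(\rho_t)$ to $\nabla\rho_t$, and use the Einstein relation $\phi'=\chi f''$ to identify $w_t$ with $\pm(f''(\rho_t)\nabla\rho_t+\nabla V)$. Your explicit discussion of the divergence-free ambiguity via a weighted Helmholtz decomposition is a point the paper treats more tersely (it simply asserts the identification after testing against all $\nabla G$); your formulation is a legitimate and slightly more careful way of saying the same thing.
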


\begin{proof}
$\mathcal E^\star\bigl((\rho_t)_{t\in[0,T]}\bigr)<\infty$ implies that the distributional derivative of $\phi(\rho_t)\in \mathcal
  L^1_{\rm loc}(\Lambda;\mathbb R)$ satisfies $\nabla\phi(\rho_t)\in \mathcal L^1_{\rm loc}(\Lambda;\mathbb R^d)$ for
  a.a.~$t\in[0,T]$ (cf.~Appendix D.6~in~\cite{Feng2006a}). Equivalently, $\phi(\rho_t)\in W^{1,1}_{\rm loc}(\Lambda;\mathbb R)$
  for a.a.~$t\in[0,T]$. The first identity in~\eqref{eqn:e_star_minus_one} can be established as in Appendix
  D.6~in~\cite{Feng2006a} (for the choice $\mu(\mathrm du) = \chi(\rho_t(u))\mathrm du$). We turn to the second identity. Since
  $\phi^{-1}$ is continuously differentiable with bounded derivative, we obtain by the chain rule for functions in $W^{1,1}_{\rm
    loc}(\Lambda;\mathbb R)$ with bounded derivative (see e.g.~Theorem~4~(ii) in~\cite{Evans1992a}) that also $\nabla \rho_t \in
  \mathcal L^1_{\rm loc}(\Lambda;\mathbb R)$, and thus $\rho_t \in W^{1,1}_{\rm loc}(\Lambda;\mathbb R)$, for almost all
  $t\in[0,T]$. The derivative is for almost all $u\in\Lambda$ given by
\begin{equation}
  \label{eqn:sobolev_chain_rule_0}
  \nabla \rho_t(u) = (\phi^{-1})'(\phi(\rho_t(u)))\nabla \phi(\rho_t(u)) = \frac{\nabla \phi(\rho_t(u))}{\phi'(\rho_t(u))},
\end{equation}
where the last identity follows from the Implicit Function Theorem. Multiplying with $\phi'(\rho_t)$ and using the local Einstein
relation~\eqref{eqn:local_einstein_rel} we obtain that almost everywhere
\begin{equation}
  \label{eqn:sobolev_chain_rule}
  \nabla \phi(\rho_t) = \phi'(\rho_t)\nabla \rho_t = \chi(\rho_t) f''(\rho_t)\nabla \rho_t.
\end{equation}
Combined with $w$ in~\eqref{eqn:op_norms}, we have for any $G\in\mathscr D(\Lambda;\mathbb R)$ and almost all $t\in[0,T]$ that
\begin{equation*}
  \int_\Lambda \chi(\rho_t) w_t\cdot \nabla G \;\!\mathrm du
  =\int_\Lambda \bigl(\nabla \phi(\rho_t)+\chi(\rho_t)\nabla V\bigr)\cdot \nabla G\;\!\mathrm du
  =\int_\Lambda \chi(\rho_t) [f''(\rho_t)\nabla\rho_t+\nabla V]\cdot \nabla G\;\!\mathrm du
\end{equation*}
such that we can identify $w_t=f''(\rho_t)\nabla\rho_t+\nabla V$. Substituting this identity in~\eqref{eqn:op_norms} yields the
final result.
\end{proof}

\subsection{Chain Rule for the Free Energy}
\label{sec:chain_rule_for_F_V}

In this section, we prove Theorem~\ref{thm:chain-rule}, which establishes rigorously the validity of the macroscopic chain
rule~\eqref{eqn:chain_rule_formal}, for which we so far gave only a formal derivation. Consider a given path
$(\rho_t)_{t\in[0,T]}$ that satisfies $\mathbb A((\rho_t)_{t\in[0,T]})<\infty$.  We restrict ourselves to densities $\rho,\hat\rho
\in \mathcal L^1(\Lambda;[0,\infty))$ s.t.~$\int_\Lambda\rho\;\!\mathrm du=\int_\Lambda\hat \rho\;\!\mathrm du>0$ and continue to
  identify densities with measures. The constant volume implies that free energy differences do not depend on $\alpha$. Indeed,
  defining $\mathcal F(\rho) := \int_\Lambda f(\rho(u))\mathrm du$ and $\mathcal V(\rho) := \int_\Lambda V(u)\rho(u)\mathrm du$
  (for $V\in C^2(\Lambda;\mathbb R)$), we can define an $\alpha$-independent modification of the free energy
\begin{equation}
  \label{eqn:mod_free_energy}
  \mathcal F^V(\rho) :=  \mathcal F(\rho) + \mathcal V(\rho),
\end{equation}
which is (with~\eqref{eqn:1st_free_energy}) easily seen to satisfy $\mathcal F_\alpha^V(\hat\rho)-\mathcal
F_\alpha^V(\rho)=\mathcal F^V(\hat\rho)-\mathcal F^V(\rho)$.

We assume that $f\in C^2([0,\infty);\mathbb R)$ satisfies the assumptions in Section~\ref{sec:new-assump}, such that the
  functional $\mathcal F\colon \mathcal L^1(\Lambda;[0,\infty))\to (-\infty,\infty]$ is proper and lower-semicontinuous (see
  Remark 9.3.8 in~\cite{Ambrosio2008b}).  Note that for $N_{\max}=\infty$ the assumption $\lim_{r\to N_{\max}}f'(r)=\infty$
  implies super linearity of $f$.

We set
\begin{equation*}
  L_f(a):=a f'(a)-f(a) = \int_0^a r f''(r) \;\!\mathrm dr
\end{equation*}
and note the similarity to $\phi(a) = \int_0^a \phi'(r)\;\!\mathrm dr = \int_0^a \chi(r) f''(r)\;\!\mathrm dr$ (where we again
used the local Einstein relation~\eqref{eqn:local_einstein_rel}); in particular $L_f'(a)/a = f''(a)= \phi'(a)/\chi(a)$. The
quantity $L_f$ is sometimes referred to as a `pressure' function due to its relation to the thermodynamic pressure in classical
thermodynamics, see e.g.~Remark 5.18 (ii) in~\cite{Villani2003a}.

We denote the (2-)Wasserstein distance between $\rho$ and $\hat\rho$ with $W_2(\rho,\hat\rho)$. A constant speed geodesic
(connecting $\rho$ to $\hat\rho$) is a curve $(\rho_t)_{t\in[0,1]}$ such that ($\rho_0=\rho$, $\rho_1=\hat\rho$ and)
$W_2(\rho_s,\rho_t)=|t-s|W_2(\rho,\hat\rho)$ for all $s,t\in[0,T]$. With this, a functional $\mathcal G$ is called
$\lambda$-convex (also called semi-convex) for $\lambda\in\mathbb R$ if the inequality
\begin{equation}
  \label{eqn:lambda_convexity_xx}
  \mathcal G(\rho_t) \le (1-t)\mathcal G(\rho_0) + t \mathcal G(\rho_1) -\frac \lambda 2 t(1-t)W_2^2(\rho_0,\rho_1)
\end{equation}
holds for each constant speed geodesic $(\rho_t)_{t\in[0,1]}$. Note that if two functionals $\mathcal G_i$ are $\lambda_i$-convex
for $i=1,2$, then clearly $\mathcal G_1+\mathcal G_2$ is $\lambda$-convex with $\lambda = \min(\lambda_1,\lambda_2)$.

We call $\mathcal G$ geodesically convex if the map $t\mapsto \mathcal G(\rho_t)$ is convex for any geodesic
$(\rho_t)_{t\in[0,1]}$ (which is equivalent to $\lambda$-convexity for $\lambda=0$). A useful criterion for geodesic convexity of
the free energy $\mathcal F$ is the McCann condition (see Proposition~9.3.9 and equation~(9.3.11) in~\cite{Ambrosio2008b}): A
convex function $f\in C^2([0,\infty);\mathbb R)$ with $f(0)=0$ satisfies the McCann condition (in $d$ dimensions) if the map
\begin{equation}
  \label{eqn:mccann}
  s\mapsto s^d f(s^{-d})
\end{equation}
is convex on $(0,\infty)$ (cf.~the discussion in Section~9.3~in~\cite{Ambrosio2008b}). In the case $d=1$, convexity of $f$ is
sufficient to establish geodesic convexity. For a potential energy of the form $\mathcal V(\rho) = \int_\Lambda V(u)\rho(u)\mathrm
du$ $\lambda$-convexity is equivalent to $\lambda$-convexity (also called strong convexity) of $V$ on $\Lambda$ (see equation
(9.3.3) and Proposition~9.3.2 in~\cite{Ambrosio2008b}), which is $V((1-t)x + ty) \le (1-t)V(x) + t V(y) -(\lambda/2)
t(1-t)\|x-y\|^2$.  For $V\in C^2(\Lambda;\mathbb R)$ the Hessian matrix is bounded and this assumption is trivially
satisfied. Note that under the assumption that $\mathcal F$ is geodesically-convex and $\mathcal V$ is $\lambda$-convex for some
$\lambda\le 0$, also $\mathcal F^V$ is $\lambda$-convex.

\subsubsection{Assumptions for Chain Rule}
\label{sec:assumpt_chainrule_yy}

To our knowledge, minimal sufficient conditions for the validity of a chain rule of the form~\eqref{eqn:chain_rule_formal} are
still an open question. One difficulty is that the existing theory requires $\lambda$-convexity of the functional in question. In
the case of independent particles (with $\chi(a)=\phi(a)=a$) sufficient conditions for $\lambda$-convex functionals can be
obtained from the general theory for gradient flows in Wasserstein spaces, which was established in~\cite{Ambrosio2008b} (see
also~\cite{Villani2003a,Santambrogio2015a}).  We note that generalisations of the gradient flow theory in Wasserstein spaces with
non-linear (usually concave) mobilities have been considered in the literature, see
e.g.~\cite{Lisini2006a,Lisini2009a,Lisini2010a,Dolbeault2009a,Dirr2016a}. Yet, establishing the chain rule in a weighted
Wasserstein metric is fraught with technical difficulties, in particular $\lambda$-convexity of the functional. We overcome this
difficulty here by showing that in the setting studied here, where a weighted Wasserstein metric is the natural space, the chain
rule can be established in an unweighted (classical) Wasserstein setting, where strong tools are available.

In this section, we establish the chain rule~\eqref{eqn:chain_rule_formal} in the special case that {the density $f$ of the free
  energy $\mathcal F^V$ satisfies the McCann condition for geodesic convexity~\eqref{eqn:mccann}} and the particle process is `not
too far away' from the process with independent particles (where $\chi(a)=\phi(a)=a)$: We consider the case $N_{\max}=\infty$ and
assume there exists $C_*>0$ (without loss of generality the same constant which bounds $\phi'(a)$ from below) such that
\begin{equation}
  \label{eqn:lower_bound_chi}
  C_*\le \chi'(a)
\end{equation}
for almost all $a\in(0,\infty)$.  This implies that $C_*\le \chi'(a),\phi'(a)\le C_{\rm Lip}$, such that also $C_*a\le
\chi(a),\phi(a)\le C_{\rm Lip}a$. We obtain for any $\rho\in \mathcal L^1(\Lambda;[0,\infty))$ that the norms $\|\cdot \|_\rho$
  and $\|\cdot \|_{\chi(\rho)}$ are equivalent,
\begin{equation}
  \label{eqn:equiv_of_norms}
  C_*\|\cdot \|_{\rho}\le \|\cdot\|_{\chi(\rho)}\le C_{\rm Lip}\|\cdot\|_\rho.
\end{equation}
In this case also the limit points coincide such that $\mathcal L_{\nabla,\chi(\rho)}^2(\Lambda;\mathbb R^d)=\mathcal
L_{\nabla,\rho}^2(\Lambda;\mathbb R^d)$.  This will allow us to leverage results from the classical Wasserstein framework
in~\cite{Ambrosio2008b}.

\paragraph{Remark.} The Lipschitz continuity of $\chi(a)$ implies that $\mathcal L_{\nabla,\rho}^2(\Lambda;\mathbb
R^d)\subseteq \mathcal L_{\nabla,\chi(\rho)}^2(\Lambda;\mathbb R^d)$.  In general, this is a strict inclusion (consider e.g.~the
case of the SEP with $\chi(a)=a(1-a)$ and $\rho=1$ on a subset $O\subseteq\Lambda$ with positive Lebesgue measure).  A (weaker,
density $\rho$ dependent) condition for the opposite inclusion to hold is
\begin{equation*}
  \inf_{u\in\Lambda}\frac{\chi(\rho(u))}{\rho(u)}>0,
\end{equation*}
which can in this case replace the constant in the lower bound of~\eqref{eqn:equiv_of_norms}. Note that this is a density specific
condition, whereas the above condition~\eqref{eqn:lower_bound_chi} is a model specific condition (which is independent of
$\rho$). For the SEP, this condition is satisfied precisely in the case when $\rho$ is bounded away from the maximal possible
local particle density, i.e.~$\rho\le N_{\max}-\epsilon$ (for some $\epsilon>0$).  The same considerations show that in general
$\mathcal L^2_{\nabla,\rm id}([0,T]\times\Lambda;\mathbb R^d)\subseteq \mathcal L^2_{\nabla,\chi}([0,T]\times\Lambda;\mathbb R^d)$
and that~\eqref{eqn:lower_bound_chi}, or alternatively
\begin{equation*}
  \inf_{(t,u)\in[0,T]\times\Lambda}\frac{\chi(\rho_t(u))}{\rho_t(u)}>0,
\end{equation*}
ensures that $\mathcal L^2_{\nabla,\rm id}([0,T]\times\Lambda;\mathbb R^d) = \mathcal L^2_{\nabla,\chi}([0,T]\times\Lambda;\mathbb R^d)$.

\subsubsection{Validity of the Chain Rule}
\label{sec:Validity-Chain-Rule}

The following results, which are mainly based on Chapter~9 and~10 in~\cite{Ambrosio2008b}, relate $L_f(\rho)$ to the directional
derivative, the Fr\'echet-subdifferential, and the metric slope of $\mathcal F(\rho)$. Below we sketch results which can be
obtained by a suitable modification of the results in~\cite{Ambrosio2008b}. More precisely, we are interested in the case where
the domain is $\Lambda=\mathbb T^d$ and the measures of interest are absolutely continuous with respect to the Lebesgue measure.

As shown in Theorem 1.25 in~\cite{Santambrogio2015a} there exists for any $\rho,\hat\rho\in\mathcal L^1(\Lambda;[0,\infty))$ with
  $\int_\Lambda \rho\;\!\mathrm du=\int_\Lambda \hat\rho\;\!\mathrm du>0$ a unique optimal transport map from $\rho$ to $\hat\rho$
  of the form $r=i-\nabla \varphi$, where $\varphi$ is semi-concave (i.e.~there exists a constant $C>0$ such that $\varphi(u)-C
  |u|^2$ is concave).  Moreover, the interpolation $r_t:=(1-t)i+tr$ between $r$ and the identity $i$ on $\Lambda$ is such that
  $(r_t)_\#\rho$ has a Lebesgue density for all $t\in[0,1]$ (which can e.g. be shown by a modification of the proof of
  Proposition~9.3.9. in~\cite{Ambrosio2008b}).

Now, assume that $f$ satisfies the McCann condition for geodesic convexity~\eqref{eqn:mccann}, that $\mathcal F(\rho),\mathcal
F(\hat\rho)<\infty$, and that $L_f(\rho)\in W^{1,1}(\Lambda;\mathbb R)$. Then
\begin{equation*}
  \int_\Lambda \nabla [ L_f(\rho) ]\cdot(r-i)\;\!\mathrm du
  \le -\int_\Lambda L_f(\rho) \operatorname{tr}\tilde \nabla (r-i)\;\!\mathrm du
  = \lim_{t\searrow 0}\frac{\mathcal F((r_t)_\#\rho)-\mathcal F(\rho)}{t}<\infty,
\end{equation*}
where $\tilde \nabla r$ denotes the approximate derivative (see Definition~5.5.1 in~\cite{Ambrosio2008b}) and $i$ is the identity
on $\Lambda$. This result can be obtained from a modification of the proofs of Lemma~10.4.4 and Lemma~10.4.5
in~\cite{Ambrosio2008b}.

For a $\lambda$-convex functional $\mathcal G$, the Fr\'echet-subdifferential $\partial \mathcal G (\rho)$ at $\rho\in\mathcal
L^1(\Lambda;[0,\infty))$ with $\int_\Lambda \rho\;\!\mathrm du>0$ consists of all vectors $\zeta\in \mathcal
  L^2_\rho(\Lambda;\mathbb R^d):=\{\zeta\colon\Lambda\to\mathbb R^d: \|\zeta\|_\rho<\infty\}$ such that for all
  $\hat\rho\in\mathcal L^1(\Lambda;[0,\infty))$ with $\int_\Lambda \rho\;\!\mathrm du=\int_\Lambda \hat\rho\;\!\mathrm du$
\begin{equation}
  \label{eqn:frechet_subdiff}
  \mathcal G(\hat\rho)-\mathcal G(\rho) \ge \int_\Lambda \zeta\cdot (r-i) \rho\;\!\mathrm du + \frac \lambda 2 W_2^2(\rho,\hat\rho),
\end{equation}
where $r$ is the optimal transport map from $\rho$ to $\hat\rho$ (see Equation~(10.1.7) in~\cite{Ambrosio2008b}).

\begin{lemma}[Slope and subdifferential, cf.~Theorem 10.4.6 in~\cite{Ambrosio2008b}]
  \label{lem:slope_diff}
  Assume that $f$ satisfies the McCann condition for geodesic convexity~\eqref{eqn:mccann}. For $\rho\in\mathcal
  L^1(\Lambda;[0,\infty))$ with $\int_\Lambda \rho\;\!\mathrm du>0$ and $\mathcal F(\rho)<\infty$ the following statements are
    equivalent.
  \begin{enumerate}
  \item \label{it:slope1} The Fr\'echet-subdifferential~\eqref{eqn:frechet_subdiff} is non-empty,
    $\partial \mathcal F^V(\rho)\not=\emptyset$.
  \item \label{it:slope2} The metric derivative at $\rho$ is finite,
    \begin{equation*}|\partial \mathcal F^V|(\rho):=\limsup_{W_2(\rho,\hat\rho)\to 0} \frac{(\mathcal F^V(\rho)
        -\mathcal F^V(\hat\rho))^+}{W_2(\rho,\hat\rho)} <\infty.
    \end{equation*}
  \item \label{it:slope3} $L_f(\rho)\in W^{1,1}_{\rm loc}(\Lambda;\mathbb R)$ with $\nabla [L_f(\rho)] + \rho\nabla V=\rho w$ for
    some $w\in \mathcal L_{\nabla,\rho}^2(\Lambda;\mathbb R^d)$.
    \newcounter{enumiold} 
    \setcounter{enumiold}{\value{enumi}}
  \end{enumerate}
  If either of the above holds we have $w\in\partial\mathcal F(\rho)$ and $\|w\|_\rho = |\partial \mathcal F|(\rho)$. Moreover, if
  the additional assumption~\eqref{eqn:lower_bound_chi} holds, then the above conditions are also equivalent to
  \begin{enumerate}
    \addtocounter{enumi}{\value{enumiold}}
  \item \label{it:slope4} $\phi(\rho)\in W^{1,1}_{\rm loc}(\Lambda;\mathbb R)$ with $\nabla [\phi(\rho)] + \chi(\rho)\nabla
    V=\chi(\rho)w$ for some $w\in \mathcal L_{\nabla,\chi(\rho)}^2(\Lambda;\mathbb R^d)$.
  \end{enumerate}
\end{lemma}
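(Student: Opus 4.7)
The plan is to adapt Theorem 10.4.6 in~\cite{Ambrosio2008b} to our setting (torus $\Lambda=\mathbb{T}^d$, inclusion of the external potential $V$, and the conversion between $L_f$ and the pair $\phi,\chi$). Since $V\in C^2(\Lambda;\mathbb{R})$ on a compact domain is automatically $\lambda$-convex for some $\lambda\le 0$, and since $f$ is assumed to satisfy the McCann condition, the functional $\mathcal{F}^V = \mathcal{F}+\mathcal{V}$ is $\lambda$-geodesically convex. This is the structural input that makes the chain of equivalences go through.

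The implication (\ref{it:slope1})$\Rightarrow$(\ref{it:slope2}) is immediate from convex analysis: given $\zeta\in\partial\mathcal{F}^V(\rho)$, write the defining inequality~\eqref{eqn:frechet_subdiff} against a sequence $\hat\rho_n$ with $W_2(\rho,\hat\rho_n)\to 0$; Cauchy--Schwarz applied to $|\int\zeta\cdot(r_n-i)\rho\,\mathrm du|\le\|\zeta\|_\rho W_2(\rho,\hat\rho_n)$ yields $|\partial\mathcal{F}^V|(\rho)\le\|\zeta\|_\rho<\infty$. For (\ref{it:slope3})$\Rightarrow$(\ref{it:slope1}) I would use geodesic convexity directly: along the displacement interpolation from $\rho$ to $\hat\rho$, the result of Chapter~9 of~\cite{Ambrosio2008b} (recalled just before the lemma) gives
\begin{equation*}
\mathcal{F}(\hat\rho)-\mathcal{F}(\rho)\ge -\int_\Lambda L_f(\rho)\,\operatorname{tr}\tilde\nabla(r-i)\,\mathrm du,
\end{equation*}
and integration by parts (valid because $L_f(\rho)\in W^{1,1}_{\mathrm{loc}}$) turns the right-hand side into $\int\nabla[L_f(\rho)]\cdot(r-i)\,\mathrm du$. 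Combining this with the $\lambda$-convexity estimate for $\mathcal{V}$ yields the subdifferential inequality for the candidate $w$ from (\ref{it:slope3}), so $w\in\partial\mathcal{F}^V(\rho)$ and in particular it is nonempty. The inequalities obtained in both directions pin down $\|w\|_\rho=|\partial\mathcal{F}^V|(\rho)$.

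The main obstacle is (\ref{it:slope2})$\Rightarrow$(\ref{it:slope3}), which is where~\cite{Ambrosio2008b} uses the most machinery. I would argue as follows. For any $\hat\rho$ absolutely continuous with the same mass as $\rho$, let $r=i-\nabla\varphi$ be the optimal transport; the interpolation $r_t=(1-t)i+tr$ produces absolutely continuous measures $(r_t)_\#\rho$, and the one-sided derivative formula recalled in the excerpt gives
\begin{equation*}
\limsup_{t\searrow 0}\frac{\mathcal{F}^V(\rho)-\mathcal{F}^V((r_t)_\#\rho)}{t}\le \int_\Lambda L_f(\rho)\operatorname{tr}\tilde\nabla(r-i)\,\mathrm du - \int_\Lambda \nabla V\cdot(r-i)\rho\,\mathrm du.
\end{equation*}
By assumption (\ref{it:slope2}) the left-hand side is bounded by $|\partial\mathcal{F}^V|(\rho)\,\|r-i\|_\rho$. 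Varying $r-i$ over a sufficiently rich class of smooth vector fields $\xi\in\mathscr D(\Lambda;\mathbb R^d)$ (small enough so $i+\xi$ is an optimal map between its endpoints, as in Lemma~10.4.5 of~\cite{Ambrosio2008b}), this inequality identifies the distributional gradient of $L_f(\rho)$ with $\rho w-\rho\nabla V$ for some $w\in\mathcal L^2_{\nabla,\rho}(\Lambda;\mathbb R^d)$ satisfying $\|w\|_\rho\le|\partial\mathcal{F}^V|(\rho)$, establishing (\ref{it:slope3}). The torus setting actually simplifies the truncation arguments used in $\mathbb R^d$ since $\Lambda$ is compact and $V$ bounded.

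Finally, (\ref{it:slope3})$\Leftrightarrow$(\ref{it:slope4}) under the extra assumption~\eqref{eqn:lower_bound_chi} follows from the local Einstein relation~\eqref{eqn:local_einstein_rel}, which gives the pointwise identities $\nabla L_f(\rho)=\rho f''(\rho)\nabla\rho$ and $\nabla\phi(\rho)=\chi(\rho)f''(\rho)\nabla\rho$ whenever the chain rule for $W^{1,1}_{\mathrm{loc}}$ composition with a Lipschitz function applies; thus $\nabla[L_f(\rho)]/\rho=\nabla[\phi(\rho)]/\chi(\rho)$. Under~\eqref{eqn:lower_bound_chi} the weighted spaces $\mathcal L^2_{\nabla,\rho}$ and $\mathcal L^2_{\nabla,\chi(\rho)}$ coincide with equivalent norms by~\eqref{eqn:equiv_of_norms}, and $\phi(\rho)\in W^{1,1}_{\mathrm{loc}}\Leftrightarrow L_f(\rho)\in W^{1,1}_{\mathrm{loc}}$ because $\phi\circ L_f^{-1}$ and its inverse are both Lipschitz (using $C_*\le\phi'\le C_{\mathrm{Lip}}$ and $C_*\le L_f'(a)/a\cdot(\cdot)$ derived from $f''>0$). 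These identifications convert the representation in (\ref{it:slope3}) into (\ref{it:slope4}) and vice versa, completing the proof.
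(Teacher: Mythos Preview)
Your proposal is correct and follows essentially the same route as the paper's proof, which likewise adapts Theorem~10.4.6 in~\cite{Ambrosio2008b} to the torus with the potential $V$ and then appeals to the Einstein relation for the equivalence \ref{it:slope3}$\Leftrightarrow$\ref{it:slope4}. The only notable variation is in \ref{it:slope2}$\Rightarrow$\ref{it:slope3}: the paper probes the slope using the \emph{flow} $X(t,\cdot)$ generated by a gradient field $\nabla\xi$ (following~\cite{Lisini2009a}), whereas you use the \emph{displacement interpolation} $r_t=(1-t)i+tr$ with $r-i$ a small gradient (following Lemma~10.4.5 in~\cite{Ambrosio2008b}); these are equivalent at first order and both lead to the same Riesz representation argument.
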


\begin{proof}
The equivalence between~\ref{it:slope1} and~\ref{it:slope2} holds since (by Lemma 10.1.5 in~\cite{Ambrosio2008b}) the metric slope
for (regular and thus in particular) $\lambda$-convex functionals is given by
\begin{equation}
  \label{eqn:representation_metric_slope}
  |\partial \mathcal F|(\rho) = \min\{\|\zeta\|_\rho \;\! : \;\! \zeta\in\partial \mathcal F(\rho)\}.
\end{equation}
We next show that~\ref{it:slope2} implies~\ref{it:slope3}. The result follows from a standard calculation, cf.~e.g.~the proof of
Lemma 3.5 in~\cite{Lisini2009a}. Consider a smooth function $\xi\in C^\infty_c(\Lambda;\mathbb R)$. We define the flow associated
to $\nabla\xi$ as the unique solution $X(t,u)$ to $\dot X(t,u) = \nabla\xi(X(t,u)),\quad X(0,u)=u$ for $u\in\Lambda$ and
$t\in(0,1)$. For $\rho_t^\xi:= X(t,\cdot)_\#\rho$ we have (cf.~(3.32) in~\cite{Lisini2009a})
\begin{equation}
  \label{eqn:W_2_bound_xi}
  W_2^2(\rho,\rho_t^\xi)\le t\int_0^t \|\nabla\xi\|_{\rho_s^\xi}^2\;\!\mathrm ds
  =t^2 (\|\nabla\xi\|_\rho^2 + o(1)).
\end{equation}
Similar to (3.35) and (3.36) in~\cite{Lisini2009a} one finds
\begin{equation}
  \label{eqn:grad_F_V}
  \lim_{t \to 0} \frac{\mathcal F(\rho_t^\xi)-\mathcal F(\rho)}t
  = \int_\Lambda \nabla [ L_f(\rho) ]\cdot \nabla\xi\;\!\mathrm du\quad\textrm{ and }\quad
  \lim_{t \to 0} \frac{\mathcal V(\rho_t^\xi)-\mathcal V(\rho)}t = \int_\Lambda \rho\nabla V\cdot \nabla\xi \;\!\mathrm du.
\end{equation}
Using~\eqref{eqn:W_2_bound_xi} and $\mathcal F^V= \mathcal F + \mathcal V$ we obtain (cf.~(3.33) in~\cite{Lisini2009a})
\begin{equation*}
  |\partial \mathcal F^V|(\rho)\ge \frac{1}{\|\nabla\xi\|_\rho}\lim_{t \to 0} \frac{\mathcal F^V(\rho_t^\xi)-\mathcal F^V(\rho)}t
  = \frac{1}{\|\nabla\xi\|_\rho}\int_\Lambda (\nabla [ L_f(\rho) ] +\rho \nabla V)\cdot \nabla\xi\;\!\mathrm du.
\end{equation*}
Similar to the discussion at the beginning of Section~\ref{sec:reg_paths_hydro_yy}, $|\partial \mathcal F^V|(\rho)<\infty$ implies
that the linear operator $v\mapsto \int_\Lambda (\nabla [L_f(\rho)]+\rho\nabla V)\cdot v\;\!\mathrm du$ from
$\mathcal L^2_{\nabla,\rho}(\Lambda;\mathbb R^d)$ to $\mathbb R$ is bounded, such that Riesz' representation theorem implies the
existence of $w\in \mathcal L^2_{\nabla,\rho}(\Lambda;\mathbb R^d)$ for which $\nabla[L_f(\rho)] +\rho \nabla V = \rho w$, such
that $L_f(\rho)\in W^{1,1}_{\rm loc}(\Lambda;\mathbb R)$. In particular $|\partial \mathcal F^V|(\rho)\ge \|w\|_\rho$.

For the implication~\ref{it:slope3} to~\ref{it:slope2} consider any $\hat\rho\in\mathcal L^1(\Lambda;[0,\infty))$ with
  $\int_\Lambda \rho\;\!\mathrm du=\int_\Lambda \hat\rho\;\!\mathrm du$ and $\mathcal F(\hat\rho)<\infty$. Then
\begin{equation*}
  \mathcal F(\hat\rho)-\mathcal F(\rho) \ge \lim_{t\to 0}\frac{\mathcal F((r_t)_\#\rho)-\mathcal F(\rho)}t 
  \ge\int_\Lambda \nabla[L_F(\rho)]\cdot (r-i)\;\!\mathrm du,
\end{equation*}
where the fist inequality follows from the monotonicity of the difference quotient (see Equation~(10.4.24)
in~\cite{Ambrosio2008b}). The $\lambda$-convexity of $\mathcal V$ yields (cf.~\eqref{eqn:lambda_convexity_xx})
\begin{equation*}
  \mathcal V(\hat\rho)-\mathcal V(\rho) \ge \lim_{t\to 0}\frac{\mathcal V((r_t)_\#\rho)-\mathcal V(\rho)}t
  + \frac \lambda 2 W_2^2(\rho,\hat\rho)
  =\int_\Lambda \rho\nabla V\cdot (r-i)\;\!\mathrm du + \frac \lambda 2 W_2^2(\rho,\hat\rho).
\end{equation*}
This implies that $w=(\nabla[L_F(\rho)]/\rho + \nabla V)\in \partial\mathcal F^V(\rho)$ and thus $|\partial \mathcal F^V|(\rho)
\le \|w\|_\rho<\infty$ by eqn.~\eqref{eqn:representation_metric_slope}.

The equivalence between~\ref{it:slope3} and~\ref{it:slope4} can be seen as follows: Recall that $C_*L_f'(a)\le \phi'(a)\le C_{\rm
  Lip} L_f'(a)$ and also $C_* L_f(a) \le \phi(a) \le C_{\rm Lip} L_f(a)$. With the same argument as in the proof of
Proposition~\ref{prop:e_star_rep} we obtain that the chain rule holds as in~\eqref{eqn:sobolev_chain_rule_0},
i.e.~$L_f'(\rho)\nabla\rho = \nabla[L_f(\rho)]$ and $\phi'(\rho)\nabla\rho = \nabla[\phi(\rho)]$, such that
$C_*\|\nabla[L_f(\rho)]\|\le \|\nabla[\phi(\rho)]\|\le C_{\rm Lip} \|\nabla[L_f(\rho)]\|$. This proves that $\phi(\rho)\in
W^{1,1}(\Lambda;\mathbb R)$ if and only if $L_f(\rho)\in W^{1,1}(\Lambda;\mathbb R)$.  Moreover $w=\nabla [L_f(\rho)]/\rho =
\nabla [\phi(\rho)]/\chi(\rho)$.
\end{proof}

Finally, we can outline a proof for Theorem~\ref{thm:chain-rule}, which follows ideas from~\cite{Ambrosio2008b,Lisini2009a}. Since
we work on the torus $\Lambda=\mathbb T^d$ (rather than $\mathbb R^d$), we sketch the argument.

\begin{sketchofproofof}{Theorem~\ref{thm:chain-rule}} 
Since $\mathbb{A}$ is finite and the assumptions of Section~\ref{sec:new-assump} are valid Propositions~\ref{prop:e_rep}
and~\ref{lem:continuity_lemma} and~\ref{prop:e_star_rep} hold. Moreover, since $f$ satisfies the McCann
condition~\eqref{eqn:mccann} and also the assumption~\eqref{eqn:lower_bound_chi} on $\chi'$ holds we can apply
Lemma~\ref{lem:slope_diff}. Combining all these results we have that the map $t\mapsto |\rho'_t||\partial \mathcal F^V|(\rho_t)$
is in $\mathcal L^1_{\rm loc}([0,T];\mathbb R)$. This then implies that $t\mapsto \mathcal F^V(\rho_t)$ is locally absolutely
continuous (see e.g.~Lemma 3.4 in~\cite{Lisini2009a}), with a.e.~derivative
\begin{equation*}
  \frac{d}{dt} \mathcal F^V(\rho_t) 
  =-\langle v_t,w_t\rangle_{\chi(\rho_t)}
  =-\langle \dot\rho_t, \Delta(\rho_t)+\nabla\cdot(\chi(\rho_t)\nabla V)\rangle_{-1,\chi(\rho_t)},
\end{equation*}
which implies the chain rule~\eqref{eqn:chain_rule_formal}.
\end{sketchofproofof}

\section{Proofs and Supplementary Content}
\label{sec:proof_bounds}

For nearest neighbour transitions, the following proposition yields a special representation for symmetric summands.
\begin{proposition}
  \label{prop:rep}
  Let $A_{\eta,\eta'}$ be a symmetric function (such that $A_{\eta,\eta'}=A_{\eta',\eta}$) with $A_{\eta,\eta}=0$ and
  $A_{\eta,\eta^{i,j}}=0$ whenever $|i-j|\not =1$. If either $\sum_{\eta,\eta'\in\Omega_L}|A_{\eta,\eta'}|<\infty$ or
  $A_{\eta,\eta'}\ge 0$ for all $\eta,\eta'\in\Omega_L$, then
  \begin{equation}
    \label{eqn:sym_A}
    \sum_{\eta,\eta'\in\Omega_L}A_{\eta,\eta'}=2 \sum_{i\in\mathbb T_L^d}\sum_{k=1}^d\sum_{\eta\in\Omega_L} A_{\eta,\eta^{i,i+e_k}}
    \mathbf 1_{\{\eta(i)>0\}}.
  \end{equation}
\end{proposition}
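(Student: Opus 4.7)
The plan is to reduce the left-hand side to a sum indexed by ``oriented jumps'' $(\eta, i, k)$ with $\eta(i)>0$ and $k\in\{1,\ldots,d\}$, exploiting the symmetry of $A$ together with a translation/inversion bijection on configurations. The absolute summability (or non-negativity) hypothesis legitimises all rearrangements of the double sum, so the claim reduces to an essentially combinatorial identity.

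First, I would use the support condition on $A$ to restrict the inner sum. Since $A_{\eta,\eta'}$ vanishes unless $\eta'=\eta^{i,j}$ for some nearest-neighbor pair with $\eta(i)>0$ (if $\eta(i)=0$ then $\eta^{i,j}=\eta$ and $A_{\eta,\eta}=0$), and since each neighbor $j$ of $i$ equals either $i+e_k$ or $i-e_k$ for some $k$, I can write
\begin{equation*}
  \sum_{\eta,\eta'\in\Omega_L}A_{\eta,\eta'}
  = \sum_{\eta}\sum_{i\in\mathbb T_L^d}\sum_{k=1}^d \mathbf 1_{\{\eta(i)>0\}}
  \bigl[A_{\eta,\eta^{i,i+e_k}} + A_{\eta,\eta^{i,i-e_k}}\bigr].
\end{equation*}
The first summand already gives half of the target right-hand side, so everything reduces to showing that the second summand equals the first.

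For the second piece, I would translate the $i$-sum by $e_k$ (a bijection on the torus $\mathbb T_L^d$) to rewrite it as $\sum_\eta \sum_i \mathbf 1_{\{\eta(i+e_k)>0\}}\, A_{\eta,\eta^{i+e_k,i}}$. Using the symmetry $A_{\eta,\eta^{i+e_k,i}}=A_{\eta^{i+e_k,i},\eta}$ and then substituting $\xi := \eta^{i+e_k,i}$, which is a bijection between $\{\eta:\eta(i+e_k)>0\}$ and $\{\xi:\xi(i)>0\}$ with inverse $\eta = \xi^{i,i+e_k}$, the sum turns into $\sum_\xi \sum_i \mathbf 1_{\{\xi(i)>0\}}\, A_{\xi,\xi^{i,i+e_k}}$, which is precisely the first summand. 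Combining the two identical copies yields the factor of $2$ in the statement.

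The only subtlety is when $N_{\max}<\infty$: then $\eta^{i+e_k,i}$ is not a legitimate configuration when $\eta(i)=N_{\max}$, so the bijection above should strictly be between $\{\eta : \eta(i+e_k)>0,\,\eta(i)<N_{\max}\}$ and $\{\xi : \xi(i)>0,\,\xi(i+e_k)<N_{\max}\}$. Under the natural convention that $\eta^{i,j}=\eta$ for forbidden moves, the corresponding $A$-terms vanish by $A_{\eta,\eta}=0$, so the excluded boundary terms contribute nothing on either side and the reduction goes through unchanged. I do not anticipate any deeper obstacle; the whole argument is bookkeeping made transparent by the symmetry of $A$ and the ``source-has-a-particle'' indicator.
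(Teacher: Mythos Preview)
Your proof is correct and follows essentially the same approach as the paper: both decompose the double sum into forward and backward nearest-neighbour jumps, then use symmetry of $A$ together with a bijective substitution on configurations and an index shift on $\mathbb T_L^d$ to identify the backward piece with the forward one. The only cosmetic difference is the order of operations (you shift $i$ first and then substitute $\xi=\eta^{i+e_k,i}$, whereas the paper applies symmetry, substitutes $\eta\leftarrow\eta^{i-e_k,i}$, and shifts $i$ last); your explicit handling of the $N_{\max}<\infty$ boundary case is a welcome clarification that the paper leaves implicit.
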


\begin{proof}
Note that by definition $\sum_{\eta,\eta'\in\Omega_L}A_{\eta,\eta'}= \sum_{i\in\mathbb
  T_L^d}\sum_{k=1}^d\sum_{\eta\in\Omega_L}\bigl( A_{\eta,\eta^{i,i+e_k}} + A_{\eta,\eta^{i,i-e_k}}\bigr)\mathbf
1_{\{\eta(i)>0\}}$.  Using symmetry, the second summand is equal to $A_{\eta^{i,i-e_k},\eta}$, such that first replacing the
configuration $\eta$ with $\eta^{i-e_k,i}$ before replacing the index $i$ with $i+e_k$ yields~\eqref{eqn:sym_A}.
\end{proof}

Following~\cite{Kipnis1999a} Chapter 5, we define for $\epsilon>0$ the approximation of the identity
$\iota_\epsilon:=(2\epsilon)^{-d}\mathbf 1_{[-\epsilon,\epsilon)^d}(\cdot)$. Recall that the convolution of a measure
  $\pi\in\mathcal M_+(\Lambda)$ with a function $f\in \mathcal L^1(\Lambda;\mathbb R)$ is defined as $[\pi*f](u):= \int_\Lambda
  f(u'\!-\!u)\;\! \pi(\mathrm du')$. The convolution of $\iota_\epsilon$ with the empirical measure~\eqref{eqn:empirical_measure}
  is the function
\begin{equation}
  \label{eqn:conv_empirical_const}
        [\Theta_L(\eta)*\iota_\epsilon](u) = (2\epsilon L)^{-d}  \sum_{i\in\mathbb T_L^d} \mathbf 1_{[\frac{2i-1}{2L},\frac{2i+1}{2L})^d}(u)
          \sum_{j:|i-j|\le \lfloor\epsilon L\rfloor} \eta(j),
\end{equation}
which is piecewise constant on $\{[\frac{2i-1}{2L},\frac{2i+1}{2L})^d\}_{i\in\mathbb T_L^d}$. This allows us to represent the
  averaged particle density as a function of the empirical distribution, i.e.
\begin{equation*}
  [\Theta_L(\eta)*\iota_\epsilon](i/L) = \Bigl(\frac{2{\lfloor \epsilon L\rfloor}+1}{2\epsilon L}\Bigr)^d \eta^{\lfloor \epsilon L\rfloor}(i).
\end{equation*}
For $\pi(\mathrm du)=\rho(u)\;\!\mathrm du$ the convolution yields $[\pi*\iota_\epsilon](u) =
(2\epsilon)^{-d}\int_{[u-\epsilon,u+\epsilon)^d}\rho(u')\;\!\mathrm du'$.  Since $\lim_{\epsilon\to
    0}[\pi*\iota_\epsilon](u)=\rho(u)$ for almost all $u\in\Lambda$, we define $[\pi*\iota_0](u):= \rho(u)$.

\subsection{Proofs of the Statements in Section~\ref{sec:act-fcts}}
\label{sec:proof_xx_new}

\begin{proofof}{Proposition~\ref{prop:micro_chain_rule}}  Recall that $(\mu^L_t)_{t\in[0,T]}$ is finitely supported in the sense that
the set $\mathcal N_0:= \{ \eta\in\Omega_L| \mu^L_t(\eta)>0$ for some $t\in[0,T]\}$ is finite. Since $r^L_t$ consists of nearest
neighbour transitions, also the set $\mathcal N_1:=\{(\eta,\eta')\in\Omega_L\times\Omega_L | \mu^L_t(\eta)(r^L_t)_{\eta,\eta'}>0$
or $\mu^L_t(\eta')(r^L_t)_{\eta',\eta}>0$ for some $t\in[0,T]\}$ is finite.  Thus the left hand side
of~\eqref{eqn:free_energy_diff} is equal to
\begin{multline*}
  \sum_{\eta\in\mathcal N_0}\Biggl[
    \mu^L_{t_2}(\eta)\log\biggl(\frac{\mu^L_{t_2}(\eta)}{\nu_\alpha(\eta)}\biggr)
    -\mu^L_{t_1}(\eta)\log\biggl(\frac{\mu^L_{t_1}(\eta)}{\nu_\alpha(\eta)}\biggr)
    \Biggr]\\
  +\sum_{\eta\in\mathcal N_0}\sum_{i\in\mathbb T_L^d} \Bigl(\mu^L_{t_2}(\eta)  \eta(i) \tilde V_{t_2}(\tfrac iL)
  -\mu^L_{t_1}(\eta) \eta(i) \tilde V_{t_1}(\tfrac iL)\Bigr)\\
  +\log \biggl(\sum_{\eta\in\Omega_L} \nu_\alpha(\eta)\mathrm e^{-\sum_{i\in\mathbb T_L^d}\tilde V_{t_2}(i/L)\eta(i)}\biggr)
  -\log \biggl(\sum_{\eta\in\Omega_L} \nu_\alpha(\eta)\mathrm e^{-\sum_{i\in\mathbb T_L^d}\tilde V_{t_1}(i/L)\eta(i)}\biggr).
\end{multline*}
Similar to Theorem~9.2 of Appendix~1 in~\cite{Kipnis1999a}, one then shows using~\eqref{eqn:finite_moments} that the latter is
equal to
\begin{multline*}
  \qquad\sum_{\eta\in\mathcal N_0}\int_{t_1}^{t_2}\frac {\mathrm d}{\mathrm dt}
  \biggl[ \mu^L_t(\eta)\log\biggl(\frac{\mu^L_t(\eta)}{\nu_\alpha(\eta)}\biggr)\biggr]\mathrm dt
  +\sum_{\eta\in\mathcal N_0}\sum_{i\in\mathbb T_L^d} \int_{t_1}^{t_2}\frac{\mathrm d}{\mathrm dt}\Bigl[\mu^L_t(\eta)  \eta(i) \tilde V_t(\tfrac iL)
    \Bigr]\mathrm dt\\
  -\int_{t_1}^{t_2} \sum_{\eta\in\Omega_L}\nu_\alpha^{\tilde V_t}(\eta) \sum_{i\in\mathbb T_L^d}\eta(i)\;\!\partial_t\tilde V_t(\tfrac iL)
  \;\!\mathrm dt.\qquad
\end{multline*}
A straightforward calculation (using $\partial_t\mu^L_t(\eta)=-\operatorname{div}\jmath^L_t(\eta)$, the fact that the transition
rates $r^L_t$ are bounded, and the fact that $\mu^L_t$ is supported on a finite number of configurations) allows to show that
\begin{multline}
  \label{eqn:free_energy_diff_chainrule_mc}
  \mathcal F_{L,\alpha}^{\tilde V_{t_2}}(\mu^L_{t_2}) - \mathcal F_{L,\alpha}^{\tilde V_{t_1}}(\mu^L_{t_1})
  =-\sum_{\eta\in\mathcal N_0}\int_{t_1}^{t_2}\operatorname{div}\jmath^L_t(\eta) \biggl(\log\biggl(\frac{\mu^L_t(\eta)}{\nu_\alpha(\eta)}\biggr)
  +1\biggr)\mathrm dt\\
  -\sum_{\eta\in\mathcal N_0}\int_{t_1}^{t_2}\operatorname{div}\jmath^L_t(\eta)
  \sum_{i\in\mathbb T_L^d}\eta(i)\;\!\tilde V_t(\tfrac iL)\;\!\mathrm dt
  + \int_{t_1}^{t_2} \sum_{\eta\in\Omega_L}\bigl(\mu^L_t(\eta)-\nu_\alpha^{\tilde V_t}(\eta)\bigr)
  \sum_{i\in\mathbb T_L^d}\eta(i)\;\!\partial_t\tilde V_t(\tfrac iL)\;\!\mathrm dt.
\end{multline}
Using once more the boundedness of the nearest neighbour transition rates and that $\mu_0$ is supported on finitely many
configurations, we can show, employing the bound $\log(\mu^L_t(\eta)/\nu_\alpha(\eta))\le |\log(\nu_\alpha(\eta))|$, that
\begin{multline*}
  \int_0^T\!\!\sum_{\eta,\eta'\in\Omega_L} \Bigl|(\jmath^L_t)_{\eta,\eta'}
  \log\biggl(\frac{\mu^L_t(\eta)}{\nu_\alpha(\eta)}\biggr)\Bigr|\mathrm dt\\
  \le
  \int_0^T\!\!\!\sum_{(\eta,\eta')\in\mathcal N_1} \bigl(\mu^L_t(\eta)(r^L_t)_{\eta,\eta'}+\mu^L_t(\eta')(r^L_t)_{\eta',\eta}\bigr)
  |\log(\nu_\alpha(\eta))|\mathrm dt<\infty.
\end{multline*}
The latter allows us to combine the first two summands on the right hand side of~\eqref{eqn:free_energy_diff_chainrule_mc}, which
are equal to $-\sum_{\eta\in\Omega_L}\operatorname{div}\jmath^L_t(\eta) \log({\mu^L_t(\eta)}/{\nu_\alpha^{\tilde V_t}(\eta)})
=-\langle \jmath^L_t,F^{\tilde V_t}(\mu^L_t)\rangle_L$, where the last identity follows by a summation by parts (cf.~Equation~(15)
in~\cite{Kaiser2018a}). This finishes the proof.
\end{proofof}

The proof of Theorem~\ref{thm:repl_lemma_thm} relies on an auxiliary statement of independent interest, which we prove first. The
result gives sufficient conditions for local equilibration.
\begin{lemma}
  \label{lem:suff_loc_eqm}
  Consider $(P_L)_{L\in\mathbb N}$ from Section~\ref{sec:path_meas} with associated density $(\mu^L_t)_{t\in[0,T]}$. Assume there
  exists $\tilde V\in C^{1,2}([0,T]\times\Lambda;\mathbb R)$ such that the inequalities
  \begin{equation}
    \label{eqn:suff_replacement_lemma_1}
    \limsup_{L\to\infty}\frac 1{L^{d}}\int_0^T\mathcal F_{L,\alpha}^{\tilde V_t}(\mu^L_t)\;\!\mathrm dt<\infty
  \end{equation}
  and
  \begin{equation}
    \label{eqn:suff_replacement_lemma_2}
    \limsup_{L\to\infty}\frac 1{L^d}\int_0^T \Psi_L^\star\bigl(\mu^L_t,F^{\tilde V_t}(\mu^L_t)\bigr) \;\!\mathrm dt<\infty
  \end{equation}
  are satisfied. Then $(\mu^L_{[0,T]})_{L\in\mathbb N}$ (where again $\mu^L_{[0,T]}:=\frac 1T \int_0^T \mu^L_t\;\!\mathrm dt$) is
  in the class considered by the replacement lemma~\eqref{eqn:replacement_lemma}. In particular~\eqref{eqn:local_equilibrium}
  and~\eqref{eqn:local_equilibrium_2} are satisfied for $(\mu^L_{[0,T]})_{L\in\mathbb N}$. Moreover, these assumptions are
  independent of the choices of $\tilde V$ and $\alpha$: We can replace $\tilde V$ with $\tilde V+\tilde H$ for some $\tilde H\in
  C^{1,2}([0,T]\times\Lambda;\mathbb R)$ and also replace $\alpha$ with $\alpha'\in(0,N_{\max})$
  in~\eqref{eqn:suff_replacement_lemma_1} arbitrary. Then~\eqref{eqn:suff_replacement_lemma_1}
  and~\eqref{eqn:suff_replacement_lemma_2} are satisfied for $\tilde V$ and $\alpha$ if and only if they are satisfied for $\tilde
  V + \tilde H$ and $\alpha'$.
\end{lemma}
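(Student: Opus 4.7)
The plan is to reduce the hypothesis to the standard class of measures over which the classical replacement lemma is stated in~\cite[Ch.~5]{Kipnis1999a}, namely those with relative entropy and Dirichlet form both of order $L^d$. A direct expansion, using detailed balance and the identity $a_{\eta,\eta'}(\mu)\cosh\bigl(F^V_{\eta,\eta'}(\mu)/2\bigr) = \mu(\eta)r^V_{\eta,\eta'} + \mu(\eta')r^V_{\eta',\eta}$, gives
\[
  \Psi^\star_L\bigl(\mu, F^V(\mu)\bigr) = \sum_{\eta,\eta'\in\Omega_L}\Bigl(\sqrt{\mu(\eta)r^V_{\eta,\eta'}} - \sqrt{\mu(\eta')r^V_{\eta',\eta}}\Bigr)^2,
\]
which is, up to the usual factor $\tfrac12$, the Dirichlet form associated to the rates $r^V$. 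Thus the hypotheses are relative-entropy and Dirichlet-form bounds on $(\mu^L_t)$. It remains to \emph{(i)} verify that these bounds are insensitive to the choice of $(\tilde V,\alpha)$, and \emph{(ii)} transfer them from $(\mu^L_t)_{t\in[0,T]}$ to the time-averaged measure $\mu^L_{[0,T]}$.

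For step \emph{(i)}, the free energy differs under $(\tilde V,\alpha)\leftrightarrow(\tilde V',\alpha')$ by an explicit expression of the form $\sum_{i}\bigl[(f'(\alpha')-f'(\alpha)) + (\tilde V_t-\tilde V'_t)(i/L)\bigr]\hat\rho_i(\mu) + \log(Z/Z')$, which by the particle bound~\eqref{eqn:particle_bound} and the finite partition function~\eqref{eqn:finite_moments} is uniformly $O(L^d)$ in $L$ and $t$. For $\Psi^\star$ I use that the rates factorise as $r^{\tilde V_t}_{\eta,\eta^{i,i+e_k}} = r^0_{\eta,\eta^{i,i+e_k}} e^{-\delta}$ with $\delta = \tfrac12(\tilde V_t((i\!+\!e_k)/L)-\tilde V_t(i/L)) = O(1/L)$ uniformly, so that each summand in $\Psi^\star_L(\mu,F^{\tilde V_t}(\mu))$ has the form $(e^{-\delta/2}x - e^{\delta/2}y)^2$ with $x,y$ the corresponding roots at $\tilde V = 0$. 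The elementary estimate $(\alpha x-\beta y)^2\le 2\alpha^2(x-y)^2 + 2(\alpha-\beta)^2 y^2$ together with $(e^{-\delta/2}-e^{\delta/2})^2 \le \delta^2 = O(1/L^2)$ then yields
\[
  \Psi^\star_L\bigl(\mu,F^{\tilde V_t}(\mu)\bigr) \le C\,\Psi^\star_L\bigl(\mu,F^0(\mu)\bigr) + \frac{C'}{L^2}\sum_{\eta,\eta'}\mu(\eta')r^0_{\eta',\eta}.
\]
The remainder sum is the total transition rate, which by the parabolic scaling $r^0 = L^2\hat r^0$, the particle bound, and the (at worst) linear growth of $g_1g_2$, is at most of order $L^{d+2}$; hence the correction is $O(L^d)$. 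The reverse bound follows by symmetry, establishing the claimed $(\tilde V,\alpha)$-independence.

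For step \emph{(ii)}, both functionals are convex in $\mu$: for the relative entropy this is classical, and for $\Psi^\star_L(\cdot,F^0(\cdot))$ it follows from the explicit formula, since it expands as linear-in-$\mu$ terms plus $-2\sqrt{r^0_{\eta,\eta'}r^0_{\eta',\eta}}\sqrt{\mu(\eta)\mu(\eta')}$, and $(a,b)\mapsto-\sqrt{ab}$ is convex on $\mathbb R_+^2$. Jensen's inequality applied to $\mu^L_{[0,T]} = \tfrac1T\int_0^T\mu^L_t\,\mathrm dt$ gives
\[
  \mathcal F_{L,\alpha}^0\bigl(\mu^L_{[0,T]}\bigr) \le \tfrac1T\!\int_0^T\! \mathcal F_{L,\alpha}^0(\mu^L_t)\,\mathrm dt,\qquad \Psi^\star_L\bigl(\mu^L_{[0,T]},F^0(\mu^L_{[0,T]})\bigr) \le \tfrac1T\!\int_0^T\! \Psi^\star_L\bigl(\mu^L_t,F^0(\mu^L_t)\bigr)\,\mathrm dt,
\]
so combining with step \emph{(i)} and the hypotheses~\eqref{eqn:suff_replacement_lemma_1}--\eqref{eqn:suff_replacement_lemma_2} shows that both $L^{-d}\mathcal H(\mu^L_{[0,T]}|\nu_\alpha)$ and $L^{-d}\Psi^\star_L(\mu^L_{[0,T]},F^0(\mu^L_{[0,T]}))$ remain bounded as $L\to\infty$. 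This is exactly the class over which the classical replacement lemma~\eqref{eqn:replacement_lemma} is proved in~\cite[Ch.~5]{Kipnis1999a}, and its conclusion implies the two weaker statements~\eqref{eqn:local_equilibrium} and~\eqref{eqn:local_equilibrium_2}.

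The main obstacle is the $\tilde V$-independence of~\eqref{eqn:suff_replacement_lemma_2}: a naive perturbation estimate produces a correction of order $L^{-1}\!\cdot\! L^{d+2} = L^{d+1}$, which is one power of $L$ too large. The saving comes from the fact that across nearest-neighbour edges the gauge factors are $e^{\pm\delta/2}$ with $\delta = O(1/L)$, so that $(e^{-\delta/2}-e^{\delta/2})^2$ is actually $O(1/L^2)$ rather than $O(1/L)$, supplying the missing factor of $L^{-1}$; everything else is convexity and routine bookkeeping.
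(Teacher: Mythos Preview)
Your proof is correct and follows the same two–step strategy as the paper: first establish that the bounds~\eqref{eqn:suff_replacement_lemma_1}--\eqref{eqn:suff_replacement_lemma_2} are insensitive to the choice of $(\tilde V,\alpha)$, then use convexity (Jensen) to pass from $(\mu^L_t)_t$ to the time average $\mu^L_{[0,T]}$, which lands in the class of measures covered by the classical replacement lemma.

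The one place where your argument differs in substance is the $\tilde V$--independence of the $\Psi^\star$ bound. The paper works directly with the $\cosh$ form of $\Psi^\star$ and invokes the elementary inequality $\cosh(x+y)\le \cosh(x)\,e^{|y|}$ together with~\eqref{eqn:C_hat}. You instead pass through the Dirichlet--form representation $\Psi^\star_L(\mu,F^V(\mu))=\sum_{\eta,\eta'}\bigl(\sqrt{\mu(\eta)r^V_{\eta,\eta'}}-\sqrt{\mu(\eta')r^V_{\eta',\eta}}\bigr)^2$ and control the gauge perturbation by the quadratic estimate $(\alpha x-\beta y)^2\le 2\alpha^2(x-y)^2+2(\alpha-\beta)^2y^2$. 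Your route makes the power counting completely transparent: the crucial point, which you isolate explicitly, is that the edge gauge factors enter as $(e^{-\delta/2}-e^{\delta/2})^2=4\sinh^2(\delta/2)=O(L^{-2})$, so that the remainder $O(L^{-2})\cdot\sum_{\eta,\eta'}\mu(\eta')r^0_{\eta',\eta}=O(L^{-2})\cdot O(L^{d+2})=O(L^d)$ stays controlled after dividing by $L^d$. The paper's one--line estimate~\eqref{eqn:psi_star_estim} is terser on exactly this point; your version is a cleaner justification of the same fact. Everything else (the free--energy comparison, the convexity of $\mu\mapsto\Psi^\star_L(\mu,F^0(\mu))$, and the final appeal to~\cite[Ch.~5]{Kipnis1999a}) matches the paper.
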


\begin{proof}
The bound~\eqref{eqn:suff_replacement_lemma_1} for $\tilde V+\tilde H$ and $\alpha'$ follows similar to Remark~1.2 on page~70
of~\cite{Kipnis1999a}. For~\eqref{eqn:suff_replacement_lemma_2} note that the basic estimate $\cosh(x+y)\le \cosh(x)\mathrm
e^{|y|}$ combined with~\eqref{eqn:C_hat} yields
\begin{equation}
  \label{eqn:psi_star_estim}
  \frac 1{L^d}\int_0^T \Psi_L^\star\bigl(\mu^L_t,F_\alpha^{\tilde V_t+\tilde H_t}(\mu^L_t)\bigr) \;\!\mathrm dt 
  \le  \frac {C_{\tilde H}}{L^d}\int_0^T \Psi_L^\star\bigl(\mu^L_t,F^{\tilde V_t}(\mu^L_t)\bigr) \;\!\mathrm dt + 2(C_{\tilde H}-1)T C_{\hat\chi}
\end{equation}
for some $C_{\tilde H}>0$ that only depends on $H$. We thus can restrict to the special case $\tilde V_t = 0$. The two bounds
needed for the replacement lemma~\eqref{eqn:replacement_lemma} then follow from convexity, i.e.~$\mathcal
F_{L,\alpha}^0(\mu^L_{[0,T]})\le \frac 1T\int_0^T\mathcal F_{L,\alpha}^0(\mu^L_t)\;\!\mathrm dt$ and
$\Psi_L^\star\bigl(\mu^L_{[0,T]},F^0(\mu^L_{[0,T]})\bigr) \le \frac 1T\int_0^T \Psi_L^\star\bigl(\mu^L_t,F^0(\mu^L_t)\bigr)
\;\!\mathrm dt$ (cf.~the discussion in Chapter 5.3 near equation~(3.1) on page 81 in~\cite{Kipnis1999a}).
\end{proof}

With this result at hand, we can turn to the proof of Theorem~\ref{thm:repl_lemma_thm}.

\begin{proofof}{Theorem~\ref{thm:repl_lemma_thm}} Since the relative entropy is non-negative, we obtain with a modification
of~\eqref{eqn:rel_ent_2b} to the time interval $[t,T]$ (for each $t\in[0,T]$) that
\begin{equation}
  \label{eqn:free_energy_estimate_bw}
  \begin{split}
    \mathcal F_{L,\alpha}^{\tilde V_t}(\mu^L_t)
    &\le 
    \mathcal F_{L,\alpha}^{\tilde V_T}(\mu^L_T) 
    +\int_t^T \Psi_L(\mu^L_s,\jmath^L_s)\;\!\mathrm ds+ \int_t^T
    \Psi^\star_L\bigl(\mu^L_s,F_\alpha^{\tilde V_s}(\mu^L_s)\bigr)\;\!\mathrm ds 
    \\
    &\qquad\qquad\qquad\qquad
    -\int_t^T\sum_{i\in\mathbb T_L^d} \bigl(\hat\rho_i(\mu^L_s)
    -\bar\rho_{\alpha,\tilde V_s}(i)\bigr)\;\!\partial_s\tilde V_s(\tfrac iL)\;\!\mathrm ds
    \\
    &\le  \mathbb A_L^{\tilde V}\bigl(Q_L\bigr) + \mathcal F_{L,\alpha}^{\tilde V_0}(\mu^L_0)
    +  C_{\tilde V}\biggl( TL^dC_{\rm tot} + \int_0^T\sum_{i\in\mathbb T_L^d} \bar\rho_{\alpha,\tilde V_t}(i/L)\;\!\mathrm dt \biggr) ,
  \end{split}
\end{equation}
where $C_{\tilde V}$ is a constant that only depends on $\tilde V$. Thus
\begin{multline}
  \label{eqn:limsup_thm_31}
  \qquad\limsup_{L\to\infty}\frac 1{L^d} \int_0^T \mathcal F_{L,\alpha}^{\tilde V_t}(\mu^L_t)\;\!\mathrm dt
  \le  \limsup_{L\to\infty}\frac T{L^d}\mathbb A_L^{\tilde V}\bigl(Q_L\bigr)
  +\limsup_{L\to\infty}\frac T{L^d}\mathcal F_{L,\alpha}^{\tilde V_0}(\mu^L_0)\\
  + T^2 C_{\tilde V} C_{\rm tot}
  + TC_{\tilde V} \int_0^T\int_\Lambda \bar\rho_{\alpha,\tilde V_t}(u)\;\!\mathrm du\;\!\mathrm dt
  <\infty.\qquad
\end{multline}
The second inequality follows from a similar estimate to~\eqref{eqn:free_energy_estimate_bw}: Consider the second inequality
in~\eqref{eqn:free_energy_estimate_bw} for $t=0$ and drop the term $\mathcal F_{L,\alpha}^V(\mu^L_T) + \int_0^T
\Psi_L(\mu^L_t,\jmath^L_t)\;\!\mathrm dt \ge 0$. Then
\begin{multline*}
  \int_0^T \Psi^\star_L\bigl(\mu^L_t,F^{\tilde V_t}(\mu^L_t)\bigr)\;\!\mathrm dt\\
  \le  \mathbb A_L^{\tilde V}\bigl(Q_L\bigr) + \mathcal F_{L,\alpha}^{\tilde V_0}(\mu^L_0)
  +  2 C_{\tilde V}\biggl( TL^dC_{\rm tot} + \int_0^T\sum_{i\in\mathbb T_L^d} \bar\rho_{\alpha,\tilde V_t}(i/L)\;\!\mathrm dt \biggr)
\end{multline*}
and we can conclude as in~\eqref{eqn:limsup_thm_31}. We then apply Lemma~\ref{lem:suff_loc_eqm} to obtain that the
equations~\eqref{eqn:local_equilibrium} and~\eqref{eqn:local_equilibrium_2} are satisfied for $(\mu^L_{[0,T]})_{L\in\mathbb
  N}$. The independence of $V$, $\tilde V$ and $\alpha$ follows from the considerations in Lemma~\ref{lem:suff_loc_eqm}.
\end{proofof}

\subsection{Proofs of Liminf Inequalities}
\label{sec:assumptions_2}

This section is devoted to the proof of the liminf inequalities in the proof of Theorem~\ref{thm:lower-bound-thm}. Many of the
ideas of the following proofs are borrowed from the entropy method developed in~\cite{Guo1988a}. We here follow the presentation
of this method in Chapter 5 of the book by Kipnis and Landim~\cite{Kipnis1999a}.  The results we want to prove are of the form
$\liminf_{L\to\infty} B_L \ge B_*$.  The general strategy involves replacing $B_L$ by some (possibly $\epsilon$ dependent)
$C_L^\epsilon$ and to show that
\begin{equation*}
  \liminf_{\epsilon\to 0}\liminf_{L\to\infty} C_L^\epsilon \ge B_* \qquad\textrm{and}\qquad \limsup_{\epsilon\to 0}
  \limsup_{L\to\infty} |B_L - C_L^\epsilon| = 0.
\end{equation*}

\subsubsection{Bounds for $\Psi_L$ and $\Psi^\star_L$}
\label{sec:Bounds-PsiL-PsistarL*}
In order to achieve the projection to the physical domain anticipated in Section~\ref{sec:ips} we consider functions which are
linear in $\eta$. For this we fix a function $G\in C^1(\Lambda;\mathbb R)$ and define $\tilde G_L\colon\Omega_L\to\mathbb R$ by
$\tilde G_L(\eta) := L^d\langle\Theta_L(\eta),G\rangle = \sum_{i\in\mathbb T_L^d}G(i/L)\eta(i)$, for which the discrete derivative
satisfies the identity $\nabla^{\eta,\eta^{i,i+e_k}}\tilde G_L=\nabla^{i,i+e_k}G(\cdot/L)$. Note that this last identity allows us
to reduce the dependence on the configuration space to a dependence on the physical domain. Choosing the `force' $F=\nabla \tilde
G_L$, we obtain with Proposition~\ref{prop:rep} (since all summands are non-negative) that
\begin{equation}
  \label{eqn:psi_star_phys}
  \Psi^\star_L(\mu, \nabla \tilde G_L)
  = 2\sum_{i\in\mathbb T_L^d}\sum_{k=1}^d \hat a_{i,i+e_k}(\mu) L^2\Bigl[\cosh\bigl(\tfrac 12 \nabla^{i,i+e_k}G(\cdot/L)\bigr)-1\Bigr]
\end{equation}
and similar, for the current $j_{\eta,\eta'}^G=a_{\eta,\eta'}(\mu) \sinh\bigl(\tfrac 12 \nabla^{\eta,\eta'}\tilde G_L\bigr)$
associated to the above force (cf.~\cite{Kaiser2018a})
\begin{multline}
  \label{eqn:psi_phys}
  \qquad\Psi_L(\mu,j^G) 
  =2\sum_{i\in\mathbb T_L^d}\sum_{k=1}^d \hat a_{i,i+e_k}(\mu) L^2\Bigl[\sinh\bigl(\tfrac 12 \nabla^{i,i+e_k}G(\cdot/L)\bigr)\tfrac 12
    \nabla^{i,i+e_k}G(\cdot/L)\\-\Bigl(\cosh\bigl(\tfrac 12 \nabla^{i,i+e_k}G(\cdot/L)\bigr)-1\Bigr)
    \Bigr].\qquad
\end{multline}
We next derive upper bounds for~\eqref{eqn:psi_star_phys} and~\eqref{eqn:psi_phys} and a lower bound for $\Psi_L^\star(\mu,
F^V(\mu))$.

\begin{proposition}[Upper bounds for $\Psi_L$ and $\Psi^\star_L$] 
  \label{prop:upper_bound_finite}
  Let $\mu$ be a measure on $\Omega_L$. Further let $f_{\eta,\eta'}:=\nabla^{\eta,\eta'} \tilde G_L$ for some
  $G\colon\Lambda\to\mathbb R$ and $j^G_{\eta,\eta'}:=a_{\eta,\eta'}(\mu) \sinh\bigl(\tfrac 12 \nabla^{\eta,\eta'}\tilde
  G_L\bigr)$. Then
  \begin{equation}
    \label{eqn:bounds_psi_psi_star}
    \Psi^\star_L(\mu, \nabla \tilde G_L) \le \Psi_L(\mu,j^G) \le \frac 12\sum_{i\in\mathbb T_L^d}\sum_{k=1}^d
    \hat \chi^0_{i,i+e_k}(\mu) \bigl[2L \sinh\bigl(\tfrac 12 \nabla^{i,i+e_k}G(\cdot/L)\bigr)\bigr]^2.\qquad
  \end{equation}
\end{proposition}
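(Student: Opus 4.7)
The plan is to reduce both inequalities to elementary pointwise scalar inequalities by exploiting the explicit edge-indexed representations~\eqref{eqn:psi_star_phys} and~\eqref{eqn:psi_phys}. Writing $x_{i,k} := \tfrac 12 \nabla^{i,i+e_k} G(\cdot/L)$ at each edge, these read
\[
  \Psi^\star_L(\mu, \nabla\tilde G_L) = 2L^2 \sum_{i\in\mathbb T_L^d}\sum_{k=1}^d \hat a_{i,i+e_k}(\mu)\bigl(\cosh(x_{i,k}) - 1\bigr),
\]
with an analogous representation for $\Psi_L(\mu,j^G)$ whose edge integrand is $x_{i,k}\sinh(x_{i,k}) - \cosh(x_{i,k}) + 1$. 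The right-hand side of~\eqref{eqn:bounds_psi_psi_star} equals $2L^2\sum_{i,k}\hat\chi^0_{i,i+e_k}(\mu)\sinh^2(x_{i,k})$, so the proposition amounts to edge-wise scalar inequalities scaled by the mobilities.

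For the first inequality I subtract the two representations term by term, obtaining
\[
  \Psi_L(\mu,j^G) - \Psi^\star_L(\mu,\nabla\tilde G_L)
  = 2L^2 \sum_{i,k} \hat a_{i,i+e_k}(\mu)\, h(x_{i,k}),\quad h(x):=x\sinh x - 2\cosh x + 2.
\]
Non-negativity follows because $\hat a\ge 0$ and $h\ge 0$; for the latter, $h(0)=0$, $h'(x)=x\cosh x - \sinh x$ vanishes at $0$, and $h''(x)=x\sinh x\ge 0$, so $h$ is convex with unique minimum $0$ at the origin. (Alternatively, this is the Fenchel--Young identity $\Psi_L(\mu,j^G)+\Psi^\star_L(\mu,\nabla\tilde G_L) = \langle j^G,\nabla\tilde G_L\rangle_L$, valid because $j^G_{\eta,\eta'}/a_{\eta,\eta'}(\mu)=\sinh(\tfrac12 F_{\eta,\eta'})$ realises the Legendre conjugacy, combined with the same scalar bound.)

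For the second inequality I first establish the edge-wise bound $\hat a_{i,i+e_k}(\mu)\le 2\hat\chi^0_{i,i+e_k}(\mu)$. This follows from AM--GM applied inside the sum defining $\hat a$ in~\eqref{eqn:hat_chi}, together with the identity $\hat r^V_{\eta,\eta'}\hat r^V_{\eta',\eta}=\hat r^0_{\eta,\eta'}\hat r^0_{\eta',\eta}$ (already noted in the paper when defining $a_{\eta,\eta'}$), which lets us replace $\hat r^V$ by $\hat r^0$ inside the square root and then bound the geometric mean by the arithmetic mean. Substituting this into the representation of $\Psi_L(\mu,j^G)$, it remains to verify pointwise that
\[
  2\bigl(x\sinh x - \cosh x + 1\bigr)\le \sinh^2 x\qquad(x\in\mathbb R).
\]
Setting $g(x):= \sinh^2 x - 2x\sinh x + 2\cosh x - 2$, one checks $g(0)=0$ and $g'(x)=2\cosh(x)\bigl(\sinh x - x\bigr)$, which has the sign of $x$ since $\sinh x - x$ is odd and non-decreasing. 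Hence $g\ge 0$ everywhere, which completes the proof.

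No substantive obstacle is anticipated: every ingredient --- the physical-domain representations~\eqref{eqn:psi_star_phys} and~\eqref{eqn:psi_phys}, the $V$-independence of $\hat r\hat r$, and the comparison $\hat a\le 2\hat\chi^V$ noted after~\eqref{eqn:hat_chi} --- is already established in the paper, and the two remaining scalar bounds on $h$ and $g$ are direct calculus exercises. The only bookkeeping point that needs care is keeping track of the factor $L^2$ arising from the scaling $r^V=L^2\hat r^V$, but this has already been absorbed cleanly into the prefactors when applying Proposition~\ref{prop:rep} to the non-negative nearest-neighbour summands producing~\eqref{eqn:psi_star_phys} and~\eqref{eqn:psi_phys}.
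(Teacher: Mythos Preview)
Your proposal is correct and follows essentially the same route as the paper: reduce to the edge-indexed representations~\eqref{eqn:psi_star_phys} and~\eqref{eqn:psi_phys}, apply the scalar inequalities $\cosh x-1\le x\sinh x-(\cosh x-1)\le \tfrac12\sinh^2 x$ termwise, and use $\hat a_{i,i+e_k}(\mu)\le 2\hat\chi^0_{i,i+e_k}(\mu)$. The only difference is that you supply explicit calculus verifications of the two scalar bounds (your $h\ge0$ and $g\ge0$), whereas the paper simply calls them ``basic inequalities''.
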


\begin{proof}
The proof follows from the basic inequalities $\cosh(x)-1 \le x\sinh(x)-(\cosh(x)-1) \le \frac 12\sinh(x)^2$ applied
to~\eqref{eqn:psi_star_phys} and~\eqref{eqn:psi_phys}, together with the inequality $\hat a_{i,i+e_k}(\mu)\le 2\hat
\chi^0_{i,i+e_k}(\mu)$ stated below~\eqref{eqn:hat_chi}.
\end{proof}

\begin{proposition}[Lower bound for $\Psi^\star_L$]
  \label{prop:lower_bound_finite}
  Let $\mu$ be a measure on $\Omega_L$, $\alpha\in(0, N_{\max})$ and $V\in C^2(\Lambda;\mathbb R)$. Then, for any
  $G\colon\Lambda\to\mathbb R$ we have the following lower bound on $\Psi^\star_L\bigl(\mu,F^V(\mu)\bigr)$ uniform in $\alpha$
  \begin{multline}
    \label{eqn:Psi_star_Fs_lower_bound}
    \Psi^\star_L\bigl(\mu,F^V(\mu)\bigr) \\
    \ge \sum_{i\in\mathbb T_L^d}\sum_{k=1}^d \Bigl[\bigl(L\hat\jmath^V_{i,i+e_k}(\mu)\bigr)(
      L\nabla^{i,i+e_k} G(\cdot / L)) -\frac 12 \hat\chi^V_{i,i+e_k}(\mu) \bigl[L\nabla^{i,i+e_k}G(\cdot / L)\bigr]^2\Bigr].
  \end{multline}
\end{proposition}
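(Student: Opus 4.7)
The plan is to recast $\Psi^\star_L(\mu,F^V(\mu))$ in its standard Dirichlet-form representation and then apply a single elementary completing-the-square inequality. Using detailed balance with respect to $\nu_\alpha^V$, the definition~\eqref{eqn:force} is equivalent to $F^V_{\eta,\eta'}(\mu)=\log\bigl(\mu(\eta)r^V_{\eta,\eta'}/(\mu(\eta')r^V_{\eta',\eta})\bigr)$, and combining this with the definition of $a_{\eta,\eta'}(\mu)$ yields the pointwise identity
\[
a_{\eta,\eta'}(\mu)\bigl[\cosh(\tfrac 12 F^V_{\eta,\eta'}(\mu))-1\bigr] = \bigl(A_{\eta,\eta'}-B_{\eta,\eta'}\bigr)^2,
\]
with $A_{\eta,\eta'}:=\sqrt{\mu(\eta)r^V_{\eta,\eta'}}$ and $B_{\eta,\eta'}:=\sqrt{\mu(\eta')r^V_{\eta',\eta}}$. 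Since the summand is symmetric, non-negative, and supported on nearest-neighbour pairs, Proposition~\ref{prop:rep} gives
\[
\Psi^\star_L(\mu,F^V(\mu)) = 2\sum_{i\in\mathbb T_L^d}\sum_{k=1}^d\sum_{\eta\in\Omega_L}\bigl(A_{\eta,\eta^{i,i+e_k}}-B_{\eta,\eta^{i,i+e_k}}\bigr)^2\mathbf 1_{\{\eta(i)>0\}}.
\]

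The key algebraic step is the elementary bound
\[
(A-B)^2 \ge x(A^2-B^2)-\tfrac{x^2}{4}(A+B)^2\qquad\text{for all }A,B,x\in\mathbb R,
\]
which follows by expanding $\bigl[(A-B)-\tfrac x2(A+B)\bigr]^2\ge 0$. I would apply this pointwise with the edge-dependent choice $x=x_{i,k}:=\tfrac 12\nabla^{i,i+e_k}G(\cdot/L)$ (independent of $\eta$) and then perform the $\eta$-sum. Two identifications make the right-hand side collapse to the desired form. First, by the definition of $\hat\jmath^V$ and the parabolic scaling $r^V=L^2\hat r^V$, together with a change of variable $\eta\mapsto\eta^{i,i+e_k}$ in one term,
\[
\sum_{\eta}\bigl(A_{\eta,\eta^{i,i+e_k}}^2-B_{\eta,\eta^{i,i+e_k}}^2\bigr)\mathbf 1_{\{\eta(i)>0\}} = L^2\hat\jmath^V_{i,i+e_k}(\mu),
\]
and second, using the definitions of $\hat\chi^V$ and $\hat a$ together with the bound $\hat a\le 2\hat\chi^V$ recorded below~\eqref{eqn:hat_chi},
\[
\sum_{\eta}\bigl(A_{\eta,\eta^{i,i+e_k}}+B_{\eta,\eta^{i,i+e_k}}\bigr)^2\mathbf 1_{\{\eta(i)>0\}} = 2L^2\hat\chi^V_{i,i+e_k}(\mu)+L^2\hat a_{i,i+e_k}(\mu) \le 4L^2\hat\chi^V_{i,i+e_k}(\mu).
\]
Crucially this upper bound sits inside a term with a negative coefficient, so the lower-bound direction is preserved.

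Assembling, one obtains $\Psi^\star_L(\mu,F^V(\mu))\ge 2\sum_{i,k}\bigl[x_{i,k}L^2\hat\jmath^V_{i,i+e_k}(\mu)-x_{i,k}^2\,L^2\hat\chi^V_{i,i+e_k}(\mu)\bigr]$; substituting $x_{i,k}=\tfrac 12\nabla^{i,i+e_k}G(\cdot/L)$ and rearranging the $L$-factors so that discrete gradients appear in the combination $L\nabla^{i,i+e_k}G(\cdot/L)$ yields exactly~\eqref{eqn:Psi_star_Fs_lower_bound}. The uniformity in $\alpha$ is automatic: $F^V(\mu)$ is $\alpha$-independent (as noted below~\eqref{eqn:force}) and no other quantity on either side involves $\alpha$.

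The main obstacle is identifying the correct algebraic inequality: one needs a lower bound on $(A-B)^2$ that is simultaneously linear in $A^2-B^2$ (so that summing produces $\hat\jmath^V$) and quadratic in $A+B$ (so that summing can be dominated by $\hat\chi^V$ rather than by the strictly smaller $\hat a$). The completing-the-square identity above delivers both in one line; everything after that is routine bookkeeping of the factors of $L$ arising from the parabolic scaling and of the change-of-variable that converts the $\eta^{i,i+e_k}$-sum back into the standard form of $\hat\jmath^V$ and $\hat\chi^V$.
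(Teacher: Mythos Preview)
Your proof is correct and is essentially the same argument as the paper's. The paper introduces the density $\rho=\mu/\nu_\alpha^V$ and the symmetric kernel $q_{\eta,\eta'}=\nu_\alpha^V(\eta)r^V_{\eta,\eta'}$ so that your $A,B$ become $\sqrt{q}\sqrt{\rho(\eta)},\sqrt{q}\sqrt{\rho(\eta')}$, and then applies the identical completing-the-square bound $(a-b)^2\ge[(a-b)-H]^2$ with $H=\tfrac{x}{2}(a+b)$; the only cosmetic difference is that the paper applies the AM--GM step $(a+b)^2\le 2(a^2+b^2)$ pointwise before summing, whereas you sum first and invoke the equivalent summed inequality $\hat a\le 2\hat\chi^V$.
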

\begin{proof}
We use the notation $\rho:=\mu/\nu_\alpha^V$ (s.t.~$\rho$ is the density of $\mu$ with respect to $\nu_\alpha^V$) and
$q_{\eta,\eta'}:=\nu_\alpha^V(\eta)r^V_{\eta,\eta'}$, such that the relation $q_{\eta,\eta'}=q_{\eta',\eta}$ (detailed balance)
holds. Then $F^V_{\eta,\eta'}(\mu)=-\nabla^{\eta,\eta'}\log \rho$ and
$a_{\eta,\eta'}(\mu)=2\sqrt{\rho(\eta)q_{\eta,\eta'}\rho(\eta')q_{\eta',\eta}}$. Further, $a_{\eta,\eta'}(\mu)[\cosh(\tfrac 12
  F^V_{\eta,\eta'}(\mu))-1] =\sqrt{q_{\eta,\eta'}q_{\eta',\eta}}(\sqrt{\rho(\eta)}-\sqrt{\rho(\eta')}\;\!)^2$.  Using the
representation in Proposition~\ref{prop:rep} and $q_{\eta,\eta'}=\sqrt{q_{\eta,\eta'}q_{\eta',\eta}}=q_{\eta',\eta}$, we obtain
\begin{equation*}
  \Psi^\star_L\bigl(\mu,F^V(\mu)\bigr)
  = \sum_{\eta\in\Omega_L}\sum_{i\in\mathbb T_L^d}\sum_{k=1}^d  2q_{\eta,\eta^{i,i+e_k}}\;\!\Bigl(\sqrt{\rho(\eta)}
  -\sqrt{\rho(\eta^{i,i+e_k})}\;\!\Bigr)^2.
\end{equation*}
Define $H_{\eta,\eta'}=\frac 14\bigl(\sqrt{\rho(\eta)}+\sqrt{\rho(\eta')}\bigr)\nabla^{\eta,\eta'}\tilde G_L$.  Using
$\nabla^{\eta,\eta^{i,i+e_k}}\tilde G_L=\nabla^{i,i+e_k}G(\cdot/L)$ one easily establishes
\begin{multline*}
  2\Bigl(\sqrt{\rho(\eta)}-\sqrt{\rho(\eta^{i,i+e_k})}\;\!\Bigr)^2\ge
  4\Bigl(\sqrt{\rho(\eta)}-\sqrt{\rho(\eta^{i,i+e_k})}\Bigr)H_{\eta,\eta^{i,i+e_k}}
  - 2 H_{\eta,\eta^{i,i+e_k}}^2\\
  = \bigl(\rho(\eta)-\rho(\eta^{i,i+e_k})\bigr)\nabla^{i,i+e_k}G(\cdot/L)
  -\frac 18 \bigl(\sqrt{\rho(\eta)}+\sqrt{\rho(\eta^{i,i+e_k})}\bigr)^2(\nabla^{i,i+e_k}G(\cdot/L))^2.
\end{multline*}
Using $q_{\eta,\eta'}=q_{\eta',\eta}$, the inequality $\tfrac 12(x+y)^2\le x^2+y^2$, and
$\mu(\eta)r^V_{\eta,\eta'}=\rho(\eta)q_{\eta,\eta'}$ thus allows to bound
$2q_{\eta,\eta^{i,i+e_k}}(\sqrt{\rho(\eta)}-\sqrt{\rho(\eta^{i,i+e_k})})^2$ from below by
\begin{multline*}
  \qquad\qquad\qquad\bigl(\mu(\eta)r^V_{\eta,\eta^{i,i+e_k}}-\mu(\eta^{i,i+e_k})r^V_{\eta^{i,i+e_k},\eta}\bigr)\nabla^{i,i+e_k}G(\cdot/L)\\
  -\frac 14\bigl(\mu(\eta)r^V_{\eta,\eta^{i,i+e_k}}+\mu(\eta^{i,i+e_k})r^V_{\eta^{i,i+e_k},\eta}\bigr)(\nabla^{i,i+e_k}G(\cdot/L))^2.
  \qquad\qquad\qquad
\end{multline*}
Note that $\sum_{\eta\in\Omega_L} \mu(\eta)r^V_{\eta,\eta^{i+e_k,i}}=\sum_{\eta\in\Omega_L}
\mu(\eta^{i,i+e_k})r^V_{\eta^{i,i+e_k},\eta}$ implies that
\begin{multline}
  \label{eqn:Psi_star_Fs_lower_bound_alternative}
  \Psi^\star_L\bigl(\mu,F^V(\mu)\bigr)\ge
  \sum_{i\in\mathbb T_L^d}\sum_{k=1}^d \biggl[\Bigl(\sum_{\eta\in\Omega_L} \mu(\eta)\bigl(r^V_{\eta,\eta^{i,i+e_k}}
    -r^V_{\eta,\eta^{i+e_k,i}}\bigr)\Bigr)\nabla^{i,i+e_k} G(\cdot/L)\\
    -\frac 14\Bigl(\sum_{\eta\in\Omega_L} \mu(\eta)\bigl(r^V_{\eta,\eta^{i,i+e_k}}+r^V_{\eta,\eta^{i+e_k,i}}\bigr)\Bigr)
    \bigl(\nabla^{i,i+e_k}G(\cdot/L)\bigr)^2\biggr],
\end{multline}
which coincides by~\eqref{eqn:density_current} and~\eqref{eqn:hat_chi} with the right hand side
of~\eqref{eqn:Psi_star_Fs_lower_bound}.
\end{proof}

\subsubsection{Asymptotic Lower Bound for the Free Energy}
\label{sec:Asympt-Lower-Bound}

\begin{proposition}
  \label{prop:free_energy_lower_bound_general}
  Let the assumptions of Theorem~\ref{thm:lower-bound-thm} hold and let $t\in[0,T]$ be such that the path $(\pi_t)_{t\in[0,T]}$ is
  continuous at $t$. Then
  \begin{equation}
    \label{eqn:liminf_general_t}
    \liminf_{L\to\infty}\frac 1{L^d}\mathcal F_{L,\alpha}^V\bigl(\mu^L_t\bigr)\ge \mathcal F_\alpha^V(\rho_t).
  \end{equation}
\end{proposition}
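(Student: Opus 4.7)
The strategy is to exploit the Donsker--Varadhan-type form of~\eqref{eqn:2nd_free_energy}: the supremum defining $\mathcal F_\alpha^V(\pi_t)$ is exactly what one obtains by applying the elementary entropy inequality to linear test functions on $\Omega_L$. Concretely, fix $h\in C(\Lambda;\mathbb R)$ and set $\tilde h_L(\eta):=\sum_{i\in\mathbb T_L^d} h(i/L)\,\eta(i)$. The variational inequality
\begin{equation*}
  \int \tilde h_L \,\mathrm d\mu^L_t
  \le \log\int \mathrm e^{\tilde h_L}\,\mathrm d\nu_\alpha^V
  + \mathcal F_{L,\alpha}^V\bigl(\mu^L_t\bigr)
\end{equation*}
is immediate from the Legendre-dual form of~\eqref{eqn:free_energy_alpha}. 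Using the product structure $\nu_\alpha^V(\eta)\propto\prod_i \mathrm e^{(f'(\alpha)-V(i/L))\eta(i)}\nu_{*,1}(\eta(i))$ from~\eqref{eqn:stationary_measure}, the log-partition function factorises as the Riemann sum
\begin{equation*}
  \log\int \mathrm e^{\tilde h_L}\,\mathrm d\nu_\alpha^V
  = \sum_{i\in\mathbb T_L^d}\log\frac{Z_1(f'(\alpha)+h(i/L)-V(i/L))}{Z_1(f'(\alpha)-V(i/L))}.
\end{equation*}

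After dividing by $L^d$, I would pass to the limit in each piece of the resulting inequality. The Riemann sum on the right converges to the integral term in~\eqref{eqn:2nd_free_energy}, since the integrand is continuous in $u$ by continuity of $V,h$ and the smoothness of $\theta\mapsto\log Z_1(\theta)$ guaranteed by~\eqref{eqn:finite_moments}. The left-hand side equals $L^{-d}\sum_i h(i/L)\hat\rho_i(\mu^L_t) = E_{Q_L}[\langle X_t,h\rangle]$. Since $Q_L\to Q^*=\delta_{(\pi_s)_{s\in[0,T]}}$ narrowly, the evaluation $X_t$ is continuous at every path that is continuous at time $t$ (a standard property of the Skorokhod topology), and the total-mass bound~\eqref{eqn:particle_bound} makes $\gamma\mapsto\langle X_t(\gamma),h\rangle$ bounded, the generalised continuous mapping theorem applies and yields $E_{Q_L}[\langle X_t,h\rangle]\to\langle \pi_t,h\rangle$.

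Combining these two limits gives
\begin{equation*}
  \langle \pi_t,h\rangle
  - \int_\Lambda \log\frac{Z_1(f'(\alpha)+h(u)-V(u))}{Z_1(f'(\alpha)-V(u))}\,\mathrm du
  \le \liminf_{L\to\infty}\frac{1}{L^d}\mathcal F_{L,\alpha}^V\bigl(\mu^L_t\bigr),
\end{equation*}
and taking the supremum over $h\in C(\Lambda;\mathbb R)$ on the left recovers exactly $\mathcal F_\alpha^V(\pi_t)$ by the definition~\eqref{eqn:2nd_free_energy}, which is the desired bound. The only delicate point is the convergence of the linear functional at time $t$: without the hypothesis that $(\pi_s)_{s\in[0,T]}$ is continuous at $s=t$, $X_t$ would only be continuous almost everywhere on Skorokhod space, and the argument would fail — this is precisely why the statement is restricted to continuity points of the limit path, and why Proposition~\ref{lem:continuity_lemma} (narrow continuity of the limit) is invoked in order to upgrade this bound to all $t\in[0,T]$ in~\eqref{eqn:T_limit_free_energy_liminf}.
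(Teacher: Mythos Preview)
Your proposal is correct and follows essentially the same route as the paper: apply the entropy (Fenchel) inequality with the linear test function $\tilde h_L$, factorise the log-partition function via the product structure of $\nu_\alpha^V$, pass to the limit using Riemann-sum convergence on one side and weak convergence $Q_L\to\delta_{(\pi_s)}$ together with continuity of $X_t$ and the mass bound~\eqref{eqn:particle_bound} on the other, then optimise over $h$ to recover~\eqref{eqn:2nd_free_energy}. Your remark on why the continuity hypothesis at $t$ is needed, and how Proposition~\ref{lem:continuity_lemma} subsequently removes it, matches the paper's use of the result exactly.
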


\begin{proof}
For each $h\in C(\Lambda;\mathbb R)$ the entropy inequality (a special case of the Fenchel inequality, see Proposition~8.1 and
page~340 in Appendix~1 in~\cite{Kipnis1999a}) implies
\begin{equation*}
  \begin{split}
    \frac 1{L^d}\mathcal F_{L,\alpha}^V\bigl(\mu^L_t\bigr) &\ge \frac 1{L^d}\biggr[\sum_{\eta\in\Omega_L} \mu^L_t(\eta)
      \sum_{i\in\mathbb T_L^d}h(i/L)\eta(i)-\log\biggl(\sum_{\eta\in\Omega_L}\nu_\alpha^V(\eta)\mathrm e^{\sum_{i\in\mathbb
          T_L^d}h(i/L)\eta(i)}\biggr)\biggr]\\ &=\sum_{\eta\in\Omega_L} \mu^L_t(\eta) \langle \Theta_L(\eta),h\rangle -\frac
    1{L^d}\sum_{i\in\mathbb T_L^d}\log \biggl(\frac{E_{\nu_{\alpha,1}}[ \mathrm e^{(h(i/L)-V(i/L))\eta(0)}]}{E_{\nu_{\alpha,1}}[
        \mathrm e^{-V(i/L)\eta(0)}]}\biggr).
  \end{split}
\end{equation*}
By the assumption of finite moments in~\eqref{eqn:finite_moments} the dominated convergence theorem yields that $u\mapsto
E_{\nu_{\alpha,1}}[\mathrm e^{(h(u)-V(u))\eta(0)}]$ is continuous.

By~\eqref{eqn:particle_bound}, we can restrict to measures with bounded volume, such that a truncation argument, combined with the
weak convergence $Q^L\to Q^*=\delta_{(\pi_t)_{t\in[0,T]}}$ and the continuity of the projection/evaluation at time $t$ implies $
\sum_{\eta\in\Omega_L} \mu^L_t(\eta) \langle \Theta_L(\eta),h\rangle = \mathbb E_{Q_L}[\langle \pi_t,h\rangle]\to \mathbb
E_{Q_L}[\langle \pi_t,h\rangle] = \langle \pi_t,h\rangle$.  Thus
\begin{equation}
  \liminf_{L\to\infty}\frac 1{L^d}\mathcal F_{L,\alpha}^V\bigl(\mu^L_t\bigr)
  \ge \langle \pi_t, h\rangle - \int_\Lambda \log \biggl(\frac{E_{\nu_{\alpha,1}}[ \mathrm e^{(h(u)-V(u))\eta(0)}]}{E_{\nu_{\alpha,1}}
    [ \mathrm e^{-V(u)\eta(0)}]}\biggr)\;\!\mathrm du.
\end{equation}
Taking the supremum with respect to $h\in C(\Lambda;\mathbb R)$ combined with~\eqref{eqn:2nd_free_energy} then finishes the proof.
\end{proof}

\subsubsection{Asymptotic Lower Bound for $\Psi$}
\label{sec:asymp_lower_bound_psi}

The following proofs will depend on uniform continuity of functions (which follows here from continuity and the compactness of the
domain $\Lambda$ (or $[0,T]\times\Lambda$)).

\begin{lemma}
  Under the assumptions of Theorem~\ref{thm:lower-bound-thm}, we have for any $G\in C^{1,2}([0,T]\times\Lambda;\mathbb R)$
  \begin{multline}
    \label{eqn:replace_empirical_1}
    \qquad\limsup_{\epsilon\to 0}\limsup_{L\to\infty}\biggl|\int_0^T \frac 1{L^d}\sum_{i\in\mathbb T_L^d}\sum_{k=1}^d
    \hat\chi_{i,i+e_k}(\mu^L_t) \bigl[L\nabla^{i,i+e_k}G_t(\cdot / L)\bigr]^2\;\!\mathrm dt\\ -\int_0^T\int_\Lambda
    \sum_{\eta\in\Omega_L}\mu^L_t(\eta) \chi\bigl([\Theta_L(\eta)*\iota_\epsilon](u)\bigr) |\nabla G_t(u)|^2\;\!\mathrm du\;\!\mathrm
    dt\biggr|=0.\qquad
  \end{multline}
\end{lemma}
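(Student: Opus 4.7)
The plan is to reduce the discrete left-hand side to the continuum right-hand side through a chain of four approximations, each producing an error that vanishes in the iterated limit $\limsup_{\epsilon\to 0}\limsup_{L\to\infty}$. The overall strategy parallels the Riemann-sum approximations used in the entropy method of Chapter~5 in~\cite{Kipnis1999a}, applied here to the discrete mobility tested against a smooth weight.

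First I would use the $C^{1,2}$ regularity of $G$ to Taylor-expand $L\nabla^{i,i+e_k}G_t(\cdot/L) = \partial_k G_t(i/L) + O(L^{-1})$ uniformly in $(t,i,k)$, with the implicit constant depending only on bounds on the second derivatives of $G$. Squaring, multiplying by $\hat\chi_{i,i+e_k}(\mu^L_t)$, and using \eqref{eqn:C_hat}, the contribution of the remainder vanishes as $L\to\infty$, so it suffices to analyse $\frac{1}{L^d}\int_0^T\sum_{i,k}\hat\chi_{i,i+e_k}(\mu^L_t)(\partial_k G_t(i/L))^2\,\mathrm dt$.

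Next I would replace $\hat\chi_{i,i+e_k}(\mu^L_t)$ by its local spatial average $\hat\chi^{\lfloor\epsilon L\rfloor}_{i,i+e_k}(\mu^L_t)$. This is achieved by first replacing $(\partial_k G_t(i/L))^2$ by its average over a box of radius $\lfloor\epsilon L\rfloor$ (using uniform continuity of $\nabla G_t$; the error vanishes as $\epsilon\to 0$ uniformly in $L$, controlled again by \eqref{eqn:C_hat}) and then shifting the summation index. I would then invoke the local equilibration estimate \eqref{eqn:local_equilibrium}, whose applicability to $\mu^L_{[0,T]}$ follows from Theorem~\ref{thm:repl_lemma_thm}, whose hypotheses are met under those of Theorem~\ref{thm:lower-bound-thm} through \eqref{eqn:liminf_1} and \eqref{eqn:liminf_1_1}. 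Multiplying \eqref{eqn:local_equilibrium} by $T$ and by the uniformly bounded weight $(\partial_k G_t(i/L))^2$, and using linearity of the averaged mobility $\hat\chi^{\lfloor\epsilon L\rfloor}_{i,i+e_k}(\mu^L_t) = \sum_\eta\mu^L_t(\eta)\hat\chi^{\lfloor\epsilon L\rfloor}_{i,i+e_k}(\delta_\eta)$ together with the triangle inequality to pull the absolute value inside the $\eta$-sum, I can replace $\hat\chi^{\lfloor\epsilon L\rfloor}_{i,i+e_k}(\mu^L_t)$ by $\sum_\eta\mu^L_t(\eta)\chi(\eta^{\lfloor\epsilon L\rfloor}(i))$ up to an error vanishing in the required iterated limit. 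Finally, using the exact identity $[\Theta_L(\eta)*\iota_\epsilon](i/L) = ((2\lfloor\epsilon L\rfloor+1)/(2\epsilon L))^d\eta^{\lfloor\epsilon L\rfloor}(i)$, the Lipschitz continuity of $\chi$, and the bound \eqref{eqn:particle_bound}, I obtain $|\chi(\eta^{\lfloor\epsilon L\rfloor}(i)) - \chi([\Theta_L(\eta)*\iota_\epsilon](i/L))| = O(L^{-1})$ uniformly. Since $u\mapsto[\Theta_L(\eta)*\iota_\epsilon](u)$ is piecewise constant on the cells $[(2i-1)/(2L),(2i+1)/(2L))^d$ and $\partial_k G_t$ is uniformly continuous on $[0,T]\times\Lambda$, the Riemann sum $\frac{1}{L^d}\sum_i\chi([\Theta_L(\eta)*\iota_\epsilon](i/L))(\partial_k G_t(i/L))^2$ converges to $\int_\Lambda\chi([\Theta_L(\eta)*\iota_\epsilon](u))(\partial_k G_t(u))^2\,\mathrm du$ uniformly in $\eta$; summing over $k$ yields $|\nabla G_t|^2$ and closes the argument.

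The hard part will be the bookkeeping required to ensure that every error vanishes in the correct order of limits. In particular, transferring the time-averaged replacement lemma \eqref{eqn:local_equilibrium} into an estimate on the integrand carrying the smooth weight $(\partial_k G_t(i/L))^2$ requires combining linearity of the locally averaged mobility with the triangle inequality in order to factor out the bounded weight before taking $\limsup_{\epsilon\to 0}\limsup_{L\to\infty}$; one must also verify that the convolution correction $(2\lfloor\epsilon L\rfloor+1)/(2\epsilon L)\to 1$ is compatible with the Lipschitz constant of $\chi$ and the particle bound \eqref{eqn:particle_bound}, which is where the restriction to $\mu^L_0$ satisfying \eqref{eqn:particle_bound} plays its role.
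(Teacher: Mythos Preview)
Your proposal is correct and follows essentially the same route as the paper's proof: Taylor/uniform-continuity replacement of the discrete gradient, box averaging plus index shift to introduce $\hat\chi^{\lfloor\epsilon L\rfloor}$, application of the local equilibration estimate~\eqref{eqn:local_equilibrium} (valid by Theorem~\ref{thm:repl_lemma_thm}), the Lipschitz/particle-bound handling of the convolution scaling factor, and finally the Riemann-sum passage using the piecewise-constant structure of $[\Theta_L(\eta)*\iota_\epsilon]$. The only step in the paper that you do not mention is a preliminary reduction from $\hat\chi^V$ to $\hat\chi^0$ (needed because $\chi(\alpha)=\hat\chi^0_{i,i+e_k}(\nu_\alpha)$ is defined with $V=0$), but this is a short estimate using $\cosh(\tfrac12\nabla^{i,i+e_k}V(\cdot/L))-1\to 0$ uniformly together with~\eqref{eqn:C_hat}, and fits seamlessly into your scheme.
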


\begin{proof}
We first show that without loss of generality we can set $V=0$ for the rates~\eqref{eqn:rates_V}. We denote with $\hat\chi^V$ the
mobility for a smooth potential $V$ and with $\hat\chi^0$ the mobility for $V=0$. Note that
\begin{multline}
  \label{eqn:some_other_bound}
  \biggl|\int_0^T\frac 1{L^d}\sum_{i\in\mathbb T_L^d}\sum_{k=1}^d \Bigl(\hat\chi_{i,i+e_k}^V(\mu^L_t)
  -\hat\chi_{i,i+e_k}^0(\mu^L_t) \Bigr)\bigl[L\nabla^{i,i+e_k}G_t(\cdot / L)\bigr]^2\mathrm dt\biggr|\\
  \le \int_0^T\frac 1{L^d}\sum_{i\in\mathbb T_L^d}\sum_{k=1}^d \hat\chi_{i,i+e_k}^0(\mu^L_t)
  2\bigl(\cosh\bigl(\tfrac 12\nabla^{i,i+e_k}V(\cdot/L)\bigr)-1\bigr)\bigl[L\nabla^{i,i+e_k}G_t(\cdot / L)\bigr]^2\mathrm dt.
\end{multline}
Taylor's theorem enables us to find for each $t\in[0,T]$ a number $\xi\in(i/L,(i\!+\!e_k)/L)$ for which
$L\nabla^{i,i+e_k}G_t(\cdot / L) = \partial_k G_t(\xi)$.  Defining $C_G:=\sum_{k=1}^d\sup_{t\in[0,T]}\|\partial_k
G_t\|_\infty^2<\infty$ allows us to bound the right hand side of~\eqref{eqn:some_other_bound} from above by
\begin{equation}
  \label{eqn:some_bound}
  \frac {2 C_G T}{L^d}\sum_{i\in\mathbb T_L^d}\sum_{k=1}^d \bigl(\cosh\bigl(\tfrac 12\nabla^{i,i+e_k}V(\cdot/L)\bigr)-1\bigr)
  \hat\chi_{i,i+e_k}^0\biggl(\frac 1T\int_0^T\mu^L_t\;\!\mathrm dt\biggr).
\end{equation}
Using the uniform continuity of $V$ (on the compact set $\Lambda$), we obtain for each $\epsilon>0$ that
$|\nabla^{i,i+e_k}V(\cdot/L)|<\epsilon$ as $L\to\infty$ independent of $i$ and $e_k$, such that~\eqref{eqn:some_bound} is (for $L$
large enough) with~\eqref{eqn:C_hat} bounded by $2C_G C_{\hat\chi} T (\cosh(\epsilon/2)-1)$.  Thus, taking the limit superior
$\epsilon\to 0$ after taking $L\to\infty$ in~\eqref{eqn:some_bound} shows that the left hand side of~\eqref{eqn:some_other_bound}
vanishes. This justifies the replacement of $V$ with $V=0$ in the mobility. We thus drop the indices $V$ and $0$ and simply write
$\hat\chi$ for the mobility with $V=0$.

To prove~\eqref{eqn:replace_empirical_1} it is sufficient to show that
\begin{equation}
  \label{eqn:needed_bounds}
  \begin{split}
&    \limsup_{\epsilon\to 0}\limsup_{L\to\infty}\\
    &\biggl|\int_0^T\frac 1{L^d}\sum_{i\in\mathbb T_L^d}\sum_{k=1}^d \frac{\hat\chi_{i,i+e_k}(\mu^L_t)}{(2\lfloor \epsilon
      L\rfloor\!+\!1)^d} \sum_{|m|\le \lfloor \epsilon L\rfloor}\Bigl(\bigl[L\nabla^{i,i+e_k}G_t(\cdot / L)\bigr]^2
    -\bigl[\partial_kG_t((i\!+\!m)/ L)\bigr]^2\Bigr)\;\!\mathrm dt\biggr|
    \\
    +&\;\!\frac {C_G T}{L^d}\sum_{i\in\mathbb T_L^d}\sum_{k=1}^d \;\!\sum_{\eta\in\Omega_L}
    \biggl(\frac 1T\int_0^T\mu^L_t(\eta)\mathrm dt\biggr)\Bigl|\hat\chi_{i,i+e_k}^{\lfloor \epsilon L\rfloor}(\delta_\eta)
    - \hat\chi_{i,i+e_k}\bigl(\nu_{\eta^{\lfloor \epsilon L\rfloor}(i)}\bigr)\Bigr|
    \\
    +&\;\!\frac {C_G T }{L^d}\sum_{i\in\mathbb T_L^d}\sum_{\eta\in\Omega_L}\biggl(\frac 1T\int_0^T\mu^L_t(\eta)\mathrm dt\biggr)
    \Bigl|\chi\bigl(\eta^{\lfloor \epsilon L\rfloor}(i)\bigr)-\chi\Bigl(\Bigl(\frac{2\epsilon L}{2{\lfloor \epsilon L\rfloor}+1}\Bigr)^d
    \eta^{\lfloor \epsilon L\rfloor}(i)\Bigr)\Bigr|
    \\
    +&\;\!\biggl|\int_0^T\frac 1{L^d}\sum_{i\in\mathbb T_L^d}\sum_{\eta\in\Omega_L}\mu^L_t(\eta) \chi\bigl([\Theta_L(\eta)*\iota_\epsilon](i/L)
    \bigr) |\nabla G_t(i/ L)|^2\;\!\mathrm dt
    \\
    &\;\!\qquad\qquad\qquad-\int_0^T \int_\Lambda \sum_{\eta\in\Omega_L}\mu^L_t(\eta) \chi\bigl([\Theta_L(\eta)*\iota_\epsilon](u)\bigr)
    |\nabla G_t(u)|^2\;\!\mathrm du\;\!\mathrm dt\biggr|=0.
  \end{split}
\end{equation}

By uniform continuity of $(\partial_k G_t)^2$ for each $\delta>0$ there exists an $\epsilon>0$ such that $|u-u'|<\epsilon$ implies
that $|(\partial_k G_t(u))^2-(\partial_k G_t(u'))^2|<\delta$ uniformly in $t\in[0,T]$. Thus, by~\eqref{eqn:C_hat}, the first term
in~\eqref{eqn:needed_bounds} is, for $\epsilon$ small enough, bounded by
\begin{multline*}
  \int_0^T\frac 1{L^d}\sum_{i\in\mathbb T_L^d}\sum_{k=1}^d \frac{\hat\chi_{i,i+e_k}(\mu^L_t)}{(2\lfloor \epsilon L\rfloor\!+\!1)^d}
  \sum_{|m|\le \lfloor \epsilon L\rfloor}\Bigl|\bigl[L\nabla^{i,i+e_k}G_t(\cdot / L)\bigr]^2
  -\bigl[\partial_kG_t((i\!+\!m)/ L)\bigr]^2\Bigr|\;\!\mathrm dt\\
  \le T\delta C_{\hat\chi}.
\end{multline*}
Letting $\delta\to 0$ shows that the first term in~\eqref{eqn:needed_bounds} vanishes.

The second term is controlled by the local equilibrium assumption~\eqref{eqn:local_equilibrium}; the third term vanishes using the
Lipschitz continuity of $\chi$ and the bound on the expected number of particles: The Lipschitz continuity yields that the third
summand in~\eqref{eqn:needed_bounds} is bounded by
\begin{equation*}
  C_G C_{\rm Lip}T\biggl|1-\Bigl(\frac{2\epsilon L}{2{\lfloor \epsilon L\rfloor}+1}\Bigr)^d \biggr|\sum_{\eta\in\Omega_L}
  \biggl(\frac 1T \int_0^T\mu^L_t(\eta)\;\!\mathrm dt\biggr)\;\!\frac 1{L^d}\sum_{i\in\mathbb T_L^d}\eta^{\lfloor \epsilon L\rfloor}(i).
\end{equation*}
By the conservation of particles, the last expression can be bounded by $C_G C_{\rm Lip} C_{\rm tot} T\;\!
\bigl|1-\bigl(\frac{2\epsilon L}{2{\lfloor \epsilon L\rfloor}+1}\bigr)^d \bigr|$, which vanishes as $L\to\infty$.

For the last term in~\eqref{eqn:needed_bounds} recall that $[\Theta_L(\eta)*\iota_\epsilon](u)$ is piecewise constant on
$\{[\frac{2i-1}{2L},\frac{2i+1}{2L})^d\}_{i\in\mathbb T_L^d}$ (cf.~\eqref{eqn:conv_empirical_const}). The proof thus reduces to
  establishing a bound for
  \begin{equation*}
    \int_0^T\sum_{i\in\mathbb T_L^d}\sum_{\eta\in\Omega_L}\mu^L_t(\eta) \chi\bigl([\Theta_L(\eta)*\iota_\epsilon](i/L)\bigr)
    \biggl|\int_{[\frac{2i-1}{2L},\frac{2i+1}{2L})^d}\bigl(|\nabla G_t(i/ L)|^2 - |\nabla G_t(u)|^2\bigr)\;\!\mathrm du \biggr|
      \;\!\mathrm dt,
\end{equation*}
which is easily obtained, as the the last expression is by the Lipschitz continuity,~\eqref{eqn:particle_bound},
and~\eqref{eqn:conv_empirical_const} bounded above by
\begin{equation*}
  C_{\rm Lip}C_{\rm tot} (2\epsilon)^{-d}\int_0^T \sum_{i\in\mathbb T_L^d}\int_{[\frac{2i-1}{2L},\frac{2i+1}{2L})^d} \bigl|\nabla
    G_t(i/ L)|^2 - |\nabla G_t(u)|^2\bigr|\;\!\mathrm du \;\!\mathrm dt,
\end{equation*}
which converges by the uniform continuity of $\nabla G$ to zero for $L\to\infty$.
\end{proof}
Note that the above proof does not depend on the fact that we consider the square gradient of a function $G$. We can replace the
square by the product of two different gradients and immediately obtain the following results.

\begin{lemma}
  \label{lem:replace_empirical}
  Under the assumptions of Theorem~\ref{thm:lower-bound-thm} we have for any $G,H\in C^1([0,T]\times\Lambda;\mathbb R)$ that
  \begin{multline}
    \limsup_{\epsilon\to 0}\limsup_{L\to\infty}\biggl|\int_0^T \frac 1{L^d}\sum_{i\in\mathbb T_L^d}\sum_{k=1}^d
    \hat\chi_{i,i+e_k}(\mu^L_t) \bigl[L\nabla^{i,i+e_k}H_t(\cdot / L)\bigr] \bigl[L\nabla^{i,i+e_k}G_t(\cdot / L)\bigr]\;\!\mathrm
    dt\\ -\int_0^T\int_\Lambda \sum_{\eta\in\Omega_L}\mu^L_t(\eta) \chi\bigl([\Theta_L(\eta)*\iota_\epsilon](u)\bigr) \nabla
    H_t(u)\cdot \nabla G_t(u)\;\!\mathrm du\;\!\mathrm dt\biggr|=0.
  \end{multline}
\end{lemma}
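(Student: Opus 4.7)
The plan is to follow the strategy of the preceding lemma essentially verbatim, exploiting the fact that the integrand is now bilinear rather than quadratic in the test function. The cleanest route is via polarization: for any real numbers $a,b$ we have $ab = \tfrac14[(a+b)^2-(a-b)^2]$, so setting $a = L\nabla^{i,i+e_k}H_t(\cdot/L)$ and $b = L\nabla^{i,i+e_k}G_t(\cdot/L)$ expresses the discrete bilinear form as a difference of two quadratic forms of exactly the type treated in the previous lemma, applied to the test functions $H+G$ and $H-G$. The same polarization identity holds on the continuum side for $\nabla H_t\cdot\nabla G_t$, so the claim reduces to two applications of the previous lemma.

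The only minor technical point is that the previous lemma was stated for $G\in C^{1,2}([0,T]\times\Lambda;\mathbb R)$, while here only $C^1$ regularity is assumed. Inspecting that proof, however, the only properties of $G$ that were used are (i) uniform continuity of $\partial_k G_t$ on the compact set $[0,T]\times\Lambda$, needed to bound the first and fourth summands in~\eqref{eqn:needed_bounds}, and (ii) boundedness of $\partial_k G_t$, needed to define the constant $C_G$. Both hold for $C^1$ test functions on a compact domain, so the preceding lemma extends without change to $C^1$ test functions, and applying it to $H+G$ and $H-G$ yields the stated convergence.

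As an alternative to polarization, one can simply repeat the four-step decomposition in~\eqref{eqn:needed_bounds} verbatim, with $[L\nabla^{i,i+e_k}G_t(\cdot/L)]^2$ replaced by $[L\nabla^{i,i+e_k}H_t(\cdot/L)][L\nabla^{i,i+e_k}G_t(\cdot/L)]$ and $|\nabla G_t(u)|^2$ replaced by $\nabla H_t(u)\cdot\nabla G_t(u)$. In each of the four error terms one uses a Cauchy--Schwarz step together with the uniform continuity of $\nabla H_t$ or $\nabla G_t$ in place of the uniform continuity of $(\partial_k G_t)^2$, and the bounds go through with the obvious modifications: the first term is controlled by uniform continuity of $\nabla H_t\cdot\nabla G_t$ combined with~\eqref{eqn:C_hat}; the second by the local equilibrium assumption~\eqref{eqn:local_equilibrium}; the third by Lipschitz continuity of $\chi$ together with~\eqref{eqn:particle_bound}; and the fourth by uniform continuity of $\nabla H_t\cdot\nabla G_t$ on the compact set $[0,T]\times\Lambda$.

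I do not expect any substantive obstacle: the result is essentially a bookkeeping generalization of the quadratic case, and the polarization argument makes this entirely transparent. The only thing to double-check is that the replacement of $V$ with $V=0$ in the mobility $\hat\chi_{i,i+e_k}$, carried out at the start of the previous proof, still works with $[L\nabla^{i,i+e_k}G_t]^2$ replaced by a bilinear expression; here one applies Cauchy--Schwarz to reduce to the square case and then invokes uniform continuity of $V$ exactly as before.
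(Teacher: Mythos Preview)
Your proposal is correct and essentially matches the paper's approach. The paper does not give a separate proof for this lemma; it simply remarks after the preceding proof that ``the above proof does not depend on the fact that we consider the square gradient of a function $G$. We can replace the square by the product of two different gradients and immediately obtain the following results,'' which is exactly your second alternative. Your polarization argument is a clean equivalent shortcut, and your observation about the $C^1$ versus $C^{1,2}$ regularity is a valid sanity check that the paper leaves implicit.
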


\begin{corollary}
  \label{cor:asymp_upper_bound} 
  Under the assumptions of Theorem~\ref{thm:lower-bound-thm} we have for any $G\in C^{1,2}([0,T]\times\Lambda;\mathbb R)$ that
  \begin{multline}
    \label{eqn:cor_1}
    \qquad\limsup_{\epsilon\to 0}\limsup_{L\to\infty}\biggl|\int_0^T \frac 1{L^d}\sum_{i\in\mathbb T_L^d}\sum_{k=1}^d
    \hat\chi_{i,i+e_k}(\mu^L_t) \bigl[2L\sinh\bigl(\tfrac 12\nabla^{i,i+e_k}G_t(\cdot / L)\bigr)\bigr]^2\;\!\mathrm
    dt\\ -\int_0^T\int_\Lambda \sum_{\eta\in\Omega_L}\mu^L_t(\eta) \chi\bigl([\Theta_L(\eta)*\iota_\epsilon](u)\bigr) |\nabla
    G_t(u)|^2\;\!\mathrm du\;\!\mathrm dt\biggr|=0\qquad
  \end{multline}
  and for any $G,H\in C^{1,2}([0,T]\times\Lambda;\mathbb R)$
  \begin{multline}
    \label{eqn:cor_2}
    \limsup_{\epsilon\to 0}\limsup_{L\to\infty}\\
    \biggl|\int_0^T \frac 1{L^d}\sum_{i\in\mathbb T_L^d}\sum_{k=1}^d
    \hat\chi_{i,i+e_k}(\mu^L_t) \bigl[2L\sinh\bigl(\tfrac 12\nabla^{i,i+e_k}G_t(\cdot /
      L)\bigr)\bigr]\bigl[L\nabla^{i,i+e_k}H_t(\cdot / L)\bigr]\;\!\mathrm dt\\ -\int_0^T\int_\Lambda
    \sum_{\eta\in\Omega_L}\mu^L_t(\eta) \chi\bigl([\Theta_L(\eta)*\iota_\epsilon](u)\bigr) \nabla G_t(u)\cdot \nabla
    H_t(u)\;\!\mathrm du\;\!\mathrm dt\biggr|=0.\quad
  \end{multline}
\end{corollary}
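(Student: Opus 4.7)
The plan is to reduce Corollary~\ref{cor:asymp_upper_bound} to the preceding quadratic lemma (for~\eqref{eqn:cor_1}) and to Lemma~\ref{lem:replace_empirical} (for~\eqref{eqn:cor_2}) by Taylor-expanding the $\sinh$-terms. For $G\in C^{1,2}([0,T]\times\Lambda;\mathbb R)$ the mean-value theorem gives $|\nabla^{i,i+e_k}G_t(\cdot/L)|\le C_G/L$ uniformly in $i,k,t$, with $C_G:=\sum_{k=1}^d\sup_{(t,u)}|\partial_k G_t(u)|$. Applying Lagrange's remainder $\sinh(x)=x+\tfrac{\cosh(\xi)}{6}x^3$ at $x=\tfrac 12\nabla^{i,i+e_k}G_t(\cdot/L)$ produces the uniform estimate
\[
  \Bigl|2L\sinh\bigl(\tfrac 12\nabla^{i,i+e_k}G_t(\cdot/L)\bigr)-L\nabla^{i,i+e_k}G_t(\cdot/L)\Bigr|\le \frac{C_G^3\cosh(C_G/2)}{24\,L^2},
\]
and an analogous bound holds for $H$.

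Next, using $a^2-b^2=(a+b)(a-b)$ for~\eqref{eqn:cor_1} and the trivial factorisation $(a-b)\cdot L\nabla^{i,i+e_k}H_t(\cdot/L)$ for~\eqref{eqn:cor_2}, together with the uniform bounds $|L\nabla^{i,i+e_k}G_t(\cdot/L)|\le C_G$ and $|L\nabla^{i,i+e_k}H_t(\cdot/L)|\le C_H$, the difference between the $\sinh$-based and the linear-gradient-based integrands is uniformly of order $O(L^{-2})$ in $i,k,t$. Multiplying by $\hat\chi_{i,i+e_k}(\mu^L_t)/L^d$, summing over $(i,k)$, and integrating in time produces an error bounded by $C\,L^{-2}\cdot L^{-d}\int_0^T\sum_{i,k}\hat\chi_{i,i+e_k}(\mu^L_t)\,\mathrm dt$ for some $C$ depending only on $G$ and $H$. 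Since $\hat\chi_{i,i+e_k}(\cdot)$ is linear in its argument (cf.~\eqref{eqn:hat_chi}), the integral equals $T\,L^{-d}\sum_{i,k}\hat\chi_{i,i+e_k}(\mu^L_{[0,T]})$, and the time-averaged measures $(\mu^L_{[0,T]})_{L\in\mathbb N}$ inherit~\eqref{eqn:particle_bound} by conservation of particles. Assumption~\eqref{eqn:C_hat} then bounds this quantity by $TC_{\hat\chi}+o(1)$, so the whole error vanishes as $L\to\infty$.

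Once the $\sinh$ has been replaced by the linear gradient with vanishing error, taking $\epsilon\to 0$ after $L\to\infty$ reduces~\eqref{eqn:cor_1} to~\eqref{eqn:replace_empirical_1} and~\eqref{eqn:cor_2} to Lemma~\ref{lem:replace_empirical}, finishing the proof. I anticipate no serious obstacle: the Taylor step is routine because the lattice spacing $1/L$ supplies the small parameter and the test functions are $C^{1,2}$, so all derivatives are uniformly bounded; the only quantitative input needed is the mobility bound~\eqref{eqn:C_hat}, which has already been used pervasively in this section. The hardest part is merely bookkeeping, namely verifying that the error terms remain integrable in time, which is exactly what the linearity of $\hat\chi$ and the passage to the time-averaged measure $\mu^L_{[0,T]}$ handle.
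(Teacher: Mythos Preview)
Your proposal is correct and is precisely the argument the paper intends: the corollary is stated without proof, immediately after Lemma~\ref{lem:replace_empirical} and the quadratic lemma~\eqref{eqn:replace_empirical_1}, so the implied reasoning is exactly the Taylor expansion of $\sinh$ that you carry out, reducing the $\sinh$-expressions to the linear discrete gradients with an $O(L^{-2})$ error controlled by~\eqref{eqn:C_hat}. Your bookkeeping (the factorisations $a^2-b^2$ and $(a-b)c$, the linearity of $\hat\chi$ in $\mu$, and the passage to $\mu^L_{[0,T]}$) is all in order.
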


We now turn to the proof of the lower bound in~\eqref{eqn:lower_bound_psi}.
\begin{proposition}
  \label{prop:proof_psi_bound}
  Let the assumptions of Theorem~\ref{thm:lower-bound-thm} hold. Then~\eqref{eqn:lower_bound_psi} is satisfied.
\end{proposition}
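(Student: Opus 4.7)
\begin{sketchofproofof}{Proposition~\ref{prop:proof_psi_bound}}
The plan is to exploit the Legendre duality between $\Psi_L$ and $\Psi^\star_L$, which via the Fenchel--Young inequality yields, for every antisymmetric force $F$,
\begin{equation*}
\Psi_L(\mu^L_t,\jmath^L_t) \ge \langle \jmath^L_t, F\rangle_L - \Psi^\star_L(\mu^L_t,F).
\end{equation*}
Choosing $F = \nabla \tilde G_{L,t}$ with $\tilde G_{L,t}(\eta) = \sum_{i\in\mathbb T_L^d} G_t(i/L)\eta(i)$, for arbitrary $G\in C^{1,2}([0,T]\times\Lambda;\mathbb R)$, and integrating in time, the goal is to show
\begin{equation*}
\liminf_{L\to\infty}\frac 1{L^d}\int_0^T \Psi_L(\mu^L_t,\jmath^L_t)\,\mathrm dt \ge L_{\mathcal E}(G) - \frac 12 \int_0^T \|\nabla G_t\|_{\chi(\rho_t)}^2\,\mathrm dt.
\end{equation*}
Taking the supremum over $G$ then yields $\mathcal E((\rho_t)_{t\in[0,T]})$ on the right-hand side by~\eqref{eqn:e_op_norm}.

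For the current term, summation by parts in $\Omega_L$, combined with the continuity equation $\partial_t\mu^L_t = -\operatorname{div}\jmath^L_t$ and integration by parts in time, gives
\begin{multline*}
\frac 1{L^d}\int_0^T \langle \jmath^L_t,\nabla\tilde G_{L,t}\rangle_L\,\mathrm dt
= \mathbb E_{\mu^L_T}\!\bigl[\langle\Theta_L,G_T\rangle\bigr] - \mathbb E_{\mu^L_0}\!\bigl[\langle\Theta_L,G_0\rangle\bigr]\\
- \int_0^T \mathbb E_{\mu^L_t}\!\bigl[\langle\Theta_L,\partial_t G_t\rangle\bigr]\,\mathrm dt.
\end{multline*}
After truncating to bound $\langle\Theta_L,\cdot\rangle$ using the total mass bound~\eqref{eqn:particle_bound}, the narrow convergence $Q_L\to\delta_{(\pi_t)}$ together with the continuity of the marginal evaluation maps at $t=0$ and $t=T$ (and at almost every intermediate time) implies convergence of the right-hand side to $L_{\mathcal E}(G)$.

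For the $\Psi^\star_L$ term, Proposition~\ref{prop:upper_bound_finite} provides the upper bound
\begin{equation*}
\Psi^\star_L(\mu^L_t,\nabla\tilde G_{L,t}) \le \frac 12\sum_{i\in\mathbb T_L^d}\sum_{k=1}^d \hat\chi^0_{i,i+e_k}(\mu^L_t)\bigl[2L\sinh\bigl(\tfrac 12\nabla^{i,i+e_k}G_t(\cdot/L)\bigr)\bigr]^2,
\end{equation*}
and Corollary~\ref{cor:asymp_upper_bound}, whose local-equilibrium input~\eqref{eqn:local_equilibrium} is supplied by Theorem~\ref{thm:repl_lemma_thm}, replaces the right-hand side (up to vanishing error as $L\to\infty$ and then $\epsilon\to 0$) by
\begin{equation*}
\frac 12\int_\Lambda\sum_{\eta\in\Omega_L}\mu^L_t(\eta)\,\chi\bigl([\Theta_L(\eta)*\iota_\epsilon](u)\bigr)\,|\nabla G_t(u)|^2\,\mathrm du.
\end{equation*}

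The main obstacle is the final passage to the continuum limit of the smoothed density. For fixed $\epsilon>0$, the functional $\pi\mapsto\int_0^T\!\int_\Lambda \chi([\pi_t*\iota_\epsilon](u))|\nabla G_t(u)|^2\,\mathrm du\,\mathrm dt$ is continuous on paths of uniformly bounded mass, since convolution with $\iota_\epsilon$ smooths the empirical measure and $\chi$ is Lipschitz; narrow convergence $Q_L\to\delta_{(\pi_t)}$ combined with truncation via~\eqref{eqn:particle_bound} then identifies the $L\to\infty$ limit with $\tfrac 12\int_0^T\!\int_\Lambda \chi([\pi_t*\iota_\epsilon](u))|\nabla G_t(u)|^2\,\mathrm du\,\mathrm dt$. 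The Lebesgue differentiation theorem yields $[\pi_t*\iota_\epsilon](u)\to\rho_t(u)$ for a.a.~$(t,u)$, and the linear bound $\chi(a)\le C_{\rm Lip}\,a$ together with particle conservation provides an integrable majorant; dominated convergence then delivers $\tfrac 12\int_0^T\|\nabla G_t\|_{\chi(\rho_t)}^2\,\mathrm dt$ as $\epsilon\to 0$. Assembling these ingredients proves the displayed inequality for every $G$, and taking the supremum establishes~\eqref{eqn:lower_bound_psi}.
\end{sketchofproofof}
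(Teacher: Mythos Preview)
Your proposal is correct and follows essentially the same route as the paper: Fenchel--Young duality with the test force $\nabla\tilde G_{L,t}$, summation by parts plus the continuity equation for the current term, Proposition~\ref{prop:upper_bound_finite} and Corollary~\ref{cor:asymp_upper_bound} (fed by the replacement lemma) for the $\Psi^\star_L$ term, then narrow convergence at fixed $\epsilon$ followed by Lipschitz-controlled dominated convergence as $\epsilon\to 0$. The paper packages the limiting object into a single continuous bounded functional $f^{\epsilon,G}$ on path space rather than treating the current and mobility pieces separately, but this is purely organisational.
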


\begin{proof}
For any $G\in C^{1,2}([0,T]\times\Lambda;\mathbb R)$ we have
\begin{multline}
  \label{eqn:starting_point}
  \quad \sum_{\eta\in\Omega_L} \tilde G_L(T,\eta) \mu^L_T(\eta) -\sum_{\eta\in\Omega_L} \tilde G_L(0,\eta) \mu^L_0(\eta) -
  \int_0^T \sum_{\eta\in\Omega_L} \partial_t \tilde G_L(t,\eta)\mu^L_t(\eta)\;\!\mathrm dt\\ = \int_0^T \langle \jmath^L_t, \nabla
  \tilde G_L(t,\cdot)\rangle_L\;\!\mathrm dt \le \int_0^T\Psi_L(\mu^L_t, \jmath^L_t)\;\!\mathrm dt+ \int_0^T\Psi^\star_L(\mu^L_t,
  \nabla \tilde G_L(t,\cdot))\;\!\mathrm dt.\quad
\end{multline}
Combined with Proposition~\ref{prop:upper_bound_finite} we obtain that $\frac 1{L^d} \int_0^T\Psi_L(\mu^L_t,\jmath^L_t)\;\!\mathrm
dt$ is bounded below by
\begin{multline}
  \label{eqn:starting_point_2}
  \quad\sum_{\eta\in\Omega_L} \mu^L_T(\eta)\langle\Theta_L(\eta),G_T\rangle -\sum_{\eta\in\Omega_L}
  \mu^L_0(\eta)\langle\Theta_L(\eta),G_0\rangle - \int_0^T \sum_{\eta\in\Omega_L} \mu^L_t(\eta) \langle\Theta_L(\eta),\partial_t
  G_t\rangle\;\!\mathrm dt\\ - \frac 1{2L^d}\int_0^T \sum_{i\in\mathbb T_L^d}\sum_{k=1}^d \hat \chi^0_{i,i+e_k}(\mu^L_t) \bigl[2L
    \sinh\bigl(\tfrac 12 \nabla^{i,i+e_k}G_t(\cdot/L)\bigr)\bigr]^2\;\!\mathrm dt.\quad
\end{multline}

For $\epsilon>0$ and $G$ fixed we define the function $f^{\epsilon,G}\colon\mathcal D([0,T];\mathcal M_+(\Lambda))\to\mathbb R$
which assigns to a path $(\tilde\pi_t)_{t\in[0,T]}$ the value
\begin{multline*}
  f^{\epsilon,G}((\tilde\pi_t)_{t\in[0,T]}) := \langle \tilde\pi_T,G_T\rangle -\langle \tilde\pi_0,G_0\rangle - \int_0^T\langle
  \tilde\pi_t,\partial_t G_t\rangle\;\!\mathrm dt \\
  - \frac 12 \int_0^T\int_\Lambda
  \chi\bigl([\tilde\pi_t*\iota_\epsilon](u)\bigr)|\nabla G_t(u)|^2\;\!\mathrm du\;\!\mathrm dt.
\end{multline*}
By~\eqref{eqn:particle_bound}, we can restrict $f^{\epsilon,G}$ to measures with bounded volume. In this case $f^{\epsilon,G}$ is
continuous and bounded, which follows from dominated convergence using the estimate
$\chi\bigl([\pi_t*\iota_\epsilon](u)\bigr)|\nabla G_t(u)|^2 \le C_G C_{\rm Lip}C_{\rm tot}/(2\epsilon)^d<\infty$.  We can
rewrite~\eqref{eqn:starting_point_2} as
\begin{multline*}
  \qquad\mathbb E_{Q_L}\bigl[f^{\epsilon,G}\bigr]+\frac 12\int_0^T \int_\Lambda \sum_{\eta\in\Omega_L}\mu^L_t(\eta)
  \chi\bigl([\Theta_L(\eta)*\iota_\epsilon](u)\bigr) |\nabla G_t|^2\;\!\mathrm du\;\!\mathrm dt\\ -\frac 1{2L^d}\int_0^T
  \sum_{i\in\mathbb T_L^d}\sum_{k=1}^d \hat \chi_{i,i+e_k}(\mu^L_t) \bigl[2L \sinh\bigl(\tfrac 12
    \nabla^{i,i+e_k}G_t(\cdot/L)\bigr)\bigr]^2\;\!\mathrm dt\qquad
\end{multline*}
and define the remainder
\begin{multline*}
  \qquad R_L^\epsilon:=\frac 12\biggl|\int_0^T \frac 1{L^d}\sum_{i\in\mathbb T_L^d}\sum_{k=1}^d \hat\chi_{i,i+e_k}(\mu_t)
  \bigl[2L\sinh\bigl(\tfrac 12\nabla^{i,i+e_k}G_t(\cdot / L)\bigr)\bigr]^2\;\!\mathrm dt\\ -\int_0^T\int_\Lambda
  \sum_{\eta\in\Omega_L}\mu_t(\eta) \chi\bigl([\Theta_L(\eta)*\iota_\epsilon](u)\bigr) |\nabla G_t|^2\;\!\mathrm du\;\!\mathrm
  dt\biggr|\qquad\qquad
\end{multline*}
to obtain $L^{-d} \int_0^T\Psi_L(\mu_t, j_t)\;\!\mathrm dt\ge \mathbb E_{Q_L}\bigl[f^{\epsilon,G}\bigr] - R_L^\epsilon$.

Since $f^{\epsilon,G}$ is continuous and bounded, the weak convergence $Q_L\to Q^*=\delta_{(\pi_t)_{t\in[0,T]}}$ implies that
$\lim_{L\to\infty}\mathbb E_{Q_L}\bigl[f^{\epsilon,G}\bigr] = \mathbb
E_{Q^*}\bigl[f^{\epsilon,G}\bigr]=f^{\epsilon,G}((\pi_t)_{t\in[0,T]})$.  Furthermore $\limsup_{\epsilon\to
  0}\limsup_{L\to\infty}R_L^\epsilon=0$ by Corollary~\ref{cor:asymp_upper_bound}. Thus $\liminf_{L\to\infty} L^{-d}
\int_0^T\Psi_L(\mu_t, j_t)\;\!\mathrm dt\ge \liminf_{\epsilon\to0}f^{\epsilon,G}((\pi_t)_{t\in[0,T]})$.

For $\pi_t(\mathrm du) = \rho_t(u)\;\!\mathrm du$ the distance
$|f^{\epsilon,G}((\pi_t)_{t\in[0,T]})-f^{0,G}((\pi_t)_{t\in[0,T]})|$ is bounded from above by
\begin{equation}
  \label{eqn:chi_rho_conv}
  \frac {C_G}{2}\int_0^T \int_\Lambda\Bigl|\chi\bigl([\rho_t*\iota_\epsilon](u)\bigr)-\chi(\rho_t(u))\Bigr|\mathrm du \;\!\mathrm
  dt \le \frac {C_GC_{\rm Lip}}{2}\int_0^T \int_\Lambda\bigl|[\rho_t*\iota_\epsilon](u)-\rho_t(u)\bigr|\mathrm du \;\!\mathrm dt,
\end{equation}
which is integrable. The dominated convergence theorem then implies that $f^{\epsilon,G}((\pi_t)_{t\in[0,T]}) \to
f^{0,G}((\pi_t)_{t\in[0,T]})$, which proves $\liminf_{L\to\infty} L^{-d} \int_0^T\Psi_L(\mu_t, j_t)\;\!\mathrm dt\ge
f^{0,G}((\pi_t)_{t\in[0,T]})$.  Taking the supremum over all $G\in C^{1,2}([0,T]\!\times\! \Lambda;\mathbb R)$ finally
yields~\eqref{eqn:lower_bound_psi}.
\end{proof}

\subsubsection{Asymptotic Lower Bound for $\Psi^\star$}

The proofs in this section are very similar to the proofs in Section~\ref{sec:asymp_lower_bound_psi}. We will therefore be brief.
\begin{lemma}
  \label{lem:2nd_remainder}
  Suppose the assumptions of Theorem~\ref{thm:lower-bound-thm} hold. Then
  \begin{multline}
    \label{eqn:2nd_remainder}
    \limsup_{\epsilon\to 0}\limsup_{L\to\infty}\\ \biggl|\int_0^T \biggl(\frac 1{L^d}\sum_{i\in\mathbb T_L^d}\sum_{k=1}^d
    \Bigl[\bigl(L\hat\jmath_{i,i+e_k}^V(\mu^L_t)\bigr)( L\nabla^{i,i+e_k} G_t(\cdot / L)) -\frac 12 \hat\chi^V_{i,i+e_k}(\mu^L_t)
      \bigl[L\nabla^{i,i+e_k}G_t(\cdot / L)\bigr]^2\Bigr]\\ -\mathbb E_{Q_L}\biggl[\int_\Lambda
      \phi\bigl([\pi_t*\iota_\epsilon](u)\bigr)\Delta G_t\;\!\mathrm du - \int_\Lambda
      \chi\bigl([\pi_t*\iota_\epsilon](u)\bigr)\nabla V\cdot \nabla G_t\;\!\mathrm du \\
      -\frac 12\int_\Lambda
      \chi\bigl([\pi_t*\iota_\epsilon](u)\bigr)|\nabla G_t|^2\;\!\mathrm du\biggr]\biggr)\mathrm dt\;\!  \biggr|=0.
  \end{multline}
\end{lemma}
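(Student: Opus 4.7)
\begin{proofof}{Lemma~\ref{lem:2nd_remainder}}
The plan is to decompose the microscopic integrand so that the three macroscopic terms on the right-hand side appear one at a time, and then control each residual by combining a Taylor expansion in the potential, the gradient structure, discrete summation by parts, the local equilibrium assumption and the conversion of lattice sums to integrals already carried out in the proof of Proposition~\ref{prop:proof_psi_bound}.

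First I would Taylor-expand the rates~\eqref{eqn:rates_V}. Since $|\nabla^{i,i+e_k}V(\cdot/L)|=O(L^{-1})$ uniformly by compactness of $\Lambda$ and smoothness of $V$, the expansion $\mathrm e^{\pm x}=1\pm x+O(x^2)$ gives
\begin{equation*}
  \hat\jmath^V_{i,i+e_k}(\mu) = \hat\jmath^0_{i,i+e_k}(\mu) - \nabla^{i,i+e_k}V(\cdot/L)\;\!\hat\chi^0_{i,i+e_k}(\mu) + O(L^{-2})
\end{equation*}
and $\hat\chi^V_{i,i+e_k}(\mu)=\hat\chi^0_{i,i+e_k}(\mu)+O(L^{-2})$. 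After multiplication by the factors of $L$ present in the integrand of~\eqref{eqn:2nd_remainder}, the $O(L^{-2})$ corrections are absorbed into errors that vanish by~\eqref{eqn:C_hat}, exactly as in the reduction of $\hat\chi^V$ to $\hat\chi^0$ made at the beginning of the proof of Lemma~\ref{lem:replace_empirical}. The gradient dynamics identity~\eqref{eqn:gradient_j0} then allows us to rewrite $\hat\jmath^0_{i,i+e_k}(\mu)=-\nabla^{i,i+e_k}\hat\phi(\mu)$.

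Next I would perform a discrete summation by parts on the torus, which transforms $\sum_{i,k}L\nabla^{i,i+e_k}\hat\phi(\mu^L_t)\;\!L\nabla^{i,i+e_k}G_t(\cdot/L)$ into $-\sum_i\hat\phi_i(\mu^L_t)\Delta_L G_t(i/L)$, where $\Delta_L G_t\to\Delta G_t$ uniformly since $G\in C^{1,2}$. Replacing the unaveraged $\hat\phi_i$ and $\hat\chi^0_{i,i+e_k}$ by their box-averaged counterparts $\hat\phi^{\lfloor\epsilon L\rfloor}_i$ and $\hat\chi^{\lfloor\epsilon L\rfloor}_{i,i+e_k}$ introduces an error that vanishes by shifting indices and exploiting uniform continuity of $\Delta G_t$, $\nabla V\cdot\nabla G_t$ and $|\nabla G_t|^2$ on $[0,T]\times\Lambda$ together with~\eqref{eqn:C_hat} and~\eqref{eqn:phi_hat}. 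The local equilibrium assumptions~\eqref{eqn:local_equilibrium} and~\eqref{eqn:local_equilibrium_2}, which hold under the hypotheses of Theorem~\ref{thm:lower-bound-thm} by Theorem~\ref{thm:repl_lemma_thm}, then let us replace these box-averaged quantities by $\sum_\eta\mu^L_t(\eta)\chi(\eta^{\lfloor\epsilon L\rfloor}(i))$ and $\sum_\eta\mu^L_t(\eta)\phi(\eta^{\lfloor\epsilon L\rfloor}(i))$, up to remainders that vanish in the iterated limit $\lim_{\epsilon\to 0}\lim_{L\to\infty}$.

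Finally I would convert the resulting Riemann sums $L^{-d}\sum_i$ into integrals against $[\Theta_L(\eta)*\iota_\epsilon](u)$, using the piecewise-constant identity~\eqref{eqn:conv_empirical_const}, the Lipschitz continuity of $\phi$ and $\chi$, the density bound~\eqref{eqn:particle_bound}, and the uniform continuity of $\nabla G_t$ and $\Delta G_t$, exactly as in the last step of the proof of Lemma~\ref{lem:replace_empirical}; the scale factor $\bigl((2\lfloor\epsilon L\rfloor+1)/(2\epsilon L)\bigr)^d$ tends to $1$ with an error controlled by $C_{\rm Lip}\;\!C_{\rm tot}$. Assembling the three resulting pieces reconstructs exactly the expression $\mathbb E_{Q_L}\bigl[\int_\Lambda\bigl(\phi([\pi_t*\iota_\epsilon])\Delta G_t-\chi([\pi_t*\iota_\epsilon])\nabla V\cdot\nabla G_t-\tfrac 12\chi([\pi_t*\iota_\epsilon])|\nabla G_t|^2\bigr)\;\!\mathrm du\bigr]$ inside the time integral. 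The main obstacle is bookkeeping the three independent smallness parameters (the $O(L^{-1})$ Taylor remainder after multiplication by $L$, the averaging scale $\epsilon$, and the replacement error from local equilibration); since every intermediate bound is uniform and dominated by fixed constants ($C_G$, $C_{\rm Lip}$, $C_{\rm tot}$, $C_{\hat\chi}$, $C_{\hat\phi}$) multiplied by quantities that vanish as $L\to\infty$ and then $\epsilon\to 0$, the order of limits prescribed in~\eqref{eqn:2nd_remainder} closes the argument.
\end{proofof}
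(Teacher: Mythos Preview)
Your proposal is correct and follows essentially the same route as the paper's proof: decompose the $V$-dependence of $\hat\jmath^V$ and $\hat\chi^V$, use the gradient identity~\eqref{eqn:gradient_j0}, sum by parts, box-average, invoke the replacement relations~\eqref{eqn:local_equilibrium}--\eqref{eqn:local_equilibrium_2}, and pass to Riemann integrals exactly as in the proof of Lemma~\ref{lem:replace_empirical}. The only cosmetic difference is that you linearise in $V$ by a Taylor expansion at the outset, while the paper keeps the exact identity $\hat\jmath^V=\hat\jmath^{\,0}\cosh(\tfrac12\nabla V)-2\hat\chi^0\sinh(\tfrac12\nabla V)$ and lets Corollary~\ref{cor:asymp_upper_bound} absorb the hyperbolic factors later.

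One minor imprecision: your claim $\hat\chi^V_{i,i+e_k}(\mu)=\hat\chi^0_{i,i+e_k}(\mu)+O(L^{-2})$ is not literally true, since the exact expansion also produces a term $-\tfrac12\hat\jmath^{\,0}_{i,i+e_k}(\mu)\sinh(\tfrac12\nabla^{i,i+e_k}V)=\hat\jmath^{\,0}\,O(L^{-1})$. This does not damage the argument, because after multiplying by $[L\nabla G]^2=O(1)$ and averaging one bounds $L^{-d}\sum_{i,k}|\hat\jmath^{\,0}_{i,i+e_k}|\le 2d\,L^{-d}\sum_i\hat\phi_i\le 2d\,C_{\hat\phi}$ via~\eqref{eqn:phi_hat}; but you should cite~\eqref{eqn:phi_hat} alongside~\eqref{eqn:C_hat} when disposing of the Taylor remainder.
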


\begin{proof}
Note that
\begin{equation}
  \label{eqn:micro_current_j^V}
  \hat\jmath^V_{i,i+e_k}(\mu) 
  = \hat\jmath^{\;\! 0}_{i,i+e_k}(\mu)\cosh\bigl(\tfrac 12 \nabla^{i,i+e_k}V(\cdot/L)\bigr) 
  + \hat\chi_{i,i+e_k}^0(\mu)2\sinh\bigl(-\tfrac 12 \nabla^{i,i+e_k}V(\cdot/L)\bigr).
\end{equation}
Using~\eqref{eqn:gradient_j0} and~\eqref{eqn:micro_current_j^V}, a discrete integration by parts (i.e.~a shift of the index)
yields
\begin{equation*}
  \begin{split}
    &\sum_{i\in\mathbb T_L^d}\sum_{k=1}^d \bigl(L\hat\jmath_{i,i+e_k}^V(\mu)\bigr)( L\nabla^{i,i+e_k} G_t(\cdot / L))-\frac 12
    \hat\chi_{i,i+e_k}(\mu) \bigl[L\nabla^{i,i+e_k}G_t(\cdot / L)\bigr]^2\\ 
    =&\sum_{i\in\mathbb T_L^d}\sum_{k=1}^d\hat
    \phi_i(\mu)L^2\Bigl[\cosh\bigl(\tfrac 12 \nabla^{i,i+e_k}V(\cdot/L)\bigr)\nabla^{i,i+e_k} G_t(\cdot / L)\\
   &\qquad\qquad\qquad\qquad\qquad\qquad\qquad\qquad
-\cosh\bigl(\tfrac 12
      \nabla^{i-e_k,i}V(\cdot/L)\bigr)\nabla^{i-e_k,i} G_t(\cdot / L)\Bigr]\\ &\quad +
    \hat\chi_{i,i+e_k}^0(\mu)2L\sinh\bigl(-\tfrac 12 \nabla^{i,i+e_k}V(\cdot/L)\bigr)( L\nabla^{i,i+e_k} G_t(\cdot / L))\\
   &\qquad\qquad\qquad\qquad\qquad\qquad\qquad\qquad -\frac 12
    \hat\chi_{i,i+e_k}(\mu) \bigl[L\nabla^{i,i+e_k}G_t(\cdot / L)\bigr]^2.
  \end{split}
\end{equation*}
Combining this with the expression in~\eqref{eqn:2nd_remainder}, it is sufficient to show that
\begin{multline}
  \label{eqn:1st_part}
  \limsup_{\epsilon\to 0}\limsup_{L\to\infty} \biggl|\int_0^T \frac 1{L^d}\sum_{i\in\mathbb T_L^d}\sum_{k=1}^d\hat
  \phi_i(\mu^L_t)L^2\Bigl[\cosh\bigl(\tfrac 12 \nabla^{i,i+e_k}V(\cdot/L)\bigr)\nabla^{i,i+e_k} G_t(\cdot /
    L)\\ -\cosh\bigl(\tfrac 12 \nabla^{i-e_k,i}V(\cdot/L)\bigr)\nabla^{i-e_k,i} G_t(\cdot / L)\Bigr] -\mathbb
  E_{Q_L}\biggl[\int_\Lambda \phi\bigl([\pi_t*\iota_\epsilon](u)\bigr)\Delta G_t(u)\;\!\mathrm du\biggr]\;\!\mathrm dt \biggr|=0,
\end{multline}
as well as 
\begin{multline}
  \label{eqn:2nd_part}
  \limsup_{\epsilon\to 0}\limsup_{L\to\infty}\biggl|\int_0^T \frac 1{L^d} \sum_{i\in\mathbb
    T_L^d}\sum_{k=1}^d\hat\chi_{i,i+e_k}^0(\mu^L_t)2L\sinh\bigl(\tfrac 12 \nabla^{i,i+e_k}V(\cdot/L)\bigr)( L\nabla^{i,i+e_k}
  G_t(\cdot / L))\\ -\mathbb E_{Q_L}\biggl[\int_\Lambda \chi\bigl([\pi_t*\iota_\epsilon](u)\bigr)\nabla V(u)\cdot \nabla
    G_t(u)\;\!\mathrm du\biggr]\;\!\mathrm dt\biggr|\\
  +\frac 12\biggl|\int_0^T \frac 1{L^d} \sum_{i\in\mathbb T_L^d}\sum_{k=1}^d \hat\chi_{i,i+e_k}(\mu^L_t)
  \bigl[L\nabla^{i,i+e_k}G_t(\cdot / L)\bigr]^2\\
   -\mathbb E_{Q_L}\biggl[\int_\Lambda
    \chi\bigl([\pi_t*\iota_\epsilon](u)\bigr)|\nabla G_t(u)|^2\;\!\mathrm du\biggr]\;\!\mathrm dt\biggr|=0.
\end{multline}
Note that~\eqref{eqn:2nd_part} follows from the above considerations (Lemma~\ref{lem:replace_empirical} and
Corollary~\ref{cor:asymp_upper_bound}), such that we are only left to prove~\eqref{eqn:1st_part}, which can be proven with the
same calculations as above (with $\hat\chi$ replaced by $\hat \phi$ combined with~\eqref{eqn:C_hat} and
using~\eqref{eqn:local_equilibrium_2} instead of~\eqref{eqn:local_equilibrium}).
\end{proof}

\begin{proposition}
  \label{prop:another_ineq}
  Under the assumptions of Theorem~\ref{thm:lower-bound-thm} the inequality~\eqref{eqn:lower_bound_psi_star} holds.
\end{proposition}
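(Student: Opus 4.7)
The plan is to mirror the proof of Proposition~\ref{prop:proof_psi_bound}, replacing the role of the energy-type lower bound coming from the $\Psi_L$ side with the direct lower bound on $\Psi^\star_L$ from Proposition~\ref{prop:lower_bound_finite}. Fix $G\in C^{1,2}([0,T]\times\Lambda;\mathbb R)$ and integrate that inequality in time to obtain
\begin{equation*}
\frac{1}{L^d}\int_0^T\Psi^\star_L\bigl(\mu^L_t,F^V(\mu^L_t)\bigr)\,\mathrm dt\ge \frac{1}{L^d}\int_0^T\sum_{i,k}\Bigl[\bigl(L\hat\jmath^V_{i,i+e_k}(\mu^L_t)\bigr)L\nabla^{i,i+e_k}G_t(\cdot/L)-\tfrac12\hat\chi^V_{i,i+e_k}(\mu^L_t)\bigl[L\nabla^{i,i+e_k}G_t(\cdot/L)\bigr]^2\Bigr]\mathrm dt .
\end{equation*}

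Next, for $\epsilon>0$ define the functional on paths
\begin{equation*}
g^{\epsilon,G}\bigl((\tilde\pi_t)_{t\in[0,T]}\bigr):=\int_0^T\!\!\int_\Lambda \phi([\tilde\pi_t*\iota_\epsilon])\Delta G_t\,\mathrm du\,\mathrm dt-\int_0^T\!\!\int_\Lambda\chi([\tilde\pi_t*\iota_\epsilon])\nabla V\cdot\nabla G_t\,\mathrm du\,\mathrm dt-\tfrac12\int_0^T\!\!\int_\Lambda\chi([\tilde\pi_t*\iota_\epsilon])|\nabla G_t|^2\,\mathrm du\,\mathrm dt .
\end{equation*}
Using~\eqref{eqn:particle_bound} to restrict to paths of bounded total mass and the Lipschitz continuity of $\phi$ and $\chi$ (with the uniform bound $C_{\rm Lip}C_{\rm tot}/(2\epsilon)^d$ coming from the convolution), $g^{\epsilon,G}$ is continuous and bounded on the support of the $Q_L$. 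Lemma~\ref{lem:2nd_remainder} then says precisely that the right-hand side of the $\Psi^\star_L$ lower bound differs from $\mathbb E_{Q_L}[g^{\epsilon,G}]$ by a remainder $\tilde R^\epsilon_L$ with $\limsup_{\epsilon\to0}\limsup_{L\to\infty}\tilde R^\epsilon_L=0$, so
\begin{equation*}
\liminf_{L\to\infty}\frac1{L^d}\int_0^T\Psi^\star_L(\mu^L_t,F^V(\mu^L_t))\,\mathrm dt\ge \liminf_{\epsilon\to0}\liminf_{L\to\infty}\mathbb E_{Q_L}[g^{\epsilon,G}].
\end{equation*}

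The weak convergence $Q_L\to Q^\star=\delta_{(\pi_t)_{t\in[0,T]}}$ together with continuity and boundedness of $g^{\epsilon,G}$ gives $\mathbb E_{Q_L}[g^{\epsilon,G}]\to g^{\epsilon,G}((\pi_t)_{t\in[0,T]})$. Since $\pi_t(\mathrm du)=\rho_t(u)\,\mathrm du$ with $\rho_t\in \mathcal L^1(\Lambda;[0,\infty))$ and convolution $[\rho_t*\iota_\epsilon]\to\rho_t$ in $\mathcal L^1$ for a.e.~$t$, Lipschitz continuity of $\phi,\chi$ and dominated convergence (estimating as in~\eqref{eqn:chi_rho_conv}) yield $g^{\epsilon,G}((\pi_t)_{t\in[0,T]})\to g^{0,G}((\pi_t)_{t\in[0,T]})$ as $\epsilon\to0$, where
\begin{equation*}
g^{0,G}\bigl((\pi_t)_{t\in[0,T]}\bigr)=\int_0^T\!\!\int_\Lambda\phi(\rho_t)\Delta G_t\,\mathrm du\,\mathrm dt-\int_0^T\!\!\int_\Lambda\chi(\rho_t)\nabla V\cdot\nabla G_t\,\mathrm du\,\mathrm dt-\tfrac12\int_0^T\|\nabla G_t\|_{\chi(\rho_t)}^2\,\mathrm dt .
\end{equation*}
Taking the supremum over $G\in C^{1,2}([0,T]\times\Lambda;\mathbb R)$ and comparing with~\eqref{eqn:e_star_op_norm} completes the proof.

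The main technical obstacle is the $\epsilon\to 0$ step: one needs enough integrability on $\phi(\rho_t)\Delta G_t$ and $\chi(\rho_t)\nabla V\cdot\nabla G_t$ to justify dominated convergence uniformly on the support of $Q^\star$. This is controlled by the mass bound~\eqref{eqn:particle_bound}, the sub-linear bounds $\phi(a),\chi(a)\le C_{\rm Lip}a$ from Section~\ref{sec:new-assump}, and the smoothness of $G$ and $V$; the precise estimates follow the same pattern already used in Proposition~\ref{prop:proof_psi_bound}, so the argument is routine once Lemma~\ref{lem:2nd_remainder} is in hand.
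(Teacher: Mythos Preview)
Your proposal is correct and follows essentially the same route as the paper's proof: both start from Proposition~\ref{prop:lower_bound_finite}, introduce the same mollified functional (your $g^{\epsilon,G}$ is the paper's $f^{\epsilon,G}$), invoke Lemma~\ref{lem:2nd_remainder} for the remainder, pass to the limit by weak convergence and dominated convergence exactly as in Proposition~\ref{prop:proof_psi_bound}, and finish by taking the supremum over $G$.
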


\begin{proof}
We only sketch the proof, which is very similar to the one of Proposition~\ref{prop:proof_psi_bound}. For
\begin{multline*}
  f^{\epsilon,G}((\tilde\pi_t)_{t\in[0,T]}) := \int_0^T\int_\Lambda \phi\bigl([\tilde\pi_t*\iota_\epsilon](u)\bigr)\Delta
  G_t\;\!\mathrm du\;\!\mathrm dt \\
  - \int_0^T\int_\Lambda \chi\bigl([\tilde\pi_t*\iota_\epsilon](u)\bigr)\nabla V\cdot \nabla
  G_t\;\!\mathrm du\;\!\mathrm dt -\frac 12 \int_0^T\int_\Lambda \chi\bigl([\tilde\pi_t*\iota_\epsilon](u)\bigr)|\nabla
  G_t|^2\;\!\mathrm du\;\!\mathrm dt.
\end{multline*}
Proposition~\ref{prop:lower_bound_finite} implies that
\begin{multline*}
  \Psi^\star_L\bigl(\mu,F^V(\mu)\bigr) \\
  \ge \sum_{i\in\mathbb T_L^d}\sum_{k=1}^d \Bigl[\bigl(L\hat\jmath_{i,i+e_k}^V(\mu)\bigr)(
    L\nabla^{i,i+e_k} G(\cdot / L)) -\frac 12 \hat\chi^V_{i,i+e_k}(\mu) \bigl[L\nabla^{i,i+e_k}G(\cdot / L)\bigr]^2\Bigr].
\end{multline*}
As in the proof of Proposition~\ref{prop:proof_psi_bound}, one obtains $\frac 1{L^d}\int_0^T \Psi^\star_L(\mu^L_t,F^S(\mu^L_t))
\;\!\mathrm dt \ge \mathbb E_{Q_L}\bigl[f^{\epsilon,G}\bigr] -R_L^\epsilon$, where $R_L^\epsilon$ coincides
with~\eqref{eqn:2nd_remainder} in Lemma~\ref{lem:2nd_remainder}.
The latter implies that $\limsup_{\epsilon\to 0}\limsup_{L\to\infty}R_L^\epsilon=0$, such that again by weak convergence with
$\epsilon\to 0$
\begin{equation*}
  \liminf_{L\to\infty}
  \frac 1{L^d}\int_0^T \Psi^\star_L(\mu^L_t,F^S(\mu^L_t)) \;\!\mathrm dt
  \ge  f^{0,G}((\pi_t)_{t\in[0,T]}).
\end{equation*}
Taking the supremum with respect to $G\in C^{1,2}([0,T]\times \Lambda;\mathbb R)$ yields~\eqref{eqn:lower_bound_psi_star}.
\end{proof}

\subsection{Proof of Theorem~\ref{thm:Q_H_limit}}
\label{sec:theorem_2}

\begin{proofof}{Theorem~\ref{thm:Q_H_limit}}
We extend the proof in~\cite{Benois1995a}. We will skip some details, as they are similar to the above calculations. Let $\tilde
H\in C^{1,2}([0,T]\times \Lambda;\mathbb R)$. The log density of $P^{V\!+\!\tilde H}_L$ with respect to $P_L^V$ (where both
measures have the same initial condition $\mu^L_0$) has the explicit representation (cf.~\cite{Benois1995a} and the Appendix
in~\cite{Kaiser2018a})
\begin{multline*}
  \quad \log \frac{dP^{V\!+\!\tilde H}_L}{dP_L^V}((\eta_t)_{t\in[0,T]})
  =\frac {L^d}2 \biggl[\langle \Theta_L(\eta_T),\tilde H_T\rangle  -\langle  \Theta_L(\eta_0),\tilde H_0\rangle
    - \int_0^T\langle \Theta_L(\eta_t),\partial_t \tilde H_t\rangle\;\!\mathrm dt\biggr]\\
  -\int_0^T \sum_{i\in\mathbb T_L^d}\sum_{i': |i-i'|=1} \hat r^V_{\eta_t,\eta_t^{i,i'}}
  L^2\bigl(\mathrm e^{-\frac 12(\tilde H_t(i'/L)-\tilde H_t(i/L))}-1\bigr)\mathrm dt.\quad
\end{multline*}
Using $2(ac + bd) = (a-b)(c-d) + (a+b)(c+d)$ we can represent the expression in the last line as 
\begin{multline*}
  \qquad\qquad\int_0^T\sum_{i\in\mathbb T_L^d}\sum_{k=1}^d \Bigl[ L\bigl(\hat r^V_{\eta_t,\eta_t^{i,i+e_k}}-\hat
    r^V_{\eta_t,\eta_t^{i+e_k,i}}\bigr) \bigl(L \sinh\bigl(-\tfrac 12 \nabla^{i,i+e_k}\tilde H_t(\tfrac \cdot L)\bigr)\bigr) \\ +
    \bigl(\hat r^V_{\eta_t,\eta_t^{i,i+e_k}}+\hat r^V_{\eta_t,\eta_t^{i+e_k,i}}\bigr) L^2 \bigl(\cosh\bigl(-\tfrac 12
    \nabla^{i,i+e_k}\tilde H_t(\tfrac \cdot L)\bigr)-1\bigr) \Bigr]\mathrm dt.\qquad\qquad
\end{multline*}
Taking the expected value of this expression with respect to $P_L^V$, in combined with~\eqref{eqn:density_current}
and~\eqref{eqn:hat_chi}, yields
\begin{multline}
  \qquad\qquad\int_0^T\sum_{i\in\mathbb T_L^d}\sum_{k=1}^d \Bigl[\bigl(L\hat\jmath^V_{i,i+e_k}(\mu^L_t)\bigr)
    \bigl(L\sinh\bigl(-\tfrac 12 \nabla^{i,i+e_k}\tilde H_t(\tfrac \cdot L)\bigr)\bigr) \\ + 2\hat\chi^V_{i,i+e_k}(\mu^L_t) L^2
    \bigl(\cosh\bigl(\tfrac 12 \nabla^{i,i+e_k}\tilde H_t(\tfrac \cdot L)\bigr)-1\bigr) \Bigr]\mathrm dt,\qquad\qquad
\end{multline}
which is asymptotically equivalent to
\begin{equation*}
  \int_0^T\frac 12\sum_{i\in\mathbb T_L^d}\sum_{k=1}^d \Bigl[ -\bigl(L\hat\jmath^V_{i,i+e_k}(\mu^L_t)\bigr)
    \bigl(L\nabla^{i,i+e_k}\tilde H_t(\tfrac \cdot L)\bigr) + \frac 12\hat\chi^V_{i,i+e_k}(\mu^L_t) L^2
    \bigl|\nabla^{i,i+e_k}\tilde H_t(\tfrac \cdot L)\bigr|^2 \Bigr]\mathrm dt.
\end{equation*}
A result similar to Lemma~\ref{lem:2nd_remainder} yields 
\begin{equation*}
  \lim_{L\to\infty}\frac 1{L^d} \mathbb A_L^V\bigl(Q^{V\!+\!\tilde H}_L\bigr) = \lim_{\epsilon\to 0}\lim_{L\to\infty}\frac
  12\mathbb E_{Q_L}\bigl[f^{\epsilon,\tilde H}\bigr] = \frac 12 f^{0,\tilde H}((\pi_t)_{t\in[0,T]}),
\end{equation*}
where the functional $f^{\epsilon,\tilde H}$ is given by
\begin{multline*}
  f^{\epsilon,\tilde H}((\pi_t)_{t\in[0,T]}) := \langle \pi_T,\tilde H_T\rangle -\langle \pi_0,\tilde H_0\rangle - \int_0^T\langle
  \pi_t,\partial_t \tilde H_t\rangle\;\!\mathrm dt \\
  -\int_0^T\int_\Lambda \phi\bigl([\pi_t*\iota_\epsilon](u)\bigr)\Delta \tilde
  H_t\;\!\mathrm du\;\!\mathrm dt \\ + \int_0^T\int_\Lambda \chi\bigl([\pi_t*\iota_\epsilon](u)\bigr)\nabla V\cdot \nabla \tilde
  H_t\;\!\mathrm du\;\!\mathrm dt - \frac 12 \int_0^T\int_\Lambda \chi\bigl([\pi_t*\iota_\epsilon](u)\bigr)|\nabla \tilde
  H_t|^2\;\!\mathrm du\;\!\mathrm dt.
\end{multline*}
Finally, since the hydrodynamic path $(\pi_t)_{t\in[0,T]}$ solves $\dot\rho_t =\Delta\phi(\rho_t) + \nabla\cdot(\chi(\rho_t)\nabla
(V+\tilde H_t))$, we obtain
\begin{equation*}
  f^{0,\tilde H}((\pi_t)_{t\in[0,T]}) = \frac12 \int_0^T \| \tilde H_t \|_{1,\chi(\rho_t)}^2 = \frac12 \int_0^T \| \dot\rho_t -
  \Delta\phi(\rho_t) - \nabla\cdot(\chi(\rho_t)\nabla V) \|_{-1,\chi(\rho_t)}^2.
\end{equation*}
\end{proofof}

\paragraph{Acknowledgements}
We are grateful for stimulating discussions with Federico Cornalba, Max Fathi and Andr\'e Schlichting. Further, we would like to
thank Mark~A.~Peletier for valuable suggestions. MK is supported by a scholarship from the EPSRC Centre for Doctoral Training in
Statistical Applied Mathematics at Bath (SAMBa), under the project EP/L015684/1.  JZ gratefully acknowledges funding by the EPSRC
through project EP/K027743/1, the Leverhulme Trust (RPG-2013-261) and a Royal Society Wolfson Research Merit Award.



\def\cprime{$'$} \def\cprime{$'$} \def\cprime{$'$}
  \def\polhk#1{\setbox0=\hbox{#1}{\ooalign{\hidewidth
  \lower1.5ex\hbox{`}\hidewidth\crcr\unhbox0}}} \def\cprime{$'$}
  \def\cprime{$'$}

\end{document}